\newcommand{\genclone}{\mathcal{Y}_t}
\newcommand{\gentotal}{\mathcal{B}_t}
\newcommand{\pr}{\mathbb{P}}
\newcommand{\Fze}[3]{\mathop{F\/}\nolimits\!\left(\genfrac{}{}{0pt}{0}{#1}{#2};#3\right)}
\let\alp\alpha
\newcommand{\LTL}{\mathfrak{L}}
\DeclareMathOperator*{\argmax}{arg\,max}
\newtheorem{prop}{Proposition}
\theoremstyle{definition}
   {
      \theoremstyle{plain}
      \newtheorem{theorema}{Theorem}[section]
  }
   {
      \theoremstyle{plain}
      \newtheorem{propa}{Proposition}[section]
  }
     {
      \theoremstyle{plain}
      \newtheorem{lemmaa}{Lemma}[section]
  }
\begin{document}

\title{Universal asymptotic clone size distribution for general population growth
}


\author{Michael D. Nicholson         \and
       Tibor Antal 
}


\institute{M.D. Nicholson \at
              SUPA, School of Physics and Astronomy, University of Edinburgh, Edinburgh EH9 3FD, UK\\
             \email{Michael.Nicholson@ed.ac.uk}           
           \and
           T. Antal \at
              School of Mathematics, University of Edinburgh, Edinburgh EH9 3FD, UK \\
                \email{Tibor.Antal@ed.ac.uk}   
}

\date{Received: date / Accepted: date}

\maketitle

\begin{abstract}
Deterministically growing (wild-type) populations which seed stochastically developing mutant clones have found an expanding number of applications from microbial populations to cancer. The special case of exponential wild-type population growth, usually termed the Luria-Delbr\"uck or Lea-Coulson model, is often assumed but seldom realistic. In this article we generalise this model to different types of wild-type population growth, with mutants evolving as a birth-death branching process. Our focus is on the size distribution of clones - that is the number of progeny of a founder mutant - which can be mapped to the total number of mutants. Exact expressions are derived for exponential, power-law and logistic population growth. Additionally for a large class of population growth we prove that the long time limit of the clone size distribution has a general two-parameter form, whose tail decays as a power-law. Considering metastases in cancer as the mutant clones, upon analysing a data-set of their size distribution, we indeed find that a power-law tail is more likely than an exponential one.

\keywords{Luria-Delbr\"uck \and Branching process \and Clone size\and Cancer}
\end{abstract}

\section{Introduction}
\label{intro}
Cancerous tumours spawning metastases, bacterial colonies developing antibiotic resistance or pathogens kickstarting the immune system are examples in which events in a primary population initiate a distinct, secondary population. Regardless of the scenario under consideration, the number of individuals in the secondary population, and how they are clustered into colonies, or \textit{clones}, is of paramount importance. An approach which has offered insight has been to bundle the complexities of the initiation process into a mutation rate and assume that the primary, or \textit{wild-type}, population seeding the secondary, or \textit{mutant}, population is a random event.

This method was pioneered by microbiologist Salvador Luria and theoretical physicist Max Delbr\"{u}ck \cite{Luria:1943}. In their Nobel prize winning work, they considered an exponentially growing, virus susceptible, bacterial population. Upon reproduction, with small probability, a virus resistant mutant may arise and initiate a mutant clone. This model was contrasted with each wild-type individual developing resistance upon exposure to the virus with a constant probability per individual. By considering the variance in the total number of mutants in each case they demonstrated that bacterial evolution developed spontaneously as opposed to adaptively in response to the environment.

  In the original model of Luria and Delbr\"{u}ck, both wild-type and mutant populations grow deterministically, with mutant initiation events being the sole source of randomness. Lea and Coulson \cite{Lea:1949} generalised this process by introducing stochastic mutant growth in the form of the pure birth process and were able to derive the distribution of the number of mutants for neutral mutations. This was again extended by Bartlett and later Kendall \cite{Bartlett:1955,Kendall:1960}, who considered both populations developing according to a birth process. An accessible review discussing these formulations is given by Zheng \cite{Zheng:1999}.

Recent developments have focused on cancer modelling, where usually mutant cell death is included in the models. The main quantity of interest in these studies has been the total number of mutant cells. Explicit and approximate solutions appeared for deterministic, exponential wild-type growth, corresponding to a fixed size wild-type population \cite{Angerer:2001,Dewanji:2005,Iwasa:2006,Komarova:2007,Keller:2015}, and fully stochastic wild-type growth either at fixed time or fixed size \cite{Durrett:2010,Antal:2011,Kessler:2015}. An exciting recent application has been to model emergence of resistance to cancer treatments \cite{Kessler:2014,Bozic:2013,Bozic:2014}. The current study continues in this vein with our inspiration being primary tumours (wild-type) seeding metastases (mutant clones).
 
 Interestingly, in the large time small mutation rate limit, the clone size distribution at a fixed wild-type population size coincide for stochastic and deterministic exponential wild-type growth \cite{Kessler:2015,Keller:2015}. The intuition behind this observation is that a supercritical birth-death branching process converges to exponential growth in the large time limit, and, for a small mutation rate, mutant clones are initiated at large times. So asymptotically the two methods are equivalent, but the deterministic description of the wild-type population has twofold advantages: (i) the calculations are much simpler in this case \cite{Keller:2015}, and (ii) the method can be easily generalised to arbitrary growth functions. This is the programme that we develop in the present paper.

 The present work differs from previous approaches in two ways. Firstly, motivated by populations with environmental restrictions, we move away from the assumption of exponential wild-type growth, a setting which has received limited previous consideration as discussed in \cite{Foo:2014}. We shall first review and extend results for the exponential case, and then provide explicit solutions for  power-law and logistic growth. Next we present some general results which are valid for a large class of growth functions. This extends the classic results found in \cite{Kendall:1948,Athreya:2004,Karlin:1981,Tavare:1987} and recent work in  \cite{Tomasetti:2012,Houchmandzadeh:2015} who considered the wild-type population growth rate to be time-dependent but coupled with the mutant growth rate. Secondly, rather than the total number of mutants, our primary interest is on the distribution of mutant number in the clones initiated by mutation events. This complements \cite{Hanin:2006}, which allowed deterministic wild-type and mutant growth, and the treatment of clone sizes for constant wild-type populations found in \cite{Dewanji:2011}. Whilst we focus on clone sizes, we demonstrate that the distribution for the total number of mutants follows as a consequence, and hence results hold in that setting also.

The outline of this work is as follows. We define our model in Section \ref{theory}, utilising formalism introduced in \cite{Karlin:1981}, and demonstrate a mapping between the mutant clone size distribution and the distribution for the total number of mutants. 
The exact time-dependent size distribution is given for exponential, power-law and logistic wild-type growth function in Section \ref{clonesize}. Section \ref{longtime} pertains to universal features of the clone size distribution and contains our most significant results. There, for a large class of wild-type growth functions, we demonstrate a general two parameter distribution for clone sizes at large times. The distribution has power-law tail behaviour which corroborates previous work \cite{Iwasa:2006, Durrett:2010, Williams:2016}. Large time results are also given for the mean and variance of the clone sizes under general wild-type growth. Adopting the interpretation of the wild-type population as the primary tumour and mutant clones as metastases, we test our results regarding the tail of the distribution on empirical metastatic data in Section \ref{applications}. Section \ref{alt} considers alternative methods to ours and we give some concluding remarks in Section  \ref{discussion}.

\section{Model}\label{theory}

In our model a wild-type population gives rise to mutants during reproduction events. The arisen mutant also reproduces and so mutant clones stem from the original initiating mutant's progeny.  In many applications, the wild-type population is significantly larger than the mutant clones and so we treat the wild-type population's growth as deterministic, with size dictated by a time-dependent function $n_{\tau}$. The mutant clones are smaller in comparison and so their growth is stochastic. For logistic wild-type growth a sample realisation of the process is shown in Figure \ref{sample}. The exact formulation is now given.
\begin{figure}[h!]
\includegraphics[width=\linewidth]{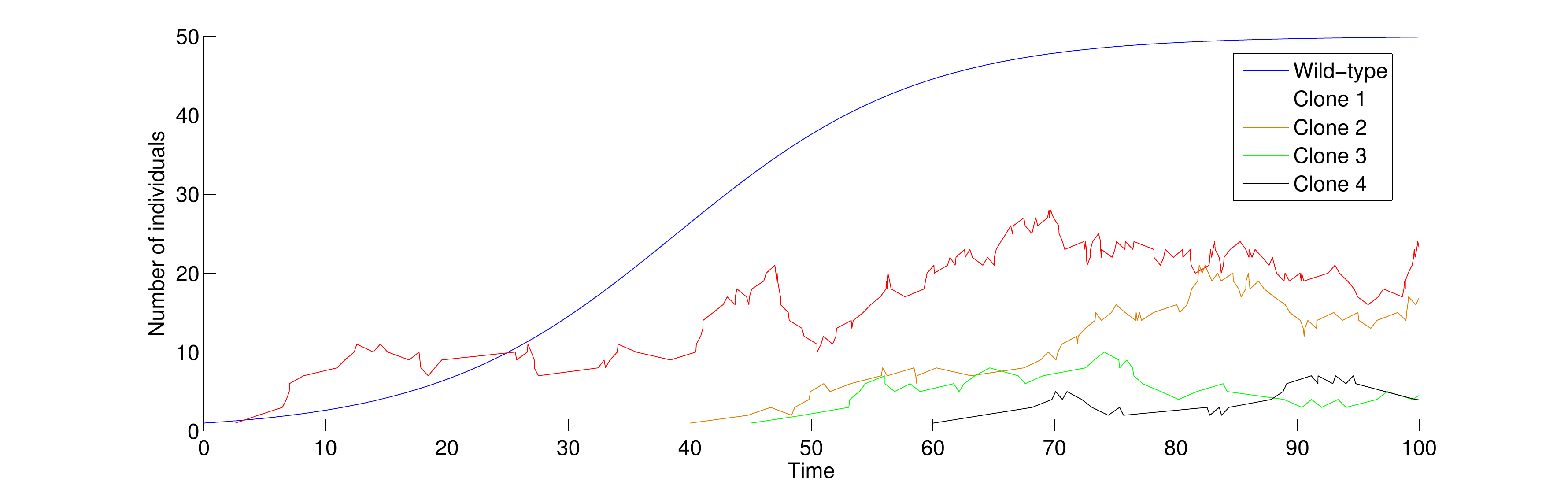}
\caption{ A sample realisation for deterministic logistic wild-type growth, with a carrying capacity of 50, and stochastic mutant growth.  Note that we typically assume the wild-type population is much larger than individual clones.}
\label{sample}
\end{figure}
  
\subsection{The birth-death process}
Stochastic growth of mutants will follow a birth-death branching process \cite{Athreya:2004}. Time is scaled such that each mutant has unit birth rate and death rate $\beta$. A brief note on converting our results to the case when the birth rate is arbitrary is given in Appendix \ref{appendtheory}. Let $Z_{t}$ be the size of a population at time $t$, with $Z_{0}=1$. The forward Kolmogorov equation for the distribution is given by
\def \alpha {}
\begin{equation}
\label{FK}
\partial_{t}  \mathbb{P}(Z_{t}=k)=\alpha (k-1) \mathbb{P}(Z_{t}=k-1)+\beta(k+1)  \mathbb{P}(Z_{t}=k+1)-(1+\beta) k \mathbb{P}(Z_{t}=k)
\end{equation}
with $k\geq 1$. Its solution in terms of the generating function, given on page 76 of \cite{Bartlett:1955},  is 
\begin{gather}
\label{BDgf}
\mathcal{Z}_{t}(s)=\mathbb{E}(s^{Z_{t}})=1-\frac{\lambda}{1-\xi e^{-\lambda t}}, \text{ where }\xi= \frac{\beta-s}{1-s},\,\lambda=1-\beta.
\end{gather}
Due to our timescale, $\beta$ is the probability of eventual extinction for a mutant clone for $\beta\leq1$, and $\lambda$ is the mutant fitness.  When $\beta=0$ and so the stochastic proliferation follows a pure birth or Yule process, the mutants will be denoted immortal. By expanding the generating function around $s=0$, we obtain for the probability of the population size being $k$ a geometric distribution with a modified zero term
\begin{equation}
  \label{BDpmf}
\mathbb{P}(Z_{t}=k)=
     \begin{cases}
     \beta/ \mathcal{S}_{t} &  k=0\\
      (1-\beta/\mathcal{S}_t) (\mathcal{S}_t-1)\,  \mathcal{S}_{t}^{-k}  &  k\geq 1,
 \end{cases}
\end{equation}
with the shorthand notation
\begin{equation}
\label{ratio}
\mathcal{S}_{t}=\frac{1-\beta e^{-\lambda t}}{1-e^{-\lambda t}}.
\end{equation}
For the particular case of a critical branching process, i.e. when birth and death rates are equal, the above probabilities are simplified by observing
\begin{equation}
\label{critsimp}
\lim_{\beta\rightarrow 1}\mathcal{S}_t=\frac{t+1}{t}.
\end{equation}


\subsection{Mutant clone size distribution}
Here we employ standard methods as outlined in, for instance, \cite{Karlin:1981,Dewanji:2005}. The system is observed at a fixed time $t$ and we let the number of wild-type individuals be denoted by $n_{\tau}$ for $0\leq \tau\leq t$. Since mutants are produced by wild-type individuals, the rate of mutant clone initiations will be proportional to the product of $n_{\tau}$ and the mutation rate $\mu$.  More precisely, the process of clone initiations is an inhomogeneous Poisson process \cite{Karlin:1998} with intensity $\mu n_{\tau}$. Let the Poisson random variable $K_{t}$ denote the number of clones that have been initiated by $t$, which has mean
\begin{gather}
\mathbb{E}(K_{t})=\int_{0}^{t}\mu n_{\tau}\,d\tau.
 \end{gather}
 Now, assuming $K_{t}>0$, we consider a mutant clone sampled uniformly from the $K_t$ initiated clones and denote its size to be the random variable $Y_{t}$. The clone was initiated at the random time $T$ and as we must have $T\leq t$, the density of $T$ is given by 
\begin{gather}
\label{timedens}
f_{T}(\tau)=\frac{\mu n_{\tau}}{\mathbb{E}(K_{t})}=\frac{n_{\tau}}{a_{t}}.
\end{gather}
Where
\begin{equation}\label{rateone}
a_{t}=\frac{\mathbb{E}(K_{t})}{\mu}=\int_{0}^{t}n_{\tau}\,d\tau
\end{equation}
is the expected number of clones seeded when the mutation rate is unity. The size of the clone is dictated not only by the initiation time but also by its manner of growth, here the birth-death process. Hence, by conditioning on the arrival time, we have
\begin{gather}
\label{Singleclonepmf}
\mathbb{P}(Y_{t}=k)=\frac{1}{a_{t}}\int_{0}^{t} n_{\tau} \mathbb{P}(Z_{t-\tau}=k)\,d\tau.
\end{gather} 
An immediate consequence is that the generating function of the clone size is given by
\begin{gather}
\label{SinglecloneGF}
\genclone(s)=\mathbb{E}(s^{Y_{t}})
=\frac{1}{a_{t}}\int_{0}^{t} n_{\tau} \mathcal{Z}_{t-\tau}(s)\,d\tau,
\end{gather}
where $\mathcal{Z}_{t}(s)$ is the generating function of the birth-death process \eqref{BDgf}.

We make the following remarks on the above. (i) The mutation rate $\mu$ does not appear in the density for initiation times in \eqref{timedens}, hence mutant clone sizes are independent of the mutation rate and thus all following results regarding clone sizes will be also. 
(ii) The integral in \eqref{Singleclonepmf} is a convolution and as convolutions commute we may swap the arguments of the integrand functions ($n_{\tau}\mathcal{Z}_{t-\tau}\leftrightarrow n_{t-\tau}\mathcal{Z}_{\tau}$).
(iii) If we start with $n_{0}$ wild-type individuals, so the wild-type follows $m_{\tau}=n_{0}n_{\tau}$, then both the numerator and denominator in \eqref{timedens} will have a factor of $n_{0}$, which cancel. So henceforth, apart from when $n_{0}=0$ (used occasionally for analytic convenience), we set $n_{0}=1$ without loss of generality.
(iv) By similar logic, a positive random amplitude for the wild-type growth function, i.e. $m_{\tau}=Xn_{\tau}$ for a general positive random variable $X$, would also cancel and so our results on clone sizes hold in that case also.

\section{Mapping distributions: clone size to total mutant number}
\label{Connection}

This section is related to the classic Luria-Delbr\"{u}ck problem. 
Let $B_{t}$ be the total number of mutants existing at time $t$. Then $B_{t}$ is the sum of $K_{t}$ generic clones
\begin{gather}
B_{t}=\sum_{i=1}^{K_{t}}(Y_{t})_{i}\,,
\end{gather}
where all $(Y_{t})_{i}$ are \textit{iid} random variables specifying the clone sizes. 
As such, $B_{t}$ is a compound Poisson random variable, and hence its generating function is
\begin{gather}
\label{GFlink}
\gentotal(s)=\mathbb{E}(s^{B_{t}})=e^{\mathbb{E}(K_{t})[\genclone(s)-1]},
\end{gather}
which can be derived by conditioning on $K_{t}$. It follows that
  \begin{equation}\label{meanlink}
  \mathbb{E}(B_{t})=\mathbb{E}(K_{t})\mathbb{E}(Y_{t})\text{ and } \mathrm{Var}(B_{t})=\mathbb{E}(K_{t})\mathbb{E}(Y_{t}^2).
  \end{equation}
The link between the mass functions of the mutant clone size, $Y_{t}$, and the total number of mutants, $B_{t}$, is given by
\begin{equation}\label{Link3}
  \mathbb{P}(B_{t}=n)= 
     \begin{cases}
     e^{\mathbb{E}(K_{t})(\mathbb{P}(Y_{t}=0)-1)}&  n=0\\
      \mathbb{E}(K_{t})\sum_{k=0}^{n-1}\frac{n-k}{n}\mathbb{P}(B_{t}=k)\mathbb{P}(Y_{t}=n-k) &   n\geq 1.
     \end{cases}
  \end{equation}
  This relationship may be found as Lemma 2 in \cite{Zheng:1999} and a short proof is provided for convenience in Appendix \ref{appendtheory}, Lemma \ref{recursion}. Hence while we may initially work in the setting of size distribution of a single clone, by the above discussion, results are transferable to the total number of mutants case.

 Often long-time results are sought, which significantly reduces the complexity of the distributions. For any fixed positive mutation rate, in the long-time limit, an infinite number of clones will have been initiated, and thus the probability distributions of $B_{t}$ will not be tight \cite{Durrett:1996}. A common solution to this problem is the \textit{Large Population-Small Mutation} limit \cite{Keller:2015}, where $\theta=\mu n_{t}$ is kept constant. Then for exponential wild-type growth (or exponential-type, see Section \ref{longtime}) the expected number of initiated clones, $\mathbb{E}(K_t)$, tends to $\theta$ for large times. Hence we see that
 \begin{gather}
\lim_{\substack{t\rightarrow \infty\\ \theta \,\mathrm{  constant}}}\gentotal(s)=\exp\big[\theta(\lim_{t\rightarrow\infty}\genclone(s)-1)\big],
\end{gather}
demonstrating that the limit of the clone size distribution is of primary concern. Furthermore, if the expected number of initiated clones is small, we have the following proposition, whose proof can be found in Appendix \ref{appendtheory}.
\begin{prop}\label{smallarrive}
For a small expected number of initiated clones, conditioned on survival, the size of a single clone and the total number of mutants are approximately equal in distribution. That is,
\begin{gather}
\mathbb{P}(B_{t}=k|B_{t}>0) = \mathbb{P}(Y_{t}=k|Y_{t}>0)+O(\mathbb{E}(K_{t}) ),\quad  \text{as}\quad \mathbb{E}(K_{t})\rightarrow 0.
\end{gather}
\end{prop}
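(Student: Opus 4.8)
The plan is to exploit the compound-Poisson structure $B_t=\sum_{i=1}^{K_t}(Y_t)_i$ with $K_t$ Poisson of mean $m:=\mathbb{E}(K_t)$, and to condition on the number of initiated clones. The governing intuition is that for small $m$ the event $\{B_t>0\}$ is realised almost entirely by initiating exactly one clone that survives; the probability of two or more clones being initiated is $O(m^2)$ and is therefore negligible once we renormalise.

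First I would expand the unconditioned law of $B_t$ for $k\geq1$ by summing over $K_t$,
\begin{equation}
\mathbb{P}(B_t=k)=\sum_{j\geq0}\mathbb{P}(K_t=j)\,\mathbb{P}(B_t=k\mid K_t=j).
\end{equation}
The $j=0$ term is zero, the $j=1$ term equals $m e^{-m}\,\mathbb{P}(Y_t=k)$ since $B_t\equald Y_t$ on $\{K_t=1\}$, and the tail $j\geq2$ is controlled by $\sum_{j\geq2}\mathbb{P}(K_t=j)=O(m^2)$ uniformly in $k$, each conditional probability being at most $1$. With $e^{-m}=1+O(m)$ this yields $\mathbb{P}(B_t=k)=m\,\mathbb{P}(Y_t=k)+O(m^2)$, uniformly in $k$.

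Next I would treat the normaliser: evaluating the compound-Poisson generating function \eqref{GFlink} at $s=0$ gives $\mathbb{P}(B_t=0)=e^{-m(1-\mathbb{P}(Y_t=0))}$, hence $\mathbb{P}(B_t>0)=m\,(1-\mathbb{P}(Y_t=0))+O(m^2)$. Dividing the two expansions, the common factor $m$ cancels and the leading ratio is exactly $\mathbb{P}(Y_t=k)/(1-\mathbb{P}(Y_t=0))=\mathbb{P}(Y_t=k\mid Y_t>0)$, which is the asserted limit; bounding the difference of the two ratios then supplies the $O(m)$ error term.

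The computation is elementary, so I do not anticipate a genuine obstacle; the only delicate point is the passage to the conditional law, where one divides two quantities that both vanish like $m$. I would confirm that this still produces an $O(m)$, rather than an $O(1)$, discrepancy by clearing denominators in the difference of the two ratios: the resulting numerator is a combination of the two $O(m^2)$ remainders, while the denominator is of order $m\,(1-\mathbb{P}(Y_t=0))^2$, so the error is $O(m)$ — in fact uniformly in $k$ — provided the single-clone survival probability $1-\mathbb{P}(Y_t=0)$ stays bounded away from $0$, which is the regime of interest.
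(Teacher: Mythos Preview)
Your argument is correct and is essentially the same first-order expansion in $m=\mathbb{E}(K_t)$ that the paper uses; the paper carries it out more compactly at the level of the conditional generating function $\mathbb{E}(s^{B_t}\mid B_t>0)=(\mathcal{B}_t(s)-\mathcal{B}_t(0))/(1-\mathcal{B}_t(0))$ by Taylor-expanding the exponentials in \eqref{GFlink}, whereas you condition on $K_t$ and work pointwise in $k$. Your caveat about $1-\mathbb{P}(Y_t=0)$ being bounded away from zero is implicit in the paper's division step as well.
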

One immediate consequence of this result is that for immortal mutants ($\beta=0$) and $\mathbb{E}(K_{t}) \ll 1$ we have
\begin{gather}
\gentotal(s)\approx (1-e^{-\mathbb{E}(K_{t}) })\genclone(s)+e^{-\mathbb{E}(K_{t}) }\implies
\mathbb{P}(B_{t}=k)\approx \mathbb{E}(K_{t})  \mathbb{P}(Y_{t}=k) \quad \text{for}\, k\geq 1.
  \end{gather}
This agrees with intuition as for small enough $\mathbb{E}(K_{t})$, we expect only 0 or 1 clones to be initiated and hence the total number of mutants will be dictated by the clone size distribution. With exponential wild-type growth this approximation was used in \cite{Iwasa:2006} to investigate drug resistance in cancer.

 \section{Finite time clone size distributions}\label{clonesize}
 
    Three particular cases of wild-type growth function, $n_{\tau}$, will be considered in detail, namely: exponential, power-law and logistic. Exponential and logistic growth are widely used in biological modelling \cite{Murray:2002}. For the power-law cases, under the assumption that the radius of a spherical wild-type population is proportional to time, quadratic and cubic power-law growth represents mutation rates proportional to the surface area and volume respectively. In each case we give the generating function and probability mass function. We stress again that the mutation rate and an arbitrary positive prefactor for $n_{\tau}$ cancel in \eqref{Singleclonepmf} and so are irrelevant for our results.

\subsection{Exponential wild-type growth}
Let the wild-type population grow exponentially, that is $n_{\tau}=e^{\delta \tau}$ with $\delta>0$ and so from \eqref{rateone}, $a_{t}=\frac{e^{\delta t}-1}{\delta}$. The distribution for the total number of mutants, $B_{t}$, was treated exhaustively in \cite{Keller:2015} and we follow their notation by letting $\gamma=\delta/\lambda$. Using \eqref{GFlink} and the results found in section 3 of \cite{Keller:2015}, the generating function is
\begin{gather}
\label{expgf}
\genclone(s)=1+\frac{\lambda}{1-n_{t}^{-1}}\bigg[ n_t^{-1} \Fze{1,\gamma}{1+\gamma}{\xi n_{t}^{-1/\gamma}}-\Fze{1,\gamma}{1+\gamma}{\xi}\bigg].
\end{gather}
Similarly the mass function is
  \begin{gather}
  \label{exppmf0}
\mathbb{P}(Y_{t}=0)=1+\frac{\lambda}{1-n_{t}^{-1}}\bigg[n_t^{-1}  \Fze{1,\gamma}{1+\gamma}{\beta n_{t}^{-1/\gamma}}-\Fze{1,\gamma}{1+\gamma}{\beta}\bigg]
\end{gather}
and for $k\geq 1$
\begin{equation}
\label{exppmf}
\begin{split}
 \mathbb{P}(Y_{t}=k) &= \frac{\delta}{\alpha( n_{t}-1)}\sum_{j=1}^{k}  {k-1\choose j-1}\frac{1}{j+\gamma}\bigg(\frac{\lambda}{\beta-n_{t}^{1/\gamma}}\bigg)^{j}\Fze{1,\gamma}{1+\gamma+j}{\beta n_{t}^{-1/\gamma}}
 \\
 &+\frac{\delta}{\alpha(1-n_{t}^{-1})}\frac{(k-1)!}{(\gamma+1)_{k}}\Fze{k,\gamma}{1+\gamma+k}{\beta}.
 \end{split}
\end{equation} 
Here $\Fze{a,b}{c}{z}$ is Gauss's hypergeometric function and $(a)_{k}$ is the Pochhammer symbol defined in Appendix \ref{specialfunctions}. The above expressions are given in terms of $n_{t}$ to allow easy comparison to the formulas in \cite{Keller:2015}. For these exact time-dependent formulas, their form is somewhat cumbersome, however simpler long-time limit expressions are given in Section \ref{longtime}. A reduction in complexity is also obtained for the special case of neutral mutants ($\delta=\lambda$) where, by using \eqref{hypgeom4}, the generating function in \eqref{expgf} simplifies to
$$
 \genclone(s) = 1+ \frac{\lambda}{\xi (1-e^{-\delta t})} \log \frac{1-\xi}{1-\xi e^{-\delta t}}.
$$
If additionally the neutral mutants are immortal, the above expression further simplifies to
$$
 \genclone(s) = 1+ \frac{1-s}{s\phi} \log (1-s\phi)\text{ where }\phi = 1-e^{-\delta t}.
$$
The probabilities are then concisely given by
$$
\label{neutimmtime}
 \pr(Y_t=k) = \frac{\phi^{k-1}}{k} - \frac{\phi^k}{k+1} \,\mbox{ or }\, \pr(Y_t>k)=\frac{\phi^{k}}{k+1} 
$$
which  corresponds to the classical Lea-Coulson result \cite{Lea:1949}
$$
 \gentotal(s) = (1-s\phi)^{\theta(1-s)/s}
$$
with $\theta=\mu e^{\delta t}$. 

\subsection{Power-law wild-type growth}
Now we assume that the wild-type population grows according to a general power-law $n_{\tau}=\tau^{\rho}$, for some non-negative integer $\rho$, and therefore $a_{t}=\frac{t^{\rho+1}}{\rho +1}$. With power-law wild-type growth and stochastic mutant proliferation, the mutant clone size generating function is given by
 \begin{gather}
 \label{powergf}
 \genclone(s)=\beta+\lambda(\rho+1) !\bigg[\frac{(-1)^{\rho} \mathrm{Li}_{\rho+1}( \xi e^{-\lambda t})}{(t\lambda)^{\rho+1}}+\sum_{i=0}^{\rho}\frac{(-1)^{i+1} \mathrm{Li}_{i+1}(\xi)}{(\rho-i)! (t\lambda)^{i+1}}
\bigg].
 \end{gather}
 Here  $\mathrm{Li}_{i}(s)$ is the polylogarithm of order $i$ defined in Appendix \ref{specialfunctions}.  Details of the derivation are given in Appendix \ref{clonalappend}. For immortal mutants, the mass function may be explicitly written as
\begin{align}
\mathbb{P}(Y_{t}=m)=\frac{(\rho+1)}{m t \alpha}+\frac{(\rho+1)!}{mt\alpha}\bigg[\frac{(-1)^{\rho}}{(t\alpha)^{\rho}}\sum_{k=1}^{m}{m\choose k}\frac{(-e^{-\alpha t})^{k}}{k^{\rho}}
\notag
\\
+\sum_{i=1}^{\rho}\frac{(-1)^{i+1}}{(t\alpha)^{i}(\rho-i)!}\sum_{k=1}^{m}{m\choose k}\frac{(-1)^{k}}{k^{i}}
\bigg].
\label{powerpmf}
\end{align}
If mutants may die, the exact mass function is most easily obtained via Cauchy's integral formula which may be efficiently computed using the fast Fourier transform. For a brief discussion on implementation see \cite{Antal:2010} and references therein.

Note for $\rho\geq1$, $n_{0}=0$ which, while useful for analytic tractability, is unrealistic. This can be overcome by letting $n_{\tau}=n_{0}+\tau^{\rho}$. Then, by splitting the integral in the generating function \eqref{SinglecloneGF} and using the above analysis, one can obtain the mass function for any $n_{0}$. However for practical purposes the contribution of $n_{0}$ is negligible.

\subsection{Constant size wild-type}

For the specific power-law growth when $\rho=0$, i.e. $n_{\tau}=1$ (recall that this is equal to the general case when $n_{\tau}=n_{0}$), we recover some classical results for constant immigration \cite{Kendall:1948}. We note that the distribution of the ordered clone size, depending on initiation time, was discussed in \cite{Jeon:2008}. From \eqref{powergf} with $\rho=0$, the generating function is

%
\begin{gather}
\label{constgf}
\genclone(s)=1-\frac{1}{t\alpha}\log\left( \frac{1-s\mathcal{S}_t^{-1}}{1-\mathcal{S}_t^{-1}} \right).
\end{gather}
%
with $\mathcal{S}_t$ as given in \eqref{ratio}. By expanding this generating function in terms of $s$ we obtain the probabilities 
 \begin{equation}
 \label{constpmf}
  \mathbb{P}(Y_{t}=k)= 
     \begin{cases}
     1+t^{-1}\log(1-\mathcal{S}_t^{-1}) &  k=0\\
     \frac{1}{t\alpha k}\,\mathcal{S}_t^{-k} &  k\geq 1.
     \end{cases}
   \end{equation}
Then using \eqref{GFlink} with the clone sizes \eqref{constgf} we obtain the generating function of the total number of mutants
$$
 \gentotal(s) = \left[ \frac{1-\mathcal{S}_{t}^{-1}}{1-s\mathcal{S}_{t}^{-1}} \right]^{\mu},
$$
and from the binomial theorem we also get the probabilities
$$
 \pr(B_t=m) =  \binom{m+\mu \alpha -1}{m}\left( 1-\mathcal{S}_{t}^{-1}\right) ^{\mu \alpha} \mathcal{S}_t^{-m}.
$$
We recognise this as a negative binomial distribution under the interpretation that $B_{t}$ is the number of failures until $\mu$ successes, with failure probability $\mathcal{S}_{t}^{-1}$. 
This result for $B_{t}$ was first derived by Kendall \cite{Kendall:1948} who was attempting to explain the appearance of the logarithmic distribution for species number when randomly sampling heterogeneous populations, conjectured by R.A. Fisher. From the distribution of $B_{t}$, by an argument which may be considered a special case of Proposition \ref{smallarrive}, he derived that for constant rate initiation, the clone size conditioned on non-extinction is logarithmically distributed again with parameter $\mathcal{S}_{t}^{-1}$, which can be obtained via \eqref{constpmf}.

Constant immigration may imply a constant size source, hence mutants with equal birth and death rates (i.e. evolving as a critical branching process) are particularly interesting.
This case yields analogous formulas to those above but $\mathcal{S}_t$ is replaced with the expression given in \eqref{critsimp}.


\subsection{Logistic wild-type growth}
Starting from a population of one and having a carrying capacity $K$, logistic growth is given by $n_{\tau}=\frac{K e^{\lambda \tau}}{K+e^{\lambda \tau}-1}$. 
We assume neutral mutations, i.e. $\lambda$ is also the wild-type growth rate. Integrating the growth function gives $a_{t}=\frac{K}{\lambda}\log(\frac{e^{\lambda t}}{n_{t}})$.

We aim to calculate the generating function using \eqref{SinglecloneGF}. Recalling the definition of $\mathcal{Z}_{t-\tau}(s)$ we observe that
\begin{gather}
\int\frac{1}{1-\xi e^{-\lambda(t-\tau)}}n_{\tau}\,d\tau=\frac{K}{\lambda[(K-1)\xi e^{-\lambda t}+1]}\log\bigg(\frac{1-e^{\lambda \tau}-K}{1-Ae^{\lambda \tau}}\bigg)+C,
\end{gather}
where $C$ is an integration constant. Therefore the generating function is
\begin{gather}
\label{loggf}
\genclone(s)=1+\frac{\lambda e^{\lambda t}}{[e^{\lambda t}+(K-1)\xi]\log(\frac{e^{\lambda t}}{n_{t}})}\log\bigg(\frac{n_{t}(1-\xi)}{e^{\lambda t}(1-\xi e^{-\lambda t})}\bigg).
\end{gather}

\begin{figure}[t!]
\centering
\begin{subfigure}{0.46\textwidth}
\includegraphics[width=\textwidth,height=6cm]{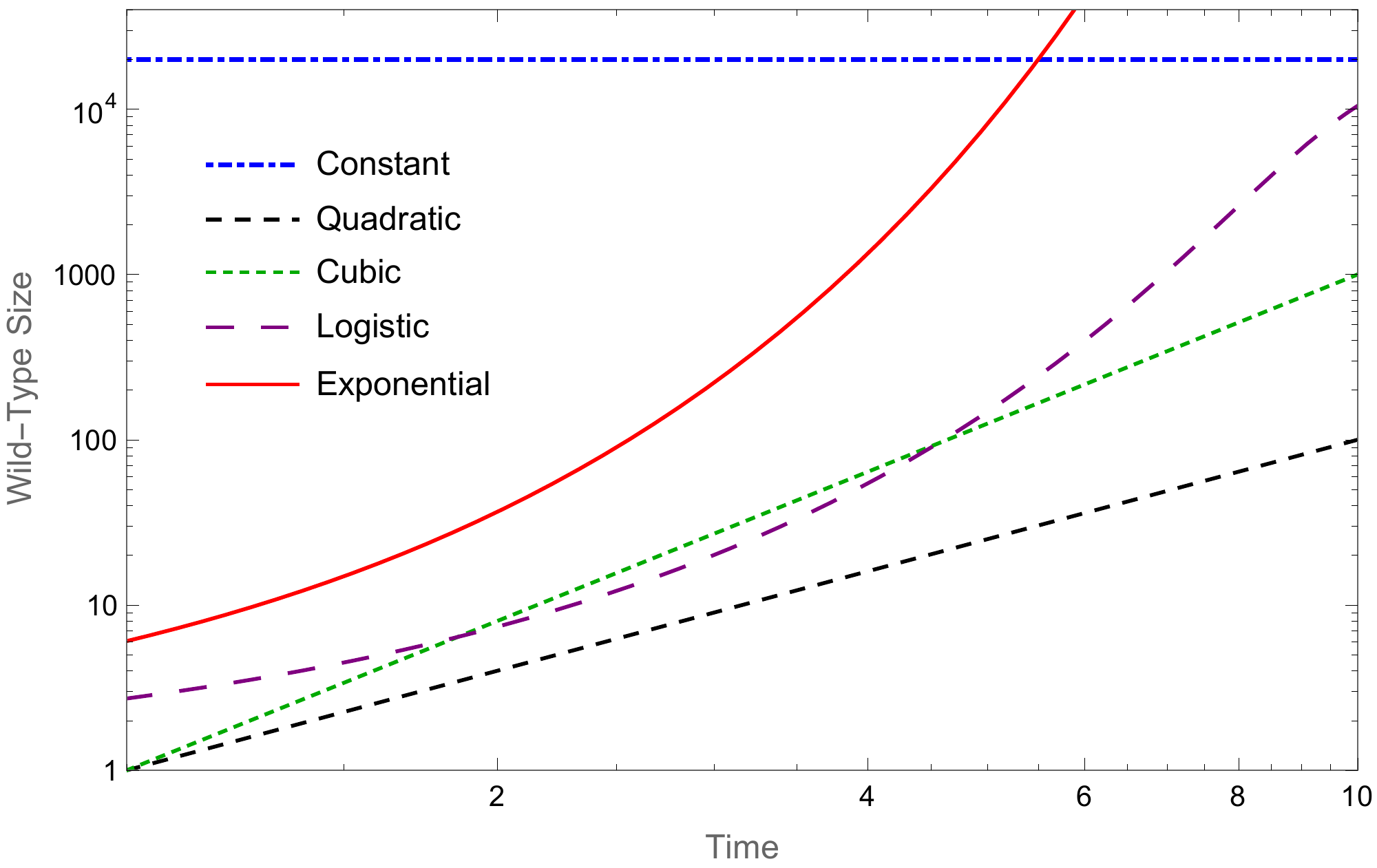}
\caption{}
\end{subfigure}
\begin{subfigure}{0.46\textwidth}
\includegraphics[width=\textwidth,height=6cm]{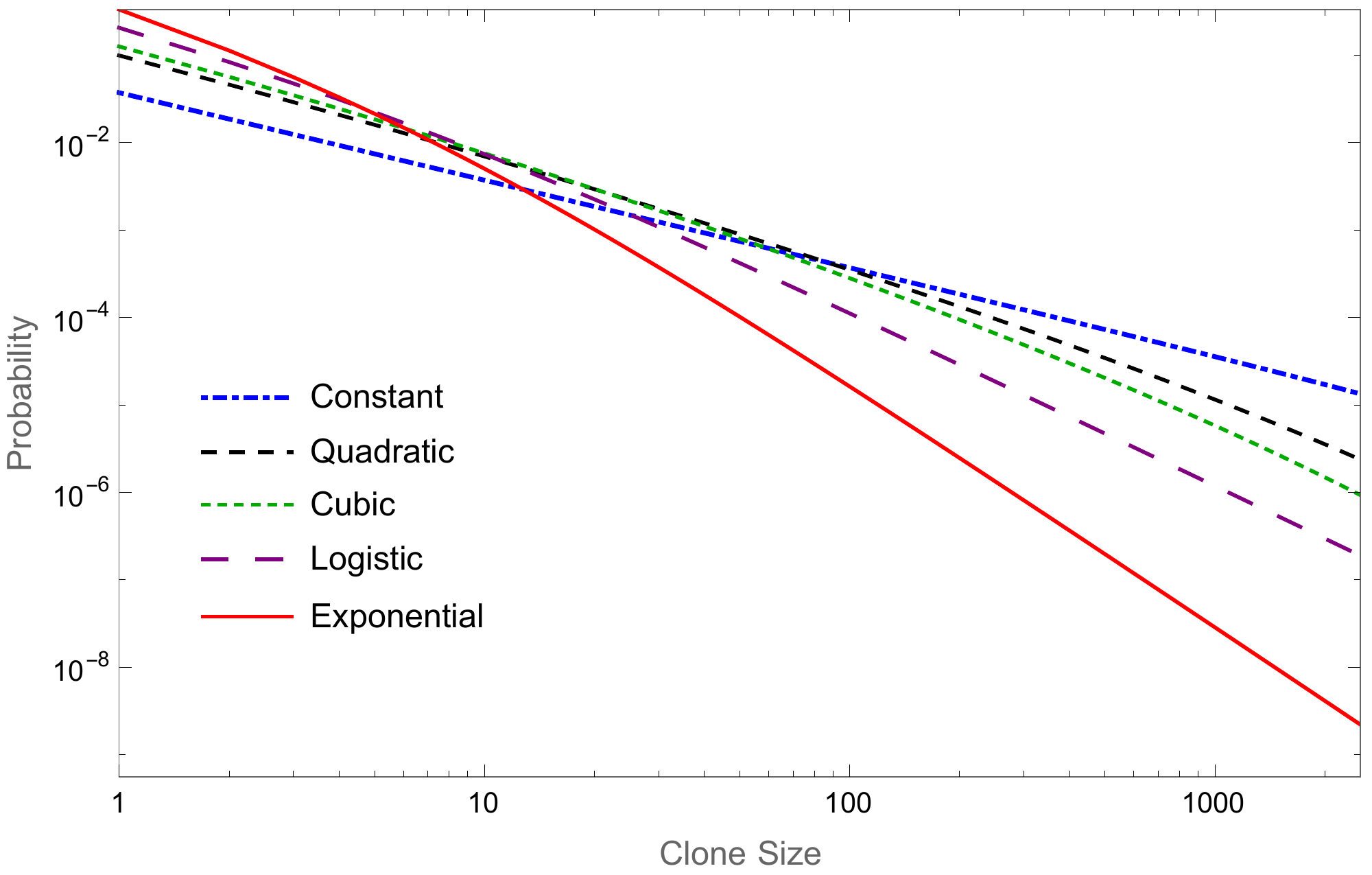}
\caption{}
\end{subfigure}
\caption{(a): Growth curves for different wild-type growth functions $n_{\tau}$. 
(b): The associated probability mass functions, derived in Section \ref{clonesize}, for the clone size when wild-type follows growth curves shown in (a). Parameters: $\delta=1.8, \,\lambda=1,\,t=9, K=20000$.}
\label{Dif}
\end{figure}

Agreeing with intuition for $K=1$ we recover the generating function of the constant case, and $\lim_{K\rightarrow \infty}\genclone(s)$ gives the generating function for exponential wild-type growth. Therefore the logistic case interpolates between the constant and exponential growth cases. The mass function can be obtained by expanding the non-logarithmic and logarithmic function in $\genclone(s)$ and using the Cauchy product formula. However, this method provides little insight and numerically it is simpler to use the fast Fourier transform.

\subsection{Monotone distribution and finite time cut-off}\label{secmonotone}
We conclude this section by demonstrating general features that exist in the clone size distribution at finite times. Again proofs are provided in Appendix \ref{clonalappend}. Firstly we see that, regardless of the particular wild-type growth function, the monotone decreasing nature of the mass function for the birth-death process is preserved in the clone size distribution.
\begin{prop}
   \label{pmfshape}
As long as $n_{\tau}$ is positive for some subinterval of $[0,t]$, then for $k\geq 1$ we have $
  \mathbb{P}(Y_{t}=k+1)< \mathbb{P}(Y_{t}=k)$ for any finite  $t>0$.
   \end{prop}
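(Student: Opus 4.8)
The plan is to work directly from the convolution representation \eqref{Singleclonepmf}, which exhibits $Y_t$ as a mixture of birth–death distributions: writing $s=t-\tau$, we have $\mathbb{P}(Y_{t}=k)=\frac{1}{a_{t}}\int_{0}^{t} n_{\tau}\,\mathbb{P}(Z_{t-\tau}=k)\,d\tau$, with nonnegative mixing weights $n_\tau/a_t$. Since these weights do not depend on $k$, it suffices to establish the pointwise inequality $\mathbb{P}(Z_{s}=k+1)\le \mathbb{P}(Z_{s}=k)$ in the integrand for every $s=t-\tau\in[0,t]$, and then to argue that the inequality is \emph{strict} on a set of positive Lebesgue measure so that it survives integration. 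In other words, I reduce the statement to the monotonicity of the underlying birth–death mass function together with a strictness bookkeeping argument.

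The core computation uses the explicit geometric form \eqref{BDpmf}. For $k\ge 1$ and $s>0$ the mass function is $\mathbb{P}(Z_{s}=k)=(1-\beta/\mathcal{S}_{s})(\mathcal{S}_{s}-1)\,\mathcal{S}_{s}^{-k}$, so the successive ratio collapses to $\mathbb{P}(Z_{s}=k+1)/\mathbb{P}(Z_{s}=k)=\mathcal{S}_{s}^{-1}$, independent of $k$. Hence strict monotonicity of the integrand for all $k\ge1$ is equivalent to the single inequality $\mathcal{S}_{s}>1$ for $s>0$. Using \eqref{ratio} I would compute $\mathcal{S}_{s}-1=\dfrac{(1-\beta)e^{-\lambda s}}{1-e^{-\lambda s}}=\dfrac{\lambda e^{-\lambda s}}{1-e^{-\lambda s}}$, and then check its sign in every regime: for $\lambda>0$ both numerator and denominator are positive; for $\lambda<0$ both are negative; and the critical case $\lambda=0$ (i.e. $\beta=1$) is handled by the limit \eqref{critsimp}, which gives $\mathcal{S}_{s}=(s+1)/s>1$ directly. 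In all cases $\mathcal{S}_{s}>1$ for $s>0$, giving the desired strict pointwise monotonicity whenever $s>0$.

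To upgrade this to a strict inequality for $\mathbb{P}(Y_t=k)$, I would note that the integrand $n_\tau\big[\mathbb{P}(Z_{t-\tau}=k)-\mathbb{P}(Z_{t-\tau}=k+1)\big]$ is nonnegative throughout $[0,t]$, and is strictly positive wherever three conditions hold simultaneously: $n_\tau>0$, $t-\tau>0$, and $\mathbb{P}(Z_{t-\tau}=k)>0$. The last is immediate from \eqref{BDpmf}, since a parallel computation $\mathcal{S}_{s}-\beta=\lambda/(1-e^{-\lambda s})>0$ shows $\mathbb{P}(Z_{s}=k)>0$ for $k\ge1$, $s>0$. By hypothesis $n_\tau>0$ on some subinterval of $[0,t]$; excising only the single endpoint $\tau=t$ leaves a subset of positive measure on which $t-\tau>0$, so the integrand is strictly positive there. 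Therefore the integral is strictly positive and $\mathbb{P}(Y_t=k+1)<\mathbb{P}(Y_t=k)$, as claimed; the positivity of $a_t$ on the same subinterval guarantees the distribution is well defined.

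I expect the main obstacle to be purely the case analysis for $\mathcal{S}_{s}>1$ rather than any conceptual difficulty: one must verify the sign of $\lambda e^{-\lambda s}/(1-e^{-\lambda s})$ uniformly across supercritical, subcritical, and critical mutants, with the critical branch requiring the separate limiting form \eqref{critsimp}. The only other point demanding care is the strictness argument, where the endpoint $\tau=t$ contributes $Z_0$ (a point mass at $1$) and must be excluded when invoking the positive-measure set; framing the conclusion as a mixture of strictly decreasing sequences with nonnegative weights makes this step transparent.
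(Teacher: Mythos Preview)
Your proposal is correct and follows essentially the same approach as the paper: both use the convolution representation \eqref{Singleclonepmf} to reduce the question to the pointwise monotonicity of the birth--death mass function \eqref{BDpmf}, and then conclude by integrating. The paper's proof is much terser (it simply asserts that ``from \eqref{BDpmf} it is clear that the integrand is negative for finite, positive $\tau$''), whereas you fill in the details by explicitly verifying $\mathcal{S}_s>1$ across all sign regimes of $\lambda$ and by carefully handling the strictness bookkeeping at the endpoint $\tau=t$; these elaborations are correct and useful but do not constitute a different route.
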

Whether $ \mathbb{P}(Y_{t}=0)\geq \mathbb{P}(Y_{t}=1)$ depends on $n_{\tau}$ and $t$, but the inequality is typically true for long times. 
Note that in contrast, the mass function of the total number of mutants is not monotone in general \cite{Keller:2015}.

Now restricting ourselves to the $\lambda>0$ case, as an example, consider the mass function when the size of the wild-type population is constant, which is given by \eqref{constpmf}, and specifically for $k\geq 1$. For any moderate $t$, $\mathcal{S}_{t}^{-1}$ is typically close to unity but for large $k$, $\mathcal{S}_{t}^{-k}$ will become the dominant term in the mass function, dictating exponential decay. We term this a cut-off in the distribution which occurs at approximately $k= O (e^{\lambda t})$. It is an artifact of the mass function for the birth-death process \eqref{BDpmf}. Hence we will have (at least) two behaviour regimes for the mass function for finite times. Here we show that this cut-off exists generally for finite times. 

\begin{theorem}\label{cutoff}
Let $\lambda>0$ and $n_{\tau}$ be continuous and positive for $\tau\in[0,t]$. Then
\begin{gather}
\mathbb{P}(Y_{t}=k)=\mathcal{S}_{t}^{-k}\Theta_t(k),
\end{gather}
where $\Theta_t(k)$ is an unspecified subexponential factor, i.e. $\limsup_{k\rightarrow \infty}\sqrt[k]{\Theta_t(k)}=1$, and  $\mathcal{S}_t$ is given by \eqref{ratio}.
\end{theorem}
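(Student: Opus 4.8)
The plan is to start from the conditional representation \eqref{Singleclonepmf}, insert the explicit birth--death mass function \eqref{BDpmf}, and handle the resulting integral by a Laplace-type concentration argument. After the substitution $u=t-\tau$, for $k\ge 1$ this gives
$$
\mathbb{P}(Y_t = k) = \frac{1}{a_t}\int_0^t n_{t-u}\, g(u)\, \mathcal{S}_u^{-k}\,du, \qquad g(u) := (1-\beta/\mathcal{S}_u)(\mathcal{S}_u-1),
$$
so the natural choice is $\Theta_t(k)=\mathcal{S}_t^k\,\mathbb{P}(Y_t=k)=a_t^{-1}\int_0^t n_{t-u}\,g(u)\,(\mathcal{S}_t/\mathcal{S}_u)^k\,du$, and the whole statement reduces to $\limsup_{k\to\infty}\Theta_t(k)^{1/k}=1$. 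The structural fact I would record first is that $\mathcal{S}_u$ is strictly decreasing on $(0,\infty)$: differentiating \eqref{ratio} with $x=e^{-\lambda u}$ gives $d\mathcal{S}/dx=\lambda(1-x)^{-2}>0$ while $dx/du<0$, and $\mathcal{S}_u>1$ for finite $u$ with $\mathcal{S}_u\downarrow 1$ as $u\to\infty$. Hence $\mathcal{S}_u\ge\mathcal{S}_t$ for every $u\in(0,t]$, and $g(u)>0$ there because $\lambda>0$ forces $\beta<1$ and thus $\mathcal{S}_u>1\ge\beta$.

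For the upper bound $\limsup_k\Theta_t(k)^{1/k}\le 1$ I would simply use $\mathcal{S}_t/\mathcal{S}_u\in(0,1]$ on the integration range, so $(\mathcal{S}_t/\mathcal{S}_u)^k\le\mathcal{S}_t/\mathcal{S}_u$ for $k\ge1$ and, since the integrand is nonnegative, $\Theta_t(k)\le\Theta_t(1)=\mathcal{S}_t\,\mathbb{P}(Y_t=1)\le\mathcal{S}_t$, a constant. Therefore $\Theta_t(k)^{1/k}\le\mathcal{S}_t^{1/k}\to1$; note no integrability estimate is required here.

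The substance is the matching lower bound. The weight $(\mathcal{S}_t/\mathcal{S}_u)^k$ concentrates near $u=t$, where the ratio equals $1$, so I would localise to $[t-\delta,t]$ for a small $\delta\in(0,t)$. On such an interval $\psi(u):=\log\mathcal{S}_u$ is $C^1$ with $|\psi'|\le c<\infty$, hence $\mathcal{S}_u/\mathcal{S}_t\le e^{c(t-u)}$, i.e. $(\mathcal{S}_t/\mathcal{S}_u)^k\ge e^{-ck(t-u)}$; meanwhile continuity and positivity of $n_\tau$ and of $g$ on the relevant compacta give $n_{t-u}\ge m>0$ and $g(u)\ge g_0>0$. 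Integrating,
$$
\Theta_t(k)\ge \frac{mg_0}{a_t}\int_{t-\delta}^{t} e^{-ck(t-u)}\,du = \frac{mg_0}{a_t}\,\frac{1-e^{-ck\delta}}{ck}\ge \frac{C'}{k}
$$
for a constant $C'>0$ and all $k\ge1$, so $\Theta_t(k)^{1/k}\ge(C'/k)^{1/k}\to1$. Combining the two bounds yields $\lim_k\Theta_t(k)^{1/k}=1$, which is in fact stronger than the asserted $\limsup$.

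I expect the obstacles to be only bookkeeping: verifying the monotonicity and sign of $g$, checking that the singularity of $g$ at $u=0$ (where $\mathcal{S}_u\to\infty$) is harmless because it is killed by $\mathcal{S}_u^{-k}$, and choosing $\delta$ so that $[t-\delta,t]$ stays bounded away from $0$ — none of which is serious. As an equivalent and perhaps more conceptual route, one can argue via the generating function \eqref{SinglecloneGF}: each $\mathcal{Z}_u(s)$ has its pole at $s=\mathcal{S}_u$, the nearest such pole to the origin over $u\in(0,t]$ lies at $s=\mathcal{S}_t$, and since $\mathcal{Y}_t$ has nonnegative coefficients, Pringsheim's theorem identifies its radius of convergence as exactly $\mathcal{S}_t$, which is the same statement.
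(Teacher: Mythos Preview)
Your argument is correct and in fact yields the slightly sharper conclusion $\lim_{k\to\infty}\Theta_t(k)^{1/k}=1$ together with the explicit two-sided bound $C'/k\le\Theta_t(k)\le\Theta_t(1)$. It is, however, a genuinely different route from the paper's. The paper never manipulates the integral for the mass function directly; instead it works with the generating function $I_t(s)=\int_0^t n_\tau\,\mathcal{Z}_{t-\tau}(s)\,d\tau$, shows that $I_t$ is analytic on $|s|<\mathcal{S}_t$ (because each $\mathcal{Z}_\tau$ is, and $n_\tau$ is continuous), and then uses positivity of $n_\tau$ to bound $|I_t(s)|$ below by a multiple of the constant-wild-type integral, whose singularity at $s=\mathcal{S}_t$ is explicit. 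The conclusion then drops out of Flajolet's Exponential Growth Formula (Theorem~\ref{Flajoletthm0}). This is precisely the ``equivalent route'' you sketch in your final paragraph via Pringsheim's theorem. What your direct Laplace-type estimate buys is that it is entirely real-analytic, needs no borrowed machinery, and produces quantitative bounds on $\Theta_t(k)$; what the paper's approach buys is brevity and a clean structural explanation---the cut-off location is identified once and for all as the nearest singularity of $\mathcal{Y}_t(s)$, inherited from the integrand at $\tau=0$.
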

Note that $\mathcal{S}_{t}>1$ for finite $t$, and $\mathcal{S}_{t}\rightarrow 1$ exponentially fast for large $t$. Hence the cut-off will disappear for long times and the subexponential factor, discussed in detail in Section \ref{longtime}, will completely determine the tail of the distribution. Also notice that the power-law cases, $n_{\tau}=\tau^{\rho}$, for $\rho\geq 1$ are not covered as, to make the analysis tractable, they artificially start at $n_{0}=0$. However the generating function in this case \eqref{powergf} also has its closest to origin singularity at $\mathcal{S}_{t}$ so the cut-off exists there also.



\section{Universal large time features}\label{longtime}
Here we give results regarding the large time behaviour of our model which is relevant in many applications and also provides general insight. In many applications the cut-off location ($k=O(e^{\lambda t}$)) is so large that the distribution at or above this point is of little relevance and hence for this purpose the limiting approximations now discussed are of particular interest. Using the notation of Theorem \ref{cutoff}, this section investigates the large time form of $\Theta_{t}(k)$. The proofs for the results presented in this section can be found in Appendix \ref{unproofs}.
  We highlight the power-law decaying, ``fat'' tail found in each case. Henceforth we again assume $\lambda>0$, i.e. a supercritical birth-death process.
  
 \subsection{General wild type growth functions} 
To give general results we introduce the following assumption which will be assumed to hold for the remainder of this section.
\begin{assumption}\label{ass1}
For wild-type growth function $n_{\tau}$ we assume
\begin{enumerate}[(i)]
\item $n_{\tau}=0$ for $\tau<0$, continuous for $\tau> 0$ and right continuous at $\tau=0$.
\item $n_{\tau}$ is positive and monotone increasing for $\tau> 0$.
\item For $x\geq 0$ the limit
$\lim_{t\rightarrow\infty}n_{t-x}/n_{t}$
exists, is positive and finite.
 \end{enumerate}
 \end{assumption}
We note that the cases discussed in Section \ref{clonesize} are all covered by Assumption \ref{ass1}. The reason for the seemingly arbitrary limit assumed in (iii) becomes clear with the following result which is an application of the theory of regular variation found in \cite{Bingham:1987}.  
 \begin{lemma}\label{limform}

For $x\geq 0$
\begin{equation}
\lim_{t\rightarrow\infty}\frac{n_{t-x}}{n_{t}}=e^{-x\delta^*},\text{ where }\lim_{t\rightarrow\infty}\frac{\log n_{t}}{t}=\delta^*\geq 0.
\end{equation}
\end{lemma}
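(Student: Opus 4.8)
The plan is to separate the claim into two parts: first identify the limiting ratio as a genuine exponential, and then tie its rate to the logarithmic growth rate of $n_t$. Write $g(x):=\lim_{t\to\infty} n_{t-x}/n_t$, which exists, is positive and finite by Assumption \ref{ass1}(iii). The structural engine of the proof is a multiplicative functional equation: for $x,y\geq 0$ I would split the ratio as
\[
g(x+y)=\lim_{t\to\infty}\frac{n_{t-x-y}}{n_t}=\lim_{t\to\infty}\left(\frac{n_{(t-y)-x}}{n_{t-y}}\cdot\frac{n_{t-y}}{n_t}\right)=g(x)g(y),
\]
where the last step uses that $t-y\to\infty$, so the first factor tends to $g(x)$, the second to $g(y)$, and the limit of a product of convergent sequences is the product of the limits. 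Only the pointwise existence granted by (iii) is needed here.

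Next, monotonicity pins down the form of $g$. Since $n_\tau$ is increasing (Assumption \ref{ass1}(ii)), for $0\le x_1\le x_2$ we have $n_{t-x_2}\le n_{t-x_1}\le n_t$, so $g$ is non-increasing with $g(0)=1$ and $0<g\le 1$ on $[0,\infty)$. Setting $\psi:=\log g$ converts the multiplicative identity into the additive Cauchy equation $\psi(x+y)=\psi(x)+\psi(y)$ with $\psi$ monotone; the standard fact that a monotone additive function is linear then gives $\psi(x)=-\delta^*x$, i.e. $g(x)=e^{-\delta^*x}$, where $\delta^*:=-\log g(1)$ and $\delta^*\ge 0$ because $g\le 1$. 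This establishes the first displayed equality.

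It remains to show $\lim_{t\to\infty}(\log n_t)/t=\delta^*$ with the \emph{same} constant. Putting $h:=\log n_t$, the case $x=1$ above reads $h(t)-h(t-1)\to\delta^*$. Along the integers a Ces\`aro average yields $(h(n)-h(n_0))/n\to\delta^*$ for any fixed base point $n_0$ with $n_{n_0}>0$, hence $h(n)/n\to\delta^*$; monotonicity of $h$ then fills the gaps between consecutive integers, since $h(\lfloor t\rfloor)\le h(t)\le h(\lfloor t\rfloor)+\bigl(h(\lceil t\rceil)-h(\lfloor t\rfloor)\bigr)$ and the bracketed difference stays bounded, giving $h(t)=h(\lfloor t\rfloor)+O(1)$ and thus $h(t)/t\to\delta^*$. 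Equivalently, the substitution $u=e^{t}$, $U(u):=n_{\log u}$ recasts $g(x)=e^{-\delta^*x}$ as $U(\lambda u)/U(u)\to\lambda^{\delta^*}$, so $U$ is regularly varying at infinity with index $\delta^*$, and Karamata's characterisation theorem in \cite{Bingham:1987} delivers $(\log U(u))/\log u\to\delta^*$, which is the assertion.

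I expect the only genuine subtlety to sit in this last step, namely transferring the \emph{local} ratio limits into the \emph{global} growth rate. Care is needed when $\delta^*=0$ (so $n_t$ may grow sublinearly on the log scale yet still satisfy the hypotheses) and when $n_0=0$ forces $h(0)=-\infty$; both are handled by taking the Ces\`aro base point strictly beyond the origin and by the squeeze above, which relies only on monotonicity. The regular-variation route is cleaner to state but conceals the same work inside the uniform convergence and representation theorems of \cite{Bingham:1987}.
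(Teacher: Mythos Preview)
Your argument is correct. The paper takes precisely the regular-variation route you sketch in your final paragraph: it substitutes $y=e^{t}$, $c=e^{-x}$, sets $g(z)=n_{\log z}$, and then simply invokes the Characterisation Theorem and Proposition~1.3.6 from \cite{Bingham:1987} to obtain both conclusions in two lines. Your direct approach via the multiplicative Cauchy equation for $g$, resolved by monotonicity, followed by the Ces\`aro argument for $(\log n_t)/t$, is essentially an unpacking of those quoted results in this specific setting, with monotonicity standing in for measurability as the regularity hypothesis that rules out pathological solutions. The paper's version is terse but imports machinery; yours is self-contained and makes the mechanism visible. Both are sound, and you have correctly identified that the two routes coincide in substance.
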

Often the long-time behaviour of the clone size distribution may be separated into $\delta^*>0$ and $\delta^*=0$, and so we give the following definition \cite{Flajolet:2009}.
\begin{definition}
Consider a real valued function $f(x)$ such that
\begin{equation}
\lim_{x\rightarrow\infty}\frac{\log f(x)}{x}=\delta^*
\end{equation}
holds for some $\delta^*\in\mathbb{R}$. Then $f(x)$ is of \textit{exponential-type} for $\delta^*\neq 0$ and is \textit{subexponential} for $\delta^*=0$. 
\end{definition}
Simple examples of subexponential functions are $e^{\sqrt{t}},\, \log(t)$, $t^{\rho}$, while $e^{\delta t}$, $e^{\delta t}t^{\rho}$ are of exponential-type, with $\delta,\rho\in \mathbb{R}$.

\subsection{Mean and variance}
We now address the asymptotic properties of the clone size distribution by first discussing its mean and variance. 

\begin{theorem}\label{meanlim}
With $s_{i}(t)$ subexponential functions such that $s_{1}(t),\,s_{3}(t)\rightarrow \infty$
 \begin{equation}
 \mathbb{E}(Y_{t})\sim \begin{cases}
  \frac{\delta^*}{\delta^*-\lambda} & \lambda<\delta^*\\
         s_{1}(t) & \delta^*=\lambda \\
      e^{(\lambda-\delta^*) t}s_{2}(t)&  \delta ^*<\lambda
  \end{cases}
\hspace{1.5cm}
  \mathrm{Var}(Y_{t})\sim
  \begin{cases}
  \frac{\delta^*}{\lambda}\left(\frac{2}{\delta^*-2\lambda}-\frac{2-\lambda}{\delta^*-\lambda}\right)-\left(\frac{\delta^*}{\delta^*-\lambda}\right)^2  & 2\lambda<\delta^*\\
         s_{3}(t) & \delta^*=2\lambda \\
    e^{(2\lambda-\delta^*)t}s_4(t) & \delta^*<2\lambda
  \end{cases}
  \end{equation}
  as $t\rightarrow \infty$.
\end{theorem}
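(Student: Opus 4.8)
The plan is to reduce both asymptotics to a single family of integrals and then read off the three regimes from the regular-variation input of Lemma \ref{limform}. First I would extract the first two moments of the birth--death process from its generating function \eqref{BDgf}, obtaining $\mathbb{E}(Z_s)=e^{\lambda s}$ and $\mathbb{E}(Z_s^2)=\tfrac{2}{\lambda}e^{2\lambda s}-\tfrac{1+\beta}{\lambda}e^{\lambda s}$ (recall $1+\beta=2-\lambda$). A clone initiated at time $\tau$ has, conditional on $\tau$, the law of $Z_{t-\tau}$, so differentiating \eqref{SinglecloneGF} (equivalently, conditioning on the initiation time with density \eqref{timedens}) gives $\mathbb{E}(Y_t)=a_t^{-1}\int_0^t n_\tau\,\mathbb{E}(Z_{t-\tau})\,d\tau$ and $\mathbb{E}(Y_t^2)=a_t^{-1}\int_0^t n_\tau\,\mathbb{E}(Z_{t-\tau}^2)\,d\tau$. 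Setting $I_c(t):=\int_0^t n_\tau e^{c(t-\tau)}\,d\tau$, this yields $\mathbb{E}(Y_t)=I_\lambda(t)/a_t$ and $\mathbb{E}(Y_t^2)=a_t^{-1}\big[\tfrac{2}{\lambda}I_{2\lambda}(t)-\tfrac{1+\beta}{\lambda}I_\lambda(t)\big]$, with $a_t=I_0(t)$ from \eqref{rateone}. The whole theorem then follows once I control the ratio $I_c(t)/a_t$ as $t\to\infty$ for $c\in\{\lambda,2\lambda\}$, and subtract $(\mathbb{E}(Y_t))^2$ for the variance.

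The central step is the asymptotics of $I_c(t)/a_t$. Substituting $u=t-\tau$ and dividing by $n_t$ gives $I_c(t)/a_t=\int_0^t (n_{t-u}/n_t)e^{cu}\,du\,\big/\int_0^t (n_{t-u}/n_t)\,du$, and Lemma \ref{limform} supplies the pointwise limit $n_{t-u}/n_t\to e^{-\delta^* u}$. The monotonicity in Assumption \ref{ass1}(ii) gives the uniform bound $n_{t-u}/n_t\le 1$, so in the regime $c<\delta^*$ the integrand is dominated by the integrable function $e^{(c-\delta^*)u}$; dominated convergence then yields $I_c(t)/n_t\to(\delta^*-c)^{-1}$ and $a_t/n_t\to(\delta^*)^{-1}$, hence $I_c(t)/a_t\to\delta^*/(\delta^*-c)$. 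Taking $c=\lambda$ produces the first line of the mean, while $c=2\lambda$ and $c=\lambda$ together, after subtracting $(\delta^*/(\delta^*-\lambda))^2$, produce the first line of the variance (a short algebra check using $1+\beta=2-\lambda$ recovers the stated constant). In the regime $c>\delta^*$ I instead write $I_c(t)=e^{ct}\int_0^t n_\tau e^{-c\tau}\,d\tau$; here $\int_0^\infty n_\tau e^{-c\tau}\,d\tau$ converges, so $I_c(t)\sim(\mathrm{const})\,e^{ct}$, whereas Karamata's theorem \cite{Bingham:1987} gives $a_t=e^{\delta^* t}s(t)$ with $s$ subexponential, so that $I_c(t)/a_t\sim e^{(c-\delta^*)t}\times(\text{subexponential})$. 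This delivers the $e^{(\lambda-\delta^*)t}s_2$ and $e^{(2\lambda-\delta^*)t}s_4$ rows, the latter term dominating both $I_\lambda/a_t$ and $(\mathbb{E}(Y_t))^2$ because $\delta^*\ge 0$ and $a_t\to\infty$.

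The delicate case, and the one I expect to be the main obstacle, is the borderline $c=\delta^*$, i.e. $\delta^*=\lambda$ for the mean and $\delta^*=2\lambda$ for the variance (both forcing $\delta^*>0$). Here the limiting integrand $e^{-\delta^* u}e^{cu}\equiv 1$ is not integrable, so dominated convergence is unavailable and I must work directly with the slowly varying factor $\ell(t):=n_t e^{-\delta^* t}$ coming from the regular-variation representation $n_t=e^{\delta^* t}\ell(t)$ behind Lemma \ref{limform}. Writing $I_{\delta^*}(t)=e^{\delta^* t}\int_0^t\ell(\tau)\,d\tau$ and $a_t\sim n_t/\delta^*=e^{\delta^* t}\ell(t)/\delta^*$, the ratio reduces to $\delta^*\int_0^t\ell(\tau)\,d\tau/\ell(t)$, which is subexponential and tends to infinity in either sub-case: by Karamata's theorem $\int_0^t\ell/\ell(t)\to\infty$ when $\int_0^\infty\ell=\infty$, while convergence of $\int_0^\infty\ell$ forces $\ell(t)\to 0$ and hence the same conclusion. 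This is exactly the claim $s_1,s_3\to\infty$. Verifying that both sub-behaviours are genuinely subexponential, and confirming the uniform integrability/Karamata estimates used throughout, is the part demanding the most care.
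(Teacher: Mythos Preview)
Your overall strategy---reduce to the integrals $I_c(t)=\int_0^t n_\tau e^{c(t-\tau)}\,d\tau$ with $c\in\{0,\lambda,2\lambda\}$ and analyse the three regimes via the regular-variation structure behind Lemma~\ref{limform}---is exactly the paper's approach, and your moment identities for $Z_s$ and the reduction of $\mathbb{E}(Y_t)$, $\mathbb{E}(Y_t^2)$ to these integrals are correct.

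There is, however, a genuine gap in your treatment of the regime $c<\delta^*$. After the substitution $u=t-\tau$ you claim the integrand $(n_{t-u}/n_t)e^{cu}$ is dominated by the integrable function $e^{(c-\delta^*)u}$, but monotonicity of $n$ only gives $n_{t-u}/n_t\le 1$, hence the bound $e^{cu}$, which is \emph{not} integrable on $[0,\infty)$ for $c>0$ (and even for $c=0$ the constant $1$ is not integrable, so your argument for $a_t/n_t\to(\delta^*)^{-1}$ fails for the same reason). The function $e^{(c-\delta^*)u}$ is the pointwise limit, not an a priori envelope, so dominated convergence is not available from Assumption~\ref{ass1} alone. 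The paper circumvents this by passing to the multiplicative variable $y=e^t$, writing $\int_0^t n_\tau e^{-\eta\tau}\,d\tau=\int_1^y g(s)s^{-1-\eta}\,ds$ with $g(s)=n_{\log s}\in RV_{\delta^*}$, and applying Karamata's theorem directly to obtain $\int_1^y g(s)s^{-1-\eta}\,ds\sim y^{-\eta}g(y)/(\delta^*-\eta)$; dividing by the $\eta=0$ case gives your ratio $\delta^*/(\delta^*-c)$ without any dominated-convergence step. Alternatively, you could salvage your DCT argument using Potter's bounds, which yield $n_{t-u}/n_t\le A e^{-(\delta^*-\varepsilon)u}$ for $t-u$ large, but this requires splitting the integral and is more work than the direct Karamata route.

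The borderline case $c=\delta^*$ is handled correctly in spirit, though once you have Karamata you also get $a_t\sim n_t/\delta^*$ cleanly, and the divergence $\int_0^t\ell/\ell(t)\to\infty$ follows in one stroke from Proposition~1.5.9a of \cite{Bingham:1987} (after the same change of variables), without your case split on $\int_0^\infty\ell$.
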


The leading asymptotic behaviour which has different regimes dependent on $\delta^*/\lambda$ is illustrated in Figure \ref{illust}. As an example, for the exponential case $n_{\tau}=e^{\delta \tau}$, by using \eqref{meanlink}
   and the results found in \cite{Keller:2015}, then $\delta^*=\delta,\,s_{1}(t)=\lambda t,\,s_{2}(t)=\frac{\delta}{\lambda-\delta},\,s_{3}(t)=4 t $ and $s_{4}(t)=\frac{2\delta}{\lambda(2\lambda-\delta)}$. 
\begin{figure}[htbp]
\begin{center}
\setlength{\unitlength}{2.6cm}
\begin{picture}(6,0.6)(-3,-0.3)
\put(-1.7,0){\vector(1,0){3.3}}
\put(-1.7,-.05){\line(0,1){0.1}}
\put(-1.73,-0.2) {$0$}
\put(-0.7,-.05){\line(0,1){0.1}}
\put(-.74,-0.2) {$1$}
\put(.3,-.05){\line(0,1){0.1}}
\put(0.34,-0.2) {$2$}
\put(-2.54,0.3) {$\mathbb{E}(Y_t)$}
\put(-1.35,0.3) {expo}
\put(-.35,0.3) {const}
\put(.65,0.3) {const}
\put(-2.54,0.1) {$\mathrm{Var}(Y_t)$}
\put(-1.35,0.1) {expo}
\put(-.35,0.1) {expo}
\put(.65,0.1) {const}
\put(-1.35,-0.2) {fit}
\put(-1.35,-0.35) {mutants}
\put(-.35,-0.2) {unfit}
\put(-.35,-0.35) {mutants}
\put(.65,-0.2) {more unfit}
\put(.65,-0.35) {mutants}
\put(1.75,-0.) {$\gamma^*=\delta^*/\lambda$}
\end{picture}
\end{center}
\caption{Illustration of the asymptotic behaviour of the mean and variance as given in Theorem \ref{meanlim}.}
\label{illust}
\end{figure}
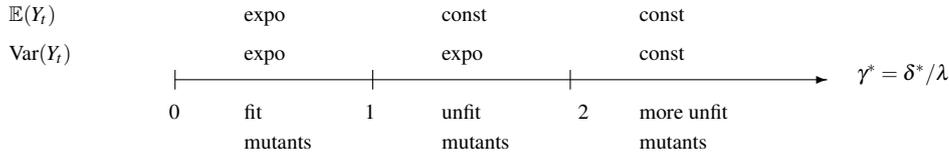 

\subsection{Large time clone size distribution}
Turning to the distribution function we have the following result regarding the generating function at large times.
\begin{theorem}\label{bigthm}
Let $\gamma^*= \delta^*/\lambda$. Then for $|s|<1$
\begin{equation}\label{limgf}
\lim_{t\rightarrow \infty}\frac{a_{t}}{n_{t}}(\genclone(s)-\beta)=\frac{1}{\gamma^*}\left[1-\Fze{1,\gamma^*}{1+\gamma*}{\xi}\right]=-\sum_{k\ge 1}\frac{\xi^{k}}{\gamma^*+k}.
\end{equation}
\end{theorem}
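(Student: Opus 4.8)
The plan is to collapse the statement into a single limit of an integral, exploiting the convolution form of $\genclone(s)$ in \eqref{SinglecloneGF} together with the regular-variation input of Lemma \ref{limform}. First I would invoke remark (ii) after \eqref{SinglecloneGF} to write $\genclone(s)=a_t^{-1}\int_0^t n_{t-u}\,\mathcal{Z}_u(s)\,du$, and note that since $a_t=\int_0^t n_{t-u}\,du$ the constant $\beta$ can be pulled inside the integrand. This produces the clean identity
\begin{equation}
\frac{a_t}{n_t}\big(\genclone(s)-\beta\big)=\int_0^t \frac{n_{t-u}}{n_t}\,\big(\mathcal{Z}_u(s)-\beta\big)\,du,
\end{equation}
which is the object whose $t\to\infty$ limit I must evaluate.

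The second step is to identify the pointwise limit of the integrand and justify passing it under the integral sign; this is the step I expect to be the main obstacle. From \eqref{BDgf} and $\lambda=1-\beta$ a short calculation gives $\mathcal{Z}_u(s)-\beta=-\lambda\xi e^{-\lambda u}/(1-\xi e^{-\lambda u})$, which does not depend on $t$, while Lemma \ref{limform} yields $n_{t-u}/n_t\to e^{-u\delta^*}$ for each fixed $u\ge 0$; hence the integrand converges pointwise to $e^{-u\delta^*}(\mathcal{Z}_u(s)-\beta)$. To interchange limit and integral I would use dominated convergence, and here monotonicity (Assumption \ref{ass1}(ii)) is the crucial ingredient: it forces $n_{t-u}/n_t\le 1$ on $[0,t]$, so the integrand is bounded by $|\mathcal{Z}_u(s)-\beta|$. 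Since $\lambda>0$ we have $\xi e^{-\lambda u}\to 0$, and for $|s|<1$ the generating function $\mathcal{Z}_u(s)$ is finite (the series \eqref{BDpmf} converges because its radius $\mathcal{S}_u>1$), so $1-\xi e^{-\lambda u}$ is continuous, nonvanishing, and tends to $1$; this gives a uniform lower bound and hence an integrable majorant $|\mathcal{Z}_u(s)-\beta|\le C e^{-\lambda u}$. Securing this uniform bound on the denominator over all $u\ge 0$, particularly as $|s|\uparrow 1$ where $|\xi|$ can be large, is the delicate point of the argument.

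Having passed to the limit, it remains to evaluate
\begin{equation}
\lim_{t\to\infty}\frac{a_t}{n_t}\big(\genclone(s)-\beta\big)=\int_0^\infty e^{-u\delta^*}\big(\mathcal{Z}_u(s)-\beta\big)\,du=-\lambda\xi\int_0^\infty\frac{e^{-u(\delta^*+\lambda)}}{1-\xi e^{-\lambda u}}\,du,
\end{equation}
which I would compute via the substitution $v=e^{-\lambda u}$, reducing it to $-\xi\int_0^1 v^{\gamma^*}/(1-\xi v)\,dv$ with $\gamma^*=\delta^*/\lambda$. Using $-\xi v^{\gamma^*}/(1-\xi v)=v^{\gamma^*-1}\big(1-(1-\xi v)^{-1}\big)$ splits the integral into $\int_0^1 v^{\gamma^*-1}\,dv=1/\gamma^*$ and the Euler integral representation of the hypergeometric function, which directly yields the middle expression $\tfrac1{\gamma^*}\big[1-\Fze{1,\gamma^*}{1+\gamma^*}{\xi}\big]$. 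Alternatively, expanding $(1-\xi v)^{-1}$ geometrically and integrating term by term gives $-\sum_{k\ge 1}\xi^k/(\gamma^*+k)$ for $|\xi|<1$; the equivalence of the two closed forms then follows from the Pochhammer ratio $(\gamma^*)_n/(1+\gamma^*)_n=\gamma^*/(\gamma^*+n)$, with the borderline case $\delta^*=0$ (so $\gamma^*=0$) recovered in the limit as $\log(1-\xi)$.
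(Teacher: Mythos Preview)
Your proposal is correct and follows essentially the same route as the paper: both pull $\beta$ inside the convolution, bound $n_{t-u}/n_t\le 1$ by monotonicity, invoke Lemma~\ref{limform} for the pointwise limit, and pass to the limit by dominated convergence before identifying the resulting integral with the hypergeometric expression. The only cosmetic difference is that the paper performs the substitution $v=e^{-\lambda u}$ \emph{before} taking the limit (so the integral lives on $[0,1]$ with a constant majorant supplied by a separate lemma bounding $|\xi/(1-\xi v)|$), whereas you keep the integral on $[0,t]\to[0,\infty)$ with the exponential majorant $Ce^{-\lambda u}$ and substitute afterward; the ``delicate point'' you flag is exactly what that lemma handles.
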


This result is made clearer in the next corollary,  in which the cases of exponential-type and subexponential growth are separated. This is as, for $\delta^*>0$,
 \begin{equation}
\lim_{t\rightarrow\infty}\frac{n_{t}}{a_{t}}\rightarrow \delta^* .
\end{equation}
For a proof see Lemma \ref{subexp1}. Consequently in the exponential-type setting, the limiting result is a proper probability distribution, whilst in the subexponential case it is not. We can interpret this as the clone sizes staying finite in the exponential case but grow to infinity for subexponential cases at large times. Henceforth, for brevity we do not impose such a separation but the reader should note that for exponential-type growth the above limit holds and may simplify further results.
\begin{corollary}\label{bigcor}
For $|s|<1$,
\begin{align}
&\lim_{t\rightarrow \infty}(\genclone(s)-\beta)=\lambda\left[1-\Fze{1,\gamma^*}{1+\gamma*}{\xi}\right]  & \gamma^*>0,\\
&\lim_{t\rightarrow \infty}\frac{a_{t}}{n_{t}}(\genclone(s)-\beta)=\log(1-\xi) \,\hspace{17mm} &  \gamma^*=0,
\end{align}
where the second expression is the $\gamma^*\to 0$ limit of the first expression. 
Then for $t\rightarrow \infty$ the probabilities for exponential-type growth $\gamma^*>0$ are
\begin{equation}
\mathbb{P}(Y_{t}=k)\sim 
      \begin{cases}
     1-\lambda \Fze{1,\gamma^*}{1+\gamma^*}{\beta} &   k=0\\      
     \frac{\delta^*\Gamma(k)}{(\gamma^*+1)_{k}}\Fze{k,\gamma^*}{1+\gamma^*+k}{\beta} &   k\geq 1,
     \end{cases}
\end{equation}
and for subexponential growth ($\gamma^*=0$)
\begin{equation}
\label{powerlimit}
\mathbb{P}(Y_{t}=k)\sim 
      \begin{cases}
     \beta+\frac{n_{t}\log(\lambda)}{a_{t}\alpha} &  k=0\\
      \frac{n_{t}}{a_{t} \alpha k} &   k\geq 1.
     \end{cases} 
\end{equation} 
\end{corollary}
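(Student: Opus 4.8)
\subsection*{Proof proposal}

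The plan is to derive everything from Theorem~\ref{bigthm}, which already supplies the limit of $\frac{a_t}{n_t}(\genclone(s)-\beta)$; from there I only need to (a) convert this into a limit, or leading asymptotic, for $\genclone(s)$ itself and (b) extract the coefficient of $s^k$. I would record first the elementary identity $F(1,\gamma^*;1+\gamma^*;\xi)=\gamma^*\sum_{n\ge0}\xi^n/(\gamma^*+n)$, coming from $(\gamma^*)_n/(1+\gamma^*)_n=\gamma^*/(\gamma^*+n)$, which is exactly the series on the right of \eqref{limgf}, together with the rewriting $\xi=1-\lambda/(1-s)$ of the definition in \eqref{BDgf}. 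For $\gamma^*>0$ I would multiply \eqref{limgf} by the constant $\delta^*=\lim_t n_t/a_t$ furnished by Lemma~\ref{subexp1} and simplify $\delta^*/\gamma^*=\lambda$ to obtain $\lim_t\genclone(s)=\beta+\frac{\delta^*}{\gamma^*}[1-F(1,\gamma^*;1+\gamma^*;\xi)]=1-\lambda F(1,\gamma^*;1+\gamma^*;\xi)$. For $\gamma^*=0$ I would let $\gamma^*\to0$ inside the series, using $\sum_{k\ge1}\xi^k/k=-\log(1-\xi)$, which both yields $\log(1-\xi)$ and exhibits it as the $\gamma^*\to0$ degeneration of the first line.

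The substantive computation is the coefficient extraction when $\gamma^*>0$. Writing $G(\xi)=\sum_{n\ge0}\xi^n/(\gamma^*+n)$, I would insert $1/(\gamma^*+n)=\int_0^1 u^{\gamma^*+n-1}\,du$, interchange sum and integral (legitimate since $|\xi|<1$ for $s$ near $0$, and the resulting integral continues analytically to all $s\in[0,1)$), and sum the geometric series to get $G(\xi)=\int_0^1 u^{\gamma^*-1}(1-\xi u)^{-1}\,du$. Substituting $\xi=1-\lambda/(1-s)$ turns $1-\xi u$ into $[(1-\beta u)-(1-u)s]/(1-s)$; a single geometric expansion in $s$ together with the interchange with $\int_0^1 du$ then gives, for $k\ge1$, $[s^k]G=-\lambda\int_0^1 u^{\gamma^*}(1-u)^{k-1}(1-\beta u)^{-(k+1)}\,du$. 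This is Euler's integral for ${}_2F_1$: it equals $\lambda^{-1}\frac{\Gamma(k)}{(\gamma^*+1)_k}F(k+1,\gamma^*+1;1+\gamma^*+k;\beta)$, and one application of Euler's transformation $F(a,b;c;z)=(1-z)^{c-a-b}F(c-a,c-b;c;z)$ (with $c-a-b=-1$ and $1-\beta=\lambda$) recasts it as $\lambda^{-1}\frac{\Gamma(k)}{(\gamma^*+1)_k}F(k,\gamma^*;1+\gamma^*+k;\beta)$. Since $\pr(Y_t=k)=-\delta^*\,[s^k]G$ for $k\ge1$, this is precisely the claimed mass function, while the $k=0$ value is just $\lim_t\genclone(0)$, i.e. the generating-function limit at $\xi=\beta$. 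The subexponential coefficients are then immediate: from $\log(1-\xi)=\log\lambda-\log(1-s)$ and $-\log(1-s)=\sum_{k\ge1}s^k/k$, reading off the coefficients of $\beta+(n_t/a_t)\log(1-\xi)$ produces the constant term $\beta+(n_t/a_t)\log\lambda$ and the tail $n_t/(a_t k)$, matching \eqref{powerlimit}.

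I expect the main obstacle to be not these manipulations but the transfer from convergence of generating functions to convergence of their coefficients. For $\gamma^*>0$ the limiting function is a genuine probability generating function---note that $\xi\to-\infty$ (not $1$) as $s\to1^-$, so $F(1,\gamma^*;1+\gamma^*;\xi)\to0$ and the limit tends to $1$ at $s=1$---hence the continuity theorem for generating functions yields $\pr(Y_t=k)\to[s^k]\lim_t\genclone(s)$ termwise, which is the meaning of the $\sim$ there. For $\gamma^*=0$ the limit is degenerate, since $n_t/a_t\to0$ and all mass escapes to infinity, so the continuity theorem gives nothing; there I would instead extract coefficients via Cauchy's formula $\pr(Y_t=k)=\frac{1}{2\pi i}\oint\genclone(s)\,s^{-k-1}\,ds$ on a fixed circle $|s|=r<1$ and show, using the local uniform convergence underlying Theorem~\ref{bigthm}, that $\genclone(s)-\beta-(n_t/a_t)\log(1-\xi)$ contributes $o(n_t/a_t)$ to each coefficient, so that the stated leading asymptotics hold.
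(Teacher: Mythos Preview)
Your argument is correct and in fact more complete than the paper's own proof. The paper proceeds exactly as you do for the first step, invoking Lemma~\ref{subexp1}(ii) to pass from Theorem~\ref{bigthm} to the two displayed limits, and for $\gamma^*=0$ uses \eqref{hypgeom3}--\eqref{hypgeom4} in place of your direct series limit (these are equivalent). The difference is in the coefficient extraction for $\gamma^*>0$: the paper simply cites the series expansion worked out in Appendix~A of \cite{Keller:2015}, whereas you derive it from scratch via the Euler integral representation $G(\xi)=\int_0^1 u^{\gamma^*-1}(1-\xi u)^{-1}\,du$ followed by a geometric expansion in $s$ and an Euler transformation. Your route is self-contained and arguably cleaner; the paper's route is shorter only because it outsources the computation.

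Two small remarks. First, there is a bookkeeping slip in your intermediate display: the Euler integral $\int_0^1 u^{\gamma^*}(1-u)^{k-1}(1-\beta u)^{-(k+1)}\,du$ equals $\frac{\Gamma(k)}{(\gamma^*+1)_k}F(k+1,\gamma^*+1;1+\gamma^*+k;\beta)$ with no $\lambda^{-1}$ prefactor; the $\lambda^{-1}$ enters only after Euler's transformation $(1-\beta)^{-1}=\lambda^{-1}$, and combines with the $-\lambda$ in front of the integral to leave $[s^k]G=-\frac{\Gamma(k)}{(\gamma^*+1)_k}F(k,\gamma^*;1+\gamma^*+k;\beta)$. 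Your final formula $\pr(Y_t=k)=-\delta^*[s^k]G$ then comes out correctly. Second, your explicit treatment of the transfer from generating-function convergence to coefficient convergence is a genuine addition: the paper does not discuss this at all. Your continuity-theorem argument for $\gamma^*>0$ is fine, and for $\gamma^*=0$ the Cauchy-integral argument is also sound, since the domination in Lemma~\ref{todominate} is uniform on any circle $|s|=r<1$, so the convergence in Theorem~\ref{bigthm} is locally uniform in $s$ and passes under the contour integral.
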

The expressions obtained in the $\delta^*>0$ case also appeared as an approximation in \cite{Kessler:2015} for the total number of mutants with stochastic wild-type and mutant growth when the mean number of clones is small. This can now be interpreted as an application of Proposition \ref{smallarrive}.

\begin{figure}[t!]
\centering
\includegraphics[width=.7\linewidth,height=6cm]{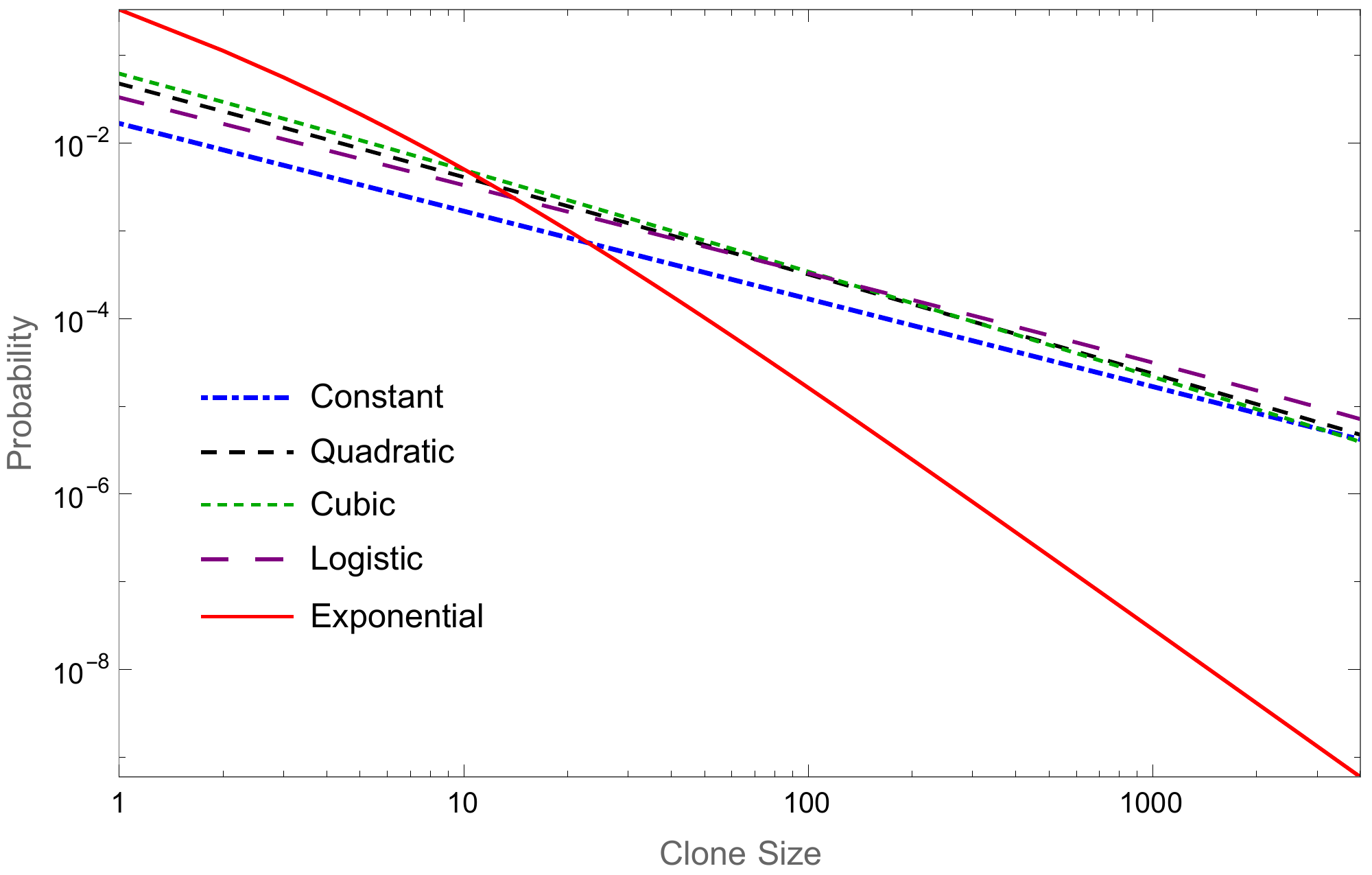}
\caption{Transition to the asymptotic regime as described in Corollary \ref{bigcor}. For subexponential wild-type growth the mass functions tend to $k^{-1}$ behaviour, while for exponential-type it tends to $k^{-1-\gamma^*}$. Here $t=20$ and all other parameters are as given in Figure \ref{Dif}.}
\end{figure}

The case of immortal mutants does not simplify the above expressions for subexponential growth, but for exponential-type growth, by applying \eqref{hypgeom3} then \eqref{hypgeom1} to the limiting generating function, we have the following link to the Yule-Simon distribution which appears often in random networks \cite{Simon:1955,Krapivsky:2001}.
\begin{corollary}
For immortal mutants with exponential-type wild-type growth the clone size distribution $Y_{t}$ follows a Yule-Simon distribution with parameter $\delta^*$ for large times. That is, for $\beta=0,\,\delta^*>0$,
\begin{equation}
\lim_{t\rightarrow \infty}\genclone(s)=\frac{s\delta^*}{\delta^*+1}\Fze{1,1}{2+\delta^*}{s},
\end{equation}
and thus, for $k\geq 1$,
\begin{equation}
\lim_{t\rightarrow\infty}\mathbb{P}(Y_{t}=k)=\frac{\delta^*\Gamma(k)}{(\delta^*+1)_{k}} .
\end{equation}
\end{corollary}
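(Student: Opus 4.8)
The plan is to specialise the exponential-type branch of Corollary~\ref{bigcor} to immortal mutants and then reshape the resulting hypergeometric expression into Yule--Simon form. Setting $\beta=0$ forces $\lambda=1-\beta=1$, hence $\gamma^*=\delta^*/\lambda=\delta^*$ and $\xi=(\beta-s)/(1-s)=-s/(1-s)$. Substituting these into the first line of Corollary~\ref{bigcor} gives immediately
\[
\lim_{t\to\infty}\genclone(s)=1-\Fze{1,\delta^*}{1+\delta^*}{\tfrac{-s}{1-s}},
\]
so the entire problem reduces to an identity between hypergeometric functions valid for $|s|<1$.

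First I would record the probabilities, since they fall out with no extra work: in the $\gamma^*>0$ branch of Corollary~\ref{bigcor} the mass function is $\frac{\delta^*\Gamma(k)}{(\gamma^*+1)_{k}}\Fze{k,\gamma^*}{1+\gamma^*+k}{\beta}$, and at $\beta=0$ the trailing hypergeometric factor equals $1$ while $\gamma^*=\delta^*$, leaving $\frac{\delta^*\Gamma(k)}{(\delta^*+1)_{k}}$ for $k\ge 1$. Writing $(\delta^*+1)_{k}=\Gamma(\delta^*+1+k)/\Gamma(\delta^*+1)$ this equals $\delta^*\Gamma(k)\Gamma(\delta^*+1)/\Gamma(\delta^*+1+k)$, which is precisely the Yule--Simon mass function with parameter $\delta^*$, establishing the second displayed equation.

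For the generating function I would first apply the Pfaff transformation \eqref{hypgeom3}. The key computation is that under $w=-s/(1-s)$ one has $w/(w-1)=s$ and $1-w=1/(1-s)$, so $\Fze{1,\delta^*}{1+\delta^*}{w}=(1-w)^{-1}\Fze{1,1}{1+\delta^*}{s}=(1-s)\Fze{1,1}{1+\delta^*}{s}$, giving $\lim_{t\to\infty}\genclone(s)=1-(1-s)\Fze{1,1}{1+\delta^*}{s}$. A convenient by-product is that this manoeuvre pulls the argument from $-s/(1-s)$, which leaves the unit disc as $s\to 1^-$, back to $s\in(-1,1)$, so every series converges and the steps are legitimate as identities of analytic functions on $|s|<1$. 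I would then invoke the contiguous relation \eqref{hypgeom1} to raise the lower parameter, rewriting $1-(1-s)\Fze{1,1}{1+\delta^*}{s}$ as $\frac{s\delta^*}{\delta^*+1}\Fze{1,1}{2+\delta^*}{s}$, which is the claimed form.

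The hard part will be the last step: the contiguous relation must convert $1-(1-s)\Fze{1,1}{1+\delta^*}{s}$ into \emph{exactly} $\frac{s\delta^*}{\delta^*+1}\Fze{1,1}{2+\delta^*}{s}$ and not some neighbouring combination. If the precise form of \eqref{hypgeom1} is not at hand, the safe fallback is to verify the identity coefficient by coefficient: since $\Fze{1,1}{1+\delta^*}{s}=\sum_{k\ge 0}\frac{k!}{(1+\delta^*)_{k}}s^{k}$, the coefficient of $s^{k}$ in $1-(1-s)\Fze{1,1}{1+\delta^*}{s}$ is $\frac{(k-1)!}{(1+\delta^*)_{k-1}}-\frac{k!}{(1+\delta^*)_{k}}=\frac{\delta^*(k-1)!}{(1+\delta^*)_{k}}$, which both matches the Yule--Simon probabilities already obtained and agrees with the $s^{k}$ coefficient of $\frac{s\delta^*}{\delta^*+1}\Fze{1,1}{2+\delta^*}{s}$; this single calculation simultaneously confirms the generating-function identity and cross-checks the mass function.
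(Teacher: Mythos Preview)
Your argument is correct and follows the paper's route: specialise Corollary~\ref{bigcor} to $\beta=0$, then manipulate the limiting generating function using the two hypergeometric identities. Two points are worth flagging.

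First, you have the labels reversed: in this paper \eqref{hypgeom1} is the Pfaff transformation and \eqref{hypgeom3} is the contiguous relation $\Fze{1,b}{c}{z}=1+\frac{b}{c}z\Fze{1,b+1}{c+1}{z}$, so your text should cite them the other way round.

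Second, the paper applies the identities in the opposite order, and that order is cleaner. Applying \eqref{hypgeom3} first to $1-\Fze{1,\delta^*}{1+\delta^*}{\xi}$ gives $-\frac{\delta^*\xi}{1+\delta^*}\Fze{1,\delta^*+1}{2+\delta^*}{\xi}$; then a single Pfaff step \eqref{hypgeom1} (using the $a\leftrightarrow b$ symmetry) sends $\xi\mapsto s$ and the prefactor $-\xi(1-\xi)^{-1}=s$ appears automatically, yielding $\frac{s\delta^*}{1+\delta^*}\Fze{1,1}{2+\delta^*}{s}$. In your order the second step, going from $1-(1-s)\Fze{1,1}{1+\delta^*}{s}$ to the target, is \emph{not} a direct instance of \eqref{hypgeom3} (which would produce $\Fze{1,2}{2+\delta^*}{s}$ rather than $\Fze{1,1}{2+\delta^*}{s}$), which is why you were forced into the coefficient-by-coefficient fallback. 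That fallback is valid and your computation is right, but reversing the order removes the need for it.
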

With immortal, neutral ($\delta^*=1$)  mutants we have
\begin{equation}
 \lim_{t\rightarrow\infty} \pr(Y_t=k) = \frac{1}{ \alpha k(k+1)}.
\end{equation}
which is in agreement with the long time limit of \eqref{neutimmtime}.
For immortal mutants and exponential-type growth, as the clone size distribution tends to a Yule-Simon distribution, we expect power-law tail behaviour at large times \cite{Newman:2005}. Interestingly, we see that this behaviour holds when we include mutant death and have general wild-type growth.
\begin{corollary}\label{tailcor}
At large times, the tail of the clone size distribution follows a power-law with index $1+\gamma^*$. More precisely,
\begin{equation}
\label{tailexp}
\lim_{k\rightarrow\infty}\lim_{t\rightarrow\infty}\frac{k^{\gamma^*+1}a_{t}}{n_{t}}\mathbb{P}(Y_{t}=k)=\frac{\Gamma(1+\gamma^*) }{\lambda^{\gamma^*}}.
\end{equation}
\end{corollary}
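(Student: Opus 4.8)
The plan is to start from the large-time probabilities already recorded in Corollary \ref{bigcor} and then read off their $k\to\infty$ behaviour, evaluating the two nested limits in the stated order (inner $t\to\infty$, outer $k\to\infty$). For exponential-type growth ($\gamma^*>0$) Corollary \ref{bigcor}, combined with $n_t/a_t\to\delta^*$ (Lemma \ref{subexp1}), gives for each fixed $k\ge 1$ the genuine limit
\begin{equation}
\lim_{t\to\infty}\frac{a_t}{n_t}\mathbb{P}(Y_t=k)=\frac{\Gamma(k)}{(\gamma^*+1)_k}\Fze{k,\gamma^*}{1+\gamma^*+k}{\beta}=:g(k),
\end{equation}
the factor $\delta^*$ from the probability cancelling the $1/\delta^*$ from $a_t/n_t$. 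The corollary then reduces to showing $k^{\gamma^*+1}g(k)\to\Gamma(1+\gamma^*)\lambda^{-\gamma^*}$, and I would treat the Gamma prefactor and the hypergeometric factor separately.

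For the prefactor I would write $(\gamma^*+1)_k=\Gamma(\gamma^*+1+k)/\Gamma(\gamma^*+1)$, so that
\begin{equation}
\frac{\Gamma(k)}{(\gamma^*+1)_k}=\Gamma(\gamma^*+1)\,\frac{\Gamma(k)}{\Gamma(k+\gamma^*+1)},
\end{equation}
and invoke the standard ratio asymptotic $\Gamma(k)/\Gamma(k+\gamma^*+1)\sim k^{-(\gamma^*+1)}$ as $k\to\infty$ (immediate from Stirling). This already yields the $k^{-(\gamma^*+1)}$ decay that cancels the weight $k^{\gamma^*+1}$, leaving the constant $\Gamma(\gamma^*+1)$ multiplied by the limit of the hypergeometric factor.

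The only genuinely delicate step is that hypergeometric limit. Writing the defining series
\begin{equation}
\Fze{k,\gamma^*}{1+\gamma^*+k}{\beta}=\sum_{n\ge 0}\frac{(k)_n(\gamma^*)_n}{(1+\gamma^*+k)_n}\frac{\beta^n}{n!},
\end{equation}
I would pass to the limit term by term: for each fixed $n$ one has $(k)_n/(1+\gamma^*+k)_n\to 1$, so the $n$-th term tends to $(\gamma^*)_n\beta^n/n!$. To justify exchanging the limit with the infinite sum I would note that, since $\gamma^*>0$, every factor of the denominator strictly exceeds the matching factor of the numerator, whence $(k)_n/(1+\gamma^*+k)_n\le 1$ uniformly in $k$; the terms are thus dominated by $(\gamma^*)_n\beta^n/n!$, which is summable because $\beta<1$ (as $\lambda>0$). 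Tannery's theorem (dominated convergence for series) then permits the interchange, and the generalized binomial series gives
\begin{equation}
\lim_{k\to\infty}\Fze{k,\gamma^*}{1+\gamma^*+k}{\beta}=\sum_{n\ge 0}\frac{(\gamma^*)_n}{n!}\beta^n=(1-\beta)^{-\gamma^*}=\lambda^{-\gamma^*},
\end{equation}
using $\lambda=1-\beta$ from \eqref{BDgf}. I expect this verification of the dominating bound and the applicability of Tannery's theorem to be the main obstacle; the remaining ingredients are routine.

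Combining the two factors gives $g(k)\sim\Gamma(\gamma^*+1)\lambda^{-\gamma^*}k^{-(\gamma^*+1)}$, so $k^{\gamma^*+1}g(k)\to\Gamma(1+\gamma^*)\lambda^{-\gamma^*}$, which is the assertion for $\gamma^*>0$. Finally I would dispose of the subexponential case $\gamma^*=0$ directly: there \eqref{powerlimit} gives $\lim_{t\to\infty}\frac{a_t}{n_t}\mathbb{P}(Y_t=k)=1/k$, whence $k^{\gamma^*+1}\cdot\tfrac1k=k\cdot\tfrac1k=1=\Gamma(1)/\lambda^{0}$, matching the right-hand side and agreeing with the $\gamma^*\to 0$ limit of the general formula. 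This completes the verification in both regimes.
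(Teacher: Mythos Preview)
Your argument is correct, and it takes a genuinely different route from the paper. The paper proceeds by singularity analysis: it expands the limiting generating function of Theorem~\ref{bigthm}, namely $\Fze{1,\gamma^*}{1+\gamma^*}{\xi}$, around its dominant singularity at $s=1$ (equivalently $y=(s-\beta)/\lambda=1$), extracts the leading non-analytic term $(1-y)^{\gamma^*}$, and then reads off the coefficient asymptotics via a Flajolet--Sedgewick transfer theorem, exactly as in Section~6 of \cite{Keller:2015}. You instead bypass the generating function entirely at this stage, starting from the explicit mass-function formula already recorded in Corollary~\ref{bigcor} and analysing its $k\to\infty$ behaviour directly via the Gamma ratio and a termwise limit in the hypergeometric series justified by Tannery's theorem. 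Your approach is more elementary and fully self-contained within the paper (no appeal to \cite{Keller:2015} or transfer machinery), at the price of needing the closed-form coefficients from Corollary~\ref{bigcor}; the paper's singularity route is more systematic and would still work without those explicit coefficients, but outsources the technical work to an external reference. Both yield the same constant $\Gamma(1+\gamma^*)\lambda^{-\gamma^*}$, and your separate treatment of $\gamma^*=0$ via \eqref{powerlimit} is clean.
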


\subsection{Large time distribution for total number of mutants}

Finally, to conclude this section we give the corresponding results for the total number of mutants $B_{t}$ in the often used \textit{Large Population-Small Mutation} limit.

\begin{theorem}\label{largepopsmallmut}
Letting $\theta=\mu n_{t}$ be constant and with $s_{i}(t)$ subexponential functions as in Theorem \ref{meanlim}
 \begin{equation}
 \mathbb{E}(B_{t})\sim \begin{cases}
 \frac{\theta a_{t}}{n_{t}} \frac{\delta^*}{\delta^*-\lambda} & \lambda<\delta^*\\
        \frac{\theta a_{t}}{n_{t}}  s_{1}(t) & \delta^*=\lambda \\
   \frac{\theta a_{t}}{n_{t}}e^{(\lambda-\delta^*) t}s_{2}(t)&  \delta ^*<\lambda
  \end{cases}
\hspace{1.5cm}
  \mathrm{Var}(B_{t})\sim
  \begin{cases}
 \frac{\theta a_{t}}{n_{t}} \frac{\delta^*}{\lambda}\left(\frac{2}{\delta^*-2\lambda}-\frac{2-\lambda}{\delta^*-\lambda}\right)  & 2\lambda<\delta^*\\
      \frac{\theta a_{t}}{n_{t}} s_{3}(t) & \delta^*=2\lambda \\
       \frac{\theta a_{t}}{n_{t}}e^{(2\lambda-\delta^*)t}s_4(t) & \delta^*<2\lambda
  \end{cases}
  \end{equation}
  as $t\rightarrow \infty$. For $|s|<1$
\begin{equation}
\lim_{\substack{t\rightarrow \infty\\ \theta \,\mathrm{  constant}}}\gentotal(s)\exp\left(\frac{\theta a_{t}\lambda}{n_{t}}\right)=\exp\left(\frac{\theta}{\gamma^*}\left[1-\Fze{1,\gamma^*}{1+\gamma*}{\xi}\right]\right),
\end{equation}
and we have the following tail result
\begin{equation}
\lim_{k\rightarrow\infty}\lim_{\substack{t\rightarrow \infty\\ \theta \,\mathrm{  constant}}}k^{\gamma^*+1}\exp\left(\frac{\theta a_{t}}{n_{t}}\right)\mathbb{P}(B_{t}=k)=\frac{\theta\Gamma(1+\gamma^*) }{\lambda^{\gamma^*}}.
\end{equation}
\end{theorem}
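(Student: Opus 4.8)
The strategy throughout is to transfer every assertion about $B_t$ to the already-established clone-size results for $Y_t$ via the compound-Poisson representation \eqref{GFlink}, using the identity $\mathbb{E}(K_t)=\mu a_t=\theta a_t/n_t$, which follows from \eqref{rateone} together with $\theta=\mu n_t$. For the mean and variance I would start from \eqref{meanlink}, namely $\mathbb{E}(B_t)=\mathbb{E}(K_t)\,\mathbb{E}(Y_t)$ and $\mathrm{Var}(B_t)=\mathbb{E}(K_t)\,\mathbb{E}(Y_t^2)$. Substituting $\mathbb{E}(K_t)=\theta a_t/n_t$ and inserting the asymptotics of $\mathbb{E}(Y_t)$ from Theorem \ref{meanlim} gives the three regimes for $\mathbb{E}(B_t)$ at once. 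For the variance I would write $\mathbb{E}(Y_t^2)=\mathrm{Var}(Y_t)+\mathbb{E}(Y_t)^2$ and check, regime by regime, that the leading behaviour matches Theorem \ref{meanlim}: when $2\lambda<\delta^*$ the limit $\mathbb{E}(Y_t)^2\to(\delta^*/(\delta^*-\lambda))^2$ exactly cancels the subtracted square inside $\mathrm{Var}(Y_t)$, leaving the bracketed constant; in the remaining regimes $\mathrm{Var}(Y_t)$ grows while $\mathbb{E}(Y_t)^2$ is of equal or smaller (sub)exponential order, so $\mathbb{E}(Y_t^2)$ inherits the same factor. Multiplying through by $\theta a_t/n_t$ yields the stated forms.

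The generating-function limit follows almost immediately. Writing $\genclone(s)-1=(\genclone(s)-\beta)-\lambda$, valid since $\lambda=1-\beta$, the representation \eqref{GFlink} becomes $\gentotal(s)\exp(\lambda\theta a_t/n_t)=\exp\big(\theta\,\tfrac{a_t}{n_t}(\genclone(s)-\beta)\big)$, where I have again used $\mathbb{E}(K_t)=\theta a_t/n_t$. For $|s|<1$, Theorem \ref{bigthm} supplies the pointwise limit of $\tfrac{a_t}{n_t}(\genclone(s)-\beta)$, and continuity of the exponential then gives the claimed limit $\exp\big(\tfrac{\theta}{\gamma^*}[1-\Fze{1,\gamma^*}{1+\gamma^*}{\xi}]\big)$.

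The tail is the substantive claim. The governing principle is that a compound-Poisson sum inherits the heavy tail of a single summand: for large $k$ the dominant way to realise $B_t=k$ is for one clone to be large while the rest stay small, so one expects the local-subexponential relation $\mathbb{P}(B_t=k)\sim\mathbb{E}(K_t)\,\mathbb{P}(Y_t=k)$. Feeding in Corollary \ref{tailcor} together with $\mathbb{E}(K_t)=\theta a_t/n_t$ then produces the prefactor $\theta$ and the constant $\Gamma(1+\gamma^*)/\lambda^{\gamma^*}$. I would justify this either (i) analytically, by applying Flajolet--Odlyzko singularity analysis \cite{Flajolet:2009} to $\gentotal(s)=\exp(\mathbb{E}(K_t)[\genclone(s)-1])$: the closest-to-origin singularity of $\genclone$, located at $\mathcal{S}_t$ by Theorem \ref{cutoff} and of the non-analytic type $(1-s)^{\gamma^*}$ responsible (as $\mathcal{S}_t\to1$) for the power law of Corollary \ref{tailcor}, passes through the exponential to a singularity of the same type in $\gentotal$, rescaled by $\mathbb{E}(K_t)$ and weighted by the finite value of the regular part at the singularity; the transfer theorem then reads off the coefficient asymptotics, the normalising exponential in the statement being exactly the scale set by that regular part. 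Alternatively (ii) one works directly with the mass-function recursion \eqref{Link3}, showing that the convolution sum is dominated by its small-$k$ end, which again yields $\mathbb{P}(B_t=k)\sim\mathbb{E}(K_t)\,\mathbb{P}(Y_t=k)$.

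The main obstacle is the interchange of the two limits $k\to\infty$ and $t\to\infty$. The pointwise generating-function convergence of Theorem \ref{bigthm} is by itself insufficient to extract coefficient asymptotics, so the crux is to obtain uniform control of $\genclone$, hence of $\gentotal$, on a Flajolet--Odlyzko indented disc around the moving singularity $\mathcal{S}_t$, allowing the transfer theorem to be applied at finite $t$ with estimates uniform enough to survive the passage $t\to\infty$; in the mass-function approach the analogous difficulty is a dominated-convergence bound making the one-big-jump estimate uniform in $t$. Particular care is needed in the subexponential regime, where $\mathbb{E}(K_t)\to\infty$ forces $\mathbb{P}(B_t=k)\to0$ for each fixed $k$, so that the normalising exponential factor is essential and its precise form must be tracked through the double limit.
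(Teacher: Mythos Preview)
Your argument for the mean, variance, and generating-function limit is exactly the paper's: invoke \eqref{meanlink} together with Theorem~\ref{meanlim} (the paper's only remark on the variance is the parenthetical ``the second moment dominates the mean squared in all divergent cases'', which is your regime-by-regime check compressed to a phrase), and then rewrite $\genclone(s)-1=(\genclone(s)-\beta)-\lambda$ inside \eqref{GFlink} before applying Theorem~\ref{bigthm}.

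For the tail the paper is much terser than you: its entire argument is that the map $x\mapsto e^{x}$ in \eqref{GFlink} is analytic, so the singular expansion of $\genclone$ underlying Corollary~\ref{tailcor} passes straight through to $\gentotal$, picking up the factor $\mathbb{E}(K_t)=\theta a_t/n_t$. This is your route~(i), but without the caveats. Your worry about interchanging limits is largely misplaced here: the statement fixes the order (inner $t\to\infty$, outer $k\to\infty$), so one first passes to the limiting generating function established in the preceding display and \emph{then} runs the singularity analysis of Corollary~\ref{tailcor} on that limit; there is no need for uniform-in-$t$ control on an indented contour around the moving singularity $\mathcal{S}_t$. The genuine technical wrinkle you identify---that pointwise convergence of $\gentotal$ must be upgraded to coefficient-wise convergence to make sense of the inner limit, particularly in the subexponential case where $a_t/n_t\to\infty$---is real, but the paper does not spell it out either; it is subsumed in the appeal to analyticity. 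Your alternative~(ii) via the recursion \eqref{Link3} would also work but is not the route taken.
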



\section{Tail behaviour in empirical metastatic data} \label{applications}

Given the above discussion we expect, for a large class of wild-type growth functions, to see power tail behaviour on approach to the exponential cut-off in the clone size distribution. We take the first steps to verify this theoretical hypothesis by analysing an empirical metastatic data. In this setting the wild-type population is the primary tumour and mutant clones are the metastases.
 
Our data is sourced from the supplementary materials in \cite{Bozic:2013}. This data is taken from 22 patients;  7 with pancreatic ductal adenocarcinomas, 11 with colorectal carcinomas, and 6 with melanomas. One patient had only a single metastasis so we discard this data. Of the 21 remaining patients the number of cells in a single metastasis ranged from $6\times 10^6$ to $2.23\times 10^9$. Our theoretical model predicts a cut-off in the distribution around $k=e^{\lambda t}$. Taking some sample parameters from the literature, namely $\lambda=0.069$/day \cite{Diaz:2012}, and $t=14.1$ years \cite{Yachida:2010}, this leads to a cut-off around $k\approx10^{154}$ cells. Due to the enormity of this value we ignore the cut-off here. Additionally, as the minimum observed metastasis size is $6\times10^6$ cells, we assume that all data points are sampled from the tail of the distribution.


For each of the data-sets we examine the likelihood ratio to determine if the data is more likely sampled from a power-law decaying or geometrically decaying distribution. 19 of the 21 data-set return the power-law hypothesis as more plausible which is in agreement with the theoretical prediction. Both are single parameter distributions and maximum likelihood analysis was utilised to estimate the parameters. The methodology outlined in \cite{Clauset:2009} was broadly followed and brief details regarding calculating maximum likelihood estimates (MLEs) are given in Appendix \ref{MLEapp}. We note that in this context the likelihood ratio point esimator returns equivalent results to the Akaike information criterion widely used in model selection \cite{Burnham:1998}. Under the power-law model, $\mathbb{P}(Y_{t}=k)\propto k^{-\omega}$, for 20 of the 21 data-sets we find the point estimate of the power-law index, $\hat \omega$, lies in $[-2,-1]$. The outlier comes from the smallest data-set (3 metastases). Due to the small size of data-sets, we recognise the influence of statistical fluctuations. 
\begin{figure}[t!]
   \centering
     \begin{subfigure}[b]{0.32\textwidth}
              \includegraphics[width=\textwidth,height=5cm]{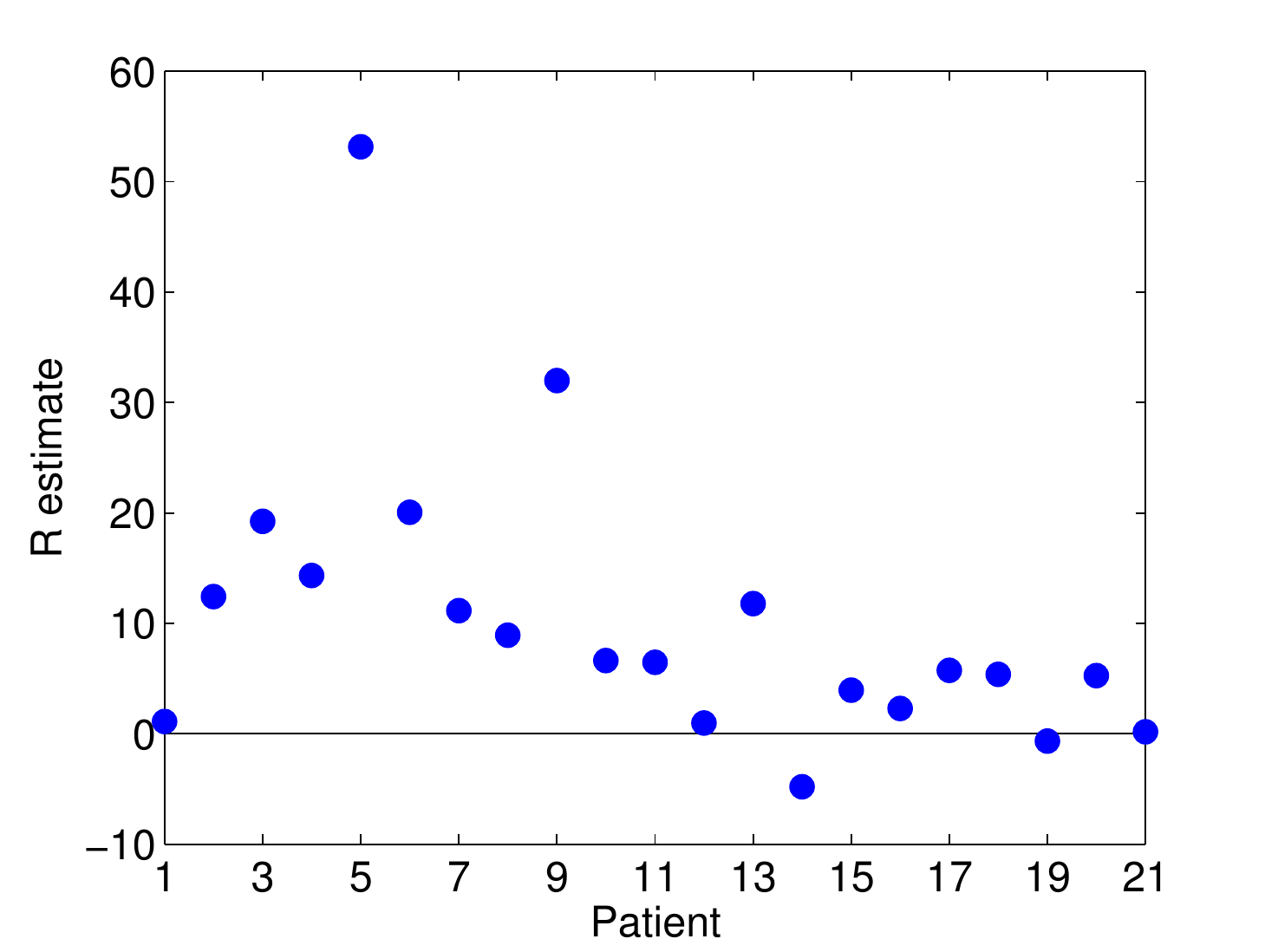}
        \caption{}
        \label{Rest}
    \end{subfigure}
    \begin{subfigure}[b]{0.32\textwidth}
           \includegraphics[width=\textwidth,height=5cm]{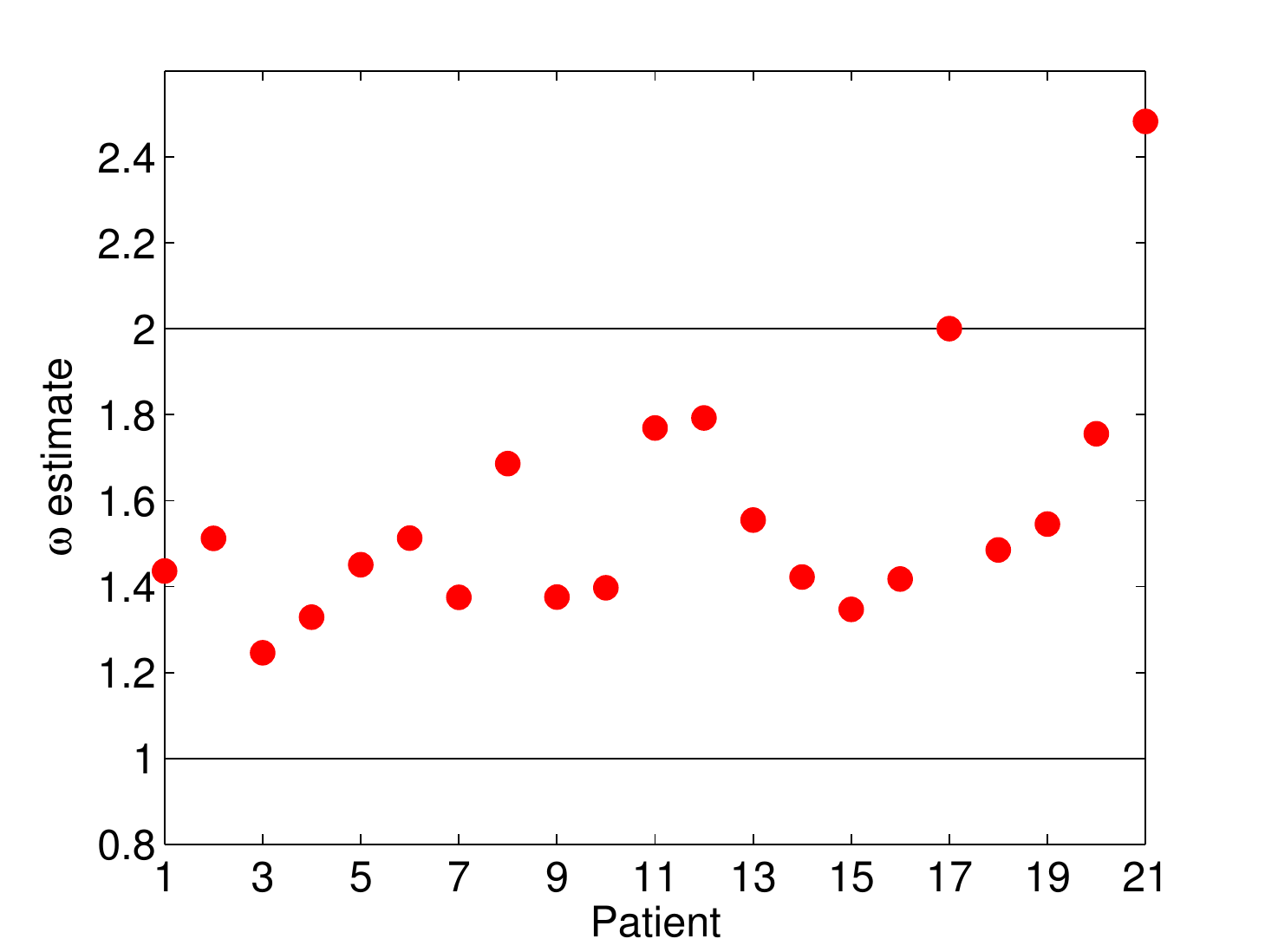}
        \caption{}
        \label{alphaest}
    \end{subfigure}
    ~ 
      \begin{subfigure}[b]{0.32\textwidth}
         \includegraphics[width=\textwidth,height=5cm]{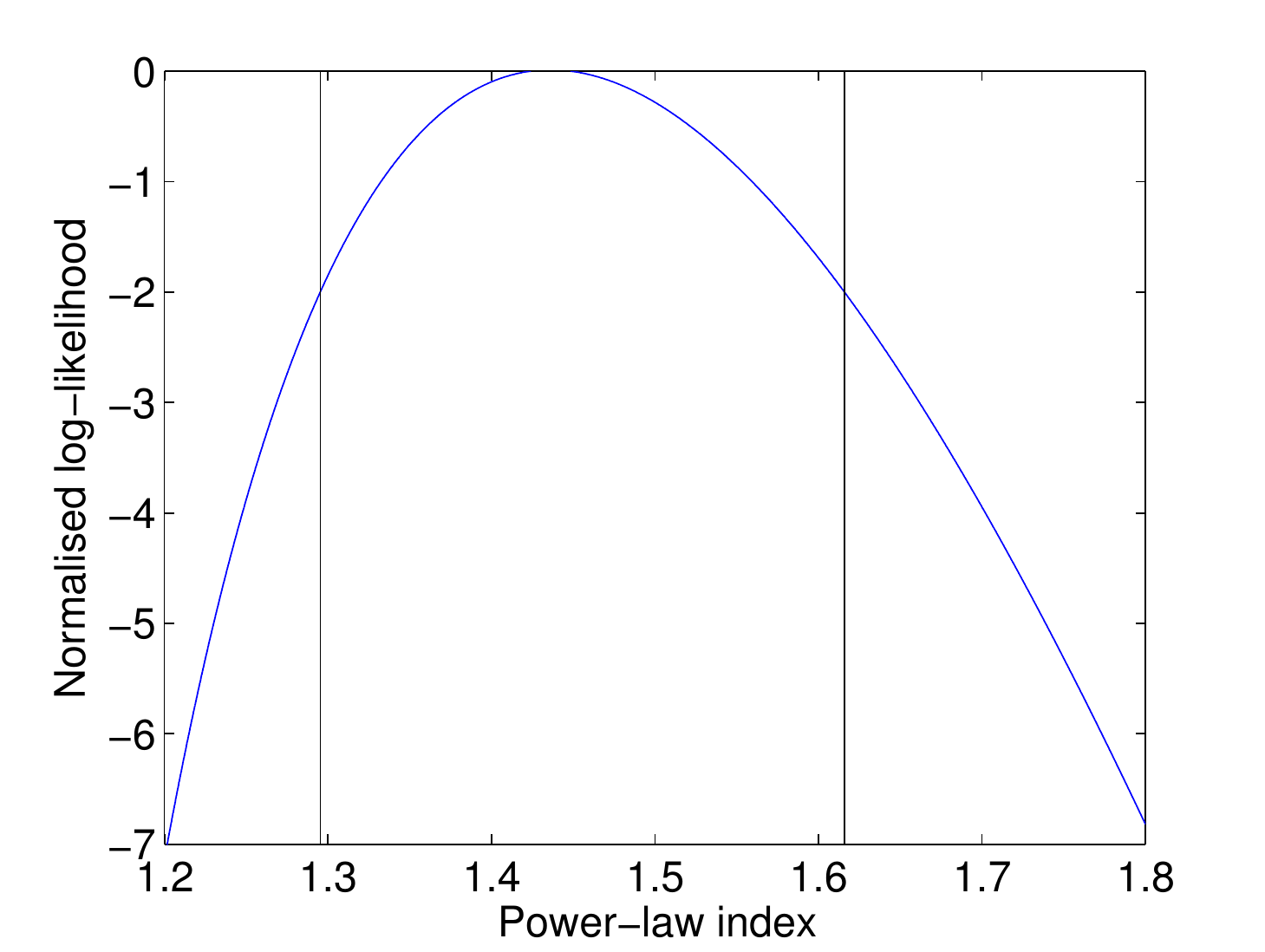}
        \caption{}
        \label{Loglike}
    \end{subfigure}      
      ~ 
      \caption{Likelihood analysis results: Patients are sorted left to right by number of metastases with patient 1 having 30 mets to patient 21 having 3. Hence values to left of figures are more significant. (A) Likelihood ratio $\hat{\mathcal{R}}$ for each data-set. Points above the horizontal line suggest data-set is from a power-law distribution over a geometric distribution. (B) Estimate $\hat \omega$ for each data-set, determined via maximum likelihood. (C) Normalised log-likelihood function for first best set. Vertical bars show the likelihood interval.}
\end{figure}

Details of the likelihood ratio are as follows. Let $\textbf{y}=(y_{1},\ldots,y_{N})$ be a data-set of size $N$. We test the hypothesis that $\textbf{y}$ is drawn from a power-law distribution, $\mathbb{P}_{1}(Y_{t}=k)=C_{1}k^{-\omega}$, versus that it is sampled from a geometric distribution, $\mathbb{P}_{2}(Y_{t}=k)=C_{2}p(1-p)^{k}$, where $C_{1},\,C_{2}$ are normalising constants and $p$ is the parameter for the geometric distribution. The log-likelihood ratio is
\begin{gather}
\hat{\mathcal{R}}=\sum_{i=1}^{N}[\log\mathbb{P}_{1}(Y_{t}=y_{i})-\log\mathbb{P}_{2}(Y_{t}=y_{i})],
\end{gather}
where $\hat{\mathcal{R}}>0$ gives support to the hypothesis that the data is drawn from the  power-law distribution with MLE exponent $\hat \omega$, over the geometric distribution with MLE parameter $\hat p$. The results are given in Figure \ref{Rest}.

Assuming a power-law distribution the maximum likelihood estimates for the exponent $\omega$ for each data-set are given in Figure \ref{alphaest}. Due to the small sample size of our data-sets and the high variance in the distribution, we do not derive confidence intervals via normal distribution approximations. Instead we show the normalised log-likelihood, $\log\LTL(\omega)/\LTL(\hat \omega)$, for our best data-set, with $N=30$, in Figure \ref{Loglike}, where $\LTL(\omega)$ is the likelihood function. Also, following \cite{Hudson:1971}, we demonstrate the likelihood interval defined as
\begin{gather}
I(\omega)=\{\omega:\log \frac{\LTL(\omega)}{\LTL(\hat \omega)}\geq -2\}.
\end{gather}
If a large sample size was possible this interval would correspond to a $95.4\%$ confidence
interval.
For the data-set with $N=30$ we numerically determined $I(\omega)=[1.295,1.616]$, demonstrated as the domain between the vertical bars in Figure \ref{Loglike}.

\section{Alternative approaches}\label{alt}

\subsection{Deterministic approximation}
In order to circumvent the complexity introduced by the birth-death process one might be tempted to simply assume the mutant clone size grows according to $e^{\lambda \tau}$, the mean of the birth-death process. This approach corroborates our results regarding the tail of the size distribution. Indeed, the clone size density may be found to be
\begin{gather}\label{dens}
f_{Y_t}(y)=\frac{n_{t-\frac{\log(y)}{\lambda}}}{a_{t}\lambda y}.
 \end{gather}
which has support $[1,e^{\lambda t}]$. This formula can also be found in \cite{Hanin:2006}. Then, as in Section \ref{longtime} under Assumption \ref{ass1},
\begin{equation}
\lim_{t\rightarrow \infty}\frac{a_{t}}{n_{t}}f_{Y_t}(y)=\frac{1}{y^{\gamma^*+1}\lambda}.
\end{equation}
Thus, asymptotically the density has the same behaviour as the tail of the limiting result given in Corollary \ref{tailcor}, but with a different amplitude.



 However despite this agreement the densities given by \eqref{dens} for specific wild-type growth function differ significantly compared with stochastic mutant proliferation.  Letting $Y_{t}^{\mathrm{Stoch}}$ be the clone size distribution with stochastic mutant growth and $Y_{t}^{\mathrm{Det}}$ be its deterministic approximation specified by \eqref{dens}, we may quantify the approximation error, at least for the moments, by the following theorem, whose proof can be found in Appendix \ref{altproofs}.
\begin{theorem}
\label{thmcums}
As $t\rightarrow \infty$
\begin{gather}
\frac{\mathbb{E}[(Y_{t}^{\mathrm{Stoch}})^{m}]}{\mathbb{E}[(Y_{t}^{\mathrm{Det}})^{m}]}=\frac{m!}{\lambda^{m-1}}+O(e^{-\lambda t}).
\end{gather}
\end{theorem}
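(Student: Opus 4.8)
The plan is to compute the two $m$-th moments by conditioning on the clone initiation time $T$, whose density is $n_\tau/a_t$ on $[0,t]$, and then to isolate the common leading term. For the deterministic approximation a clone born at $\tau$ has size exactly $e^{\lambda(t-\tau)}$, so \eqref{dens} gives, with no approximation,
\begin{equation}
\mathbb{E}[(Y_t^{\mathrm{Det}})^m]=\frac{1}{a_t}\int_0^t e^{m\lambda(t-\tau)}n_\tau\,d\tau=:\frac{I_m(t)}{a_t}.
\end{equation}
For the stochastic clone, conditioning on $T=\tau$ yields a birth-death population observed at time $t-\tau$, so writing $M_m(u):=\mathbb{E}(Z_u^m)$ and using \eqref{SinglecloneGF} we get $\mathbb{E}[(Y_t^{\mathrm{Stoch}})^m]=\tfrac{1}{a_t}\int_0^t M_m(t-\tau)\,n_\tau\,d\tau$. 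Hence everything reduces to the large-$u$ form of $M_m(u)$ and to comparing the two integrals.

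First I would establish the birth-death moment asymptotics
\begin{equation}
M_m(u)=\frac{m!}{\lambda^{m-1}}\,e^{m\lambda u}+O\!\left(e^{(m-1)\lambda u}\right),\qquad u\to\infty.
\end{equation}
Two ingredients give this. Applying the generator of the process from \eqref{FK} to $k\mapsto k^m$ produces the recursion $\frac{d}{du}M_m=m\lambda M_m+\sum_{i=0}^{m-2}\binom{m}{i}\,[1+\beta(-1)^{m-i}]\,M_{i+1}$, whence by induction each $M_m(u)$ is a polynomial in $e^{\lambda u}$ of degree $m$ with subleading part $O(e^{(m-1)\lambda u})$ (no resonance, since the homogeneous solution is $e^{m\lambda u}$ while the forcing has degree $\le m-1$). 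The top coefficient is pinned down by the martingale limit $W=\lim_{u\to\infty}Z_u e^{-\lambda u}$ of \cite{Athreya:2004}: since the extinction probability is $\beta$ and $W$ is exponential with rate $\lambda$ on survival, $\mathbb{E}(W^m)=(1-\beta)\,m!/\lambda^m=m!/\lambda^{m-1}=\lim_u M_m(u)e^{-m\lambda u}$. Note $m=1$ is exact, $M_1(u)=e^{\lambda u}$, which is why the first moments agree identically.

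Substituting this expansion and factoring out the leading term gives
\begin{equation}
\frac{\mathbb{E}[(Y_t^{\mathrm{Stoch}})^m]}{\mathbb{E}[(Y_t^{\mathrm{Det}})^m]}=\frac{m!}{\lambda^{m-1}}+O\!\left(\frac{I_{m-1}(t)}{I_m(t)}\right),
\end{equation}
since the error contributes $\tfrac{1}{a_t}\int_0^t O(e^{(m-1)\lambda(t-\tau)})n_\tau\,d\tau=O(I_{m-1}(t)/a_t)$. It therefore remains to prove
\begin{equation}
\frac{I_{m-1}(t)}{I_m(t)}=e^{-\lambda t}\,\frac{\int_0^t e^{-(m-1)\lambda\tau}n_\tau\,d\tau}{\int_0^t e^{-m\lambda\tau}n_\tau\,d\tau}=O(e^{-\lambda t}).
\end{equation}

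This last estimate is the crux, and the main obstacle. The prefactor $e^{-\lambda t}$ is explicit, so what is needed is that the remaining quotient of integrals stays bounded as $t\to\infty$. Here Assumption \ref{ass1} and Lemma \ref{limform} enter: changing variables to $v=t-\tau$ and using $n_{t-v}/n_t\to e^{-\delta^* v}$, the quotient is controlled once $e^{-j\lambda\tau}n_\tau$ is integrable on $[0,\infty)$, i.e.\ once the wild-type grows strictly slower than the clones, $\delta^*<\lambda$ (the fit-mutant regime $\gamma^*<1$). The delicacy is that the $I_j(t)$ are dominated by the late initiation times $\tau\approx t$, precisely where the regular-variation approximation $n_{t-v}\approx n_t e^{-\delta^* v}$ must be justified; a naive pointwise passage to the limit is not enough, so a dominated-convergence / Karamata-type argument (as in \cite{Bingham:1987}, underlying Lemma \ref{limform}) is required to turn the pointwise ratio into the needed $O(1)$ bound on the integral quotient. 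Once that is in place the theorem follows immediately.
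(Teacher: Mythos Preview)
Your approach is the paper's: write both $m$th moments as $\tfrac{1}{a_t}\int_0^t n_\tau(\cdot)\,d\tau$, insert the birth--death moment expansion $\mathbb{E}(Z_u^m)=\tfrac{m!}{\lambda^{m-1}}e^{m\lambda u}+O(e^{(m-1)\lambda u})$, and take the ratio. The only difference is how that expansion is obtained: the paper reads it off an exact polylogarithm identity (Lemma~\ref{BDlemma}), whereas you use the generator recursion for the moments together with the martingale limit $W$ to pin down the leading coefficient. Both are valid and give the same thing.

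Where you are more careful than the paper is the final step. The paper simply writes ``hence using Lemma~\ref{BDlemma} we have the result'' and never checks that the remainder $O(I_{m-1}(t)/I_m(t))$ is actually $O(e^{-\lambda t})$. You are right that this needs a hypothesis on $n_\tau$: without it the claim is false. For example, with $n_\tau=e^{\delta\tau}$, $\beta=0$, $\lambda=1$, $\delta=3$, $m=2$, one computes directly that the ratio tends to $3/2$, not to $m!/\lambda^{m-1}=2$. So the $O(e^{-\lambda t})$ error is only valid under $\delta^*<(m-1)\lambda$; your condition $\delta^*<\lambda$ secures it for every $m\ge 2$.

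Your closing paragraph, however, is more tangled than necessary and partly backwards. Once $\delta^*<(m-1)\lambda$, both $J_{m-1}(t)=\int_0^t n_\tau e^{-(m-1)\lambda\tau}d\tau$ and $J_m(t)$ converge to finite positive limits (this is exactly the $\delta^*<\eta$ case of Lemma~\ref{subexp1}(ii)), so their quotient is trivially $O(1)$ and $I_{m-1}/I_m=e^{-\lambda t}J_{m-1}/J_m=O(e^{-\lambda t})$. No dominated convergence or Karamata argument is needed. Also, in this regime the integrand $n_\tau e^{-j\lambda\tau}$ is integrable, so the mass of $I_j$ sits at \emph{small} $\tau$ (old clones), not at $\tau\approx t$ as you wrote; that is precisely why the quotient stays bounded.
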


\subsection{Time-dependent rate parameters}
Some authors \cite{Houchmandzadeh:2015,Tomasetti:2012} have previously considered the case where all rates in the system are multiplied by a time-dependent function, say $z(\tau)$. This is relevant in the scenario where both the wild-type and mutant populations have their growth restricted simultaneously by environmental factors, for example exposure to a chemotherapeutic agent. We observe that under a change of timescale this system is equivalent to our setting with exponential wild-type growth. This is due to the following argument.

 In this setting the wild-type population is governed by
\begin{gather}
\label{TDode}
\frac{d n_{\tau}}{d \tau}=\lambda z(\tau)n_{\tau}.
\end{gather}
Mutant clones are now initiated at a rate $\mu z(\tau) n_{\tau}$.  Let $\widehat{Z_{t}}$ be the size of a mutant population governed by the birth-death process with time-dependent rates. Once initiated, the size distribution obeys the forward Kolmogorov equation for time-dependent stochastic mutant proliferation 
\begin{equation} \label{TDFK}
\begin{split}
\partial_{t}  \mathbb{P}(\widehat{Z_t}=k) & =  \alpha z(t)(k-1)\mathbb{P}(\widehat{Z_t}=k-1)\\
 & + \beta z(t)(k+1)  \mathbb{P}(\widehat{Z_t}=k+1)-(1+\beta) z(t)k \mathbb{P}(\widehat{Z_t}=k).
\end{split}
\end{equation}
If we let
\begin{equation}
F(\tau)=\int_{0}^{\tau}z(s)ds
\end{equation}
then under a new timescale, $\tau'=F^{-1}(\tau)$ , the mutant clone initiation will occur at a rate $\mu n_{\tau'}$. Further, using the chain rule to express \eqref{TDode} and \eqref{TDFK} in terms of $\tau'$ we see that $n_{\tau'}=e^{\lambda \tau'}$ and that the forward Kolmogorov equation \eqref{TDFK} becomes \eqref{FK}. Thus, under a time-rescaling, all dynamics are equivalent to the system with exponential wild-type growth and stochastic mutant proliferation with constant birth and death rates, as studied in this article or in \cite{Keller:2015}.

\subsection{Poisson process characterisation of tail}

Complementing Corollary \ref{tailcor} in Section \ref{longtime}, following \cite{Tavare:1987}, we can also describe the size distribution for large clones at long times via a Poisson process in the following way. Let $(Z^{(i)}(t))_{i\geq1}$ be independent copies of the birth-death process as in Section \ref{theory} and $(T_{i})_{i\geq1}\subset(0,\infty)$ be the points of a of Poisson process with intensity $\mu n_{\tau}$, for $\tau\geq 0$. The $T_{i}$ represent the clone arrival times and so $K_t$ is the number of $T_{i}$ less than or equal to $t$.

Let us consider the size of the first clone. By known results about the large time behaviour of the birth-death process \cite{Athreya:2004}, as $t\rightarrow\infty$,
\begin{equation}
e^{-\lambda t}Z^{(1)}(t-T_{1})=e^{-\lambda T_{1}}e^{-\lambda(t-T_{1})}Z^{(1)}(t-T_{1})\rightarrow e^{-\lambda T_{1}}W_{1} \mbox{ a.s.   }
\end{equation}
The distribution of the limiting random variable $W_{1}$ is composed of a point mass at $0$ and an exponential random variable, precisely
\begin{equation}
\pr(W_{1}\leq x)=\beta+\lambda (1- e^{-\lambda x}),\quad x\geq 0.
 \end{equation} 
Analogously, with the details given in \cite{Tavare:1987} (Theorem 3), the limiting behaviour of the time-ordered clone sizes is given by
\begin{equation}
\lim_{t\rightarrow\infty} e^{-\lambda t}(Z^{(i)}(t-T_{i}))_{i\geq1}=(e^{-\lambda T_{i}}W_{i})_{i\geq1}\mbox{ a.s.}
\end{equation}
where $W_{1}$ is as before and all $W_{i}$ are \textit{iid}. The random sequence $(e^{-\lambda T_{i}}W_{i})_{i\geq1}$ takes non-negative real values, however if we restrict our attention to only the positive elements (that is clones that do not die), then these can be taken to be points from a non-homogeneous Poisson process. More precisely, the set $\{\sigma_{j}\}_{j\geq 1}\coloneqq\{e^{-\lambda T_{i}}W_{i}\}_{i\geq1}\setminus \{0\}$ are the points (in some order) from a Poisson process on $(0,\infty)$ with mean measure
\begin{equation}\label{genmean}
m(x,\infty)=\mu\int_{x}^{\infty}n_{\lambda^{-1}\log(s/x)}\frac{e^{-\lambda s}}{s}\,ds,\quad x>0.
\end{equation}
The proof of the above only requires minor modification from that of Theorem 4 in \cite{Tavare:1987}.

The Poisson process description of the large clones, at large times can also offer insight into further properties of the system, including links to the Poisson-Dirichlet distribution, see \cite{Tavare:1987,Durrett:2015}. With regards to the present article, the interesting point is that for fixed $\epsilon>0$, as the number of $\sigma_{j}>\epsilon$ is finite almost surely, we may sample unformly from this set (i.e. $\{\sigma_{j}\}_{j\geq1}\cap(\epsilon,\infty)$) and construct a random variable $Y_{\epsilon}$ with distribution
\begin{equation}
\pr(Y_{\epsilon}>x)=\frac{m(x,\infty)}{m(\epsilon,\infty)},\quad x\geq \epsilon
\end{equation}
where $m(x,\infty)$ is as in \eqref{genmean}. The new variable $Y_{\epsilon}$ can be related to the previously considered random variable $Y_t$ by the following result, whose proof is contained in Appendix \ref{altproofs}.
\begin{theorem}\label{Poislink}
For $\epsilon>0$, with $Y_{\epsilon}$ as above,
\begin{equation}
\lim_{t\rightarrow\infty}\pr(Y_{t}e^{-\lambda t}>x|Y_{t}e^{-\lambda t}>\epsilon)=\pr(Y_{\epsilon}>x),\quad x\geq \epsilon.
\end{equation}
\end{theorem}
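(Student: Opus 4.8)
The plan is to reduce the claimed conditional limit to a ratio of two integrals and evaluate each by dominated convergence. Since a generic clone conditioned on its initiation time $T=\tau$ is distributed as the birth-death process $Z_{t-\tau}$, conditioning on $T$ through the density \eqref{timedens} (equivalently, summing the single-clone mass function \eqref{Singleclonepmf} over the relevant tail) gives, for any threshold $x>0$,
\begin{equation}
\pr(Y_t e^{-\lambda t}>x)=\frac{1}{a_t}\int_0^t n_\tau\,\pr\!\left(Z_{t-\tau}>xe^{\lambda t}\right)d\tau .
\end{equation}
Writing the conditional probability as $\pr(Y_t e^{-\lambda t}>x)/\pr(Y_t e^{-\lambda t}>\epsilon)$ cancels the common prefactor $1/a_t$, so it suffices to find, for each fixed $x$, the $t\to\infty$ limit of the numerator $I_t(x)\coloneqq\int_0^t n_\tau\pr(Z_{t-\tau}>xe^{\lambda t})\,d\tau$; the answer will then be $I_\infty(x)/I_\infty(\epsilon)$.

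First I would obtain a closed form for the birth-death tail. Summing the geometric part of \eqref{BDpmf} yields $\pr(Z_s>y)=(1-\beta/\mathcal{S}_s)\,\mathcal{S}_s^{-\floor{y}}$, with $\mathcal{S}_s$ as in \eqref{ratio}. Inserting $s=t-\tau$ and $y=xe^{\lambda t}$ makes the integrand explicit. For each fixed $\tau$, as $t\to\infty$ we have $s=t-\tau\to\infty$, so $1-\beta/\mathcal{S}_s\to\lambda$, and since $\mathcal{S}_s-1=\lambda e^{-\lambda s}/(1-e^{-\lambda s})\sim\lambda e^{-\lambda s}$ the exponent obeys $\floor{xe^{\lambda t}}\log\mathcal{S}_{t-\tau}\to\lambda x e^{\lambda\tau}$; hence the integrand converges pointwise to $n_\tau\,\lambda e^{-\lambda x e^{\lambda\tau}}$. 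A change of variable $s=xe^{\lambda\tau}$ identifies $\lambda\int_0^\infty n_\tau e^{-\lambda x e^{\lambda\tau}}d\tau$ precisely with $m(x,\infty)/\mu$ from \eqref{genmean}, which is the target; note the harmless $O(1)$ floor correction in the exponent washes out because $\mathcal{S}_{t-\tau}\to1$.

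The crux is justifying the exchange of limit and integral, and here the region $\tau$ near $t$ (equivalently small $s$, where the asymptotic equivalence above fails and $n_\tau$ is largest) is the delicate part. The elementary inequality $\log\mathcal{S}_s\geq(\mathcal{S}_s-1)/\mathcal{S}_s=\lambda e^{-\lambda s}/(1-\beta e^{-\lambda s})\geq\lambda e^{-\lambda s}$ gives $\mathcal{S}_s^{-\floor{xe^{\lambda t}}}\leq e^{-(xe^{\lambda t}-1)\lambda e^{-\lambda s}}$, and since $e^{-\lambda(t-\tau)}\leq1$ on $[0,t]$ this yields the $t$-uniform domination $n_\tau\pr(Z_{t-\tau}>xe^{\lambda t})\leq e^{\lambda}n_\tau e^{-\lambda x e^{\lambda\tau}}$. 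Under Assumption \ref{ass1}, Lemma \ref{limform} forces $n_\tau$ to grow at most exponentially, so the dominating function $e^{\lambda}n_\tau e^{-\lambda x e^{\lambda\tau}}$, pitting double-exponential decay against exponential growth, is integrable on $[0,\infty)$. Dominated convergence then gives $I_t(x)\to m(x,\infty)/\mu$, and forming the ratio with the same statement at $\epsilon$ completes the proof, since $\pr(Y_\epsilon>x)=m(x,\infty)/m(\epsilon,\infty)$. I expect the only genuine obstacle to be this uniform domination near $\tau=t$; the remaining pointwise limit and change-of-variable steps are routine bookkeeping.
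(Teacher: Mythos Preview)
Your proof is correct and follows essentially the same route as the paper's: both write the conditional probability as a ratio of integrals of $n$ against the explicit birth--death tail $(1-\beta/\mathcal{S}_s)\mathcal{S}_s^{-\lfloor xe^{\lambda t}\rfloor}$, pass to the limit by dominated convergence, and identify the result with $\mu^{-1}m(x,\infty)$ via the substitution $s=xe^{\lambda\tau}$. The only cosmetic differences are that the paper changes variables before applying dominated convergence and splits off the $\beta/\mathcal{S}_s$ term as a separate integral, whereas you bound $1-\beta/\mathcal{S}_s\le1$ and dominate directly in the $\tau$-variable using $\log\mathcal{S}_s\ge\lambda e^{-\lambda s}$.
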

Of note is the reappearance of power-law behaviour with a cut-off in the density of $Y_{\epsilon}$. For example in the constant wild-type case, $n_{\tau}=1$, the density, using \eqref{genmean}, is given by
\begin{equation}
f_{Y_{\epsilon}}(x)=\frac{d}{dx}\pr(Y_{\epsilon}\leq x)=\frac{e^{-\lambda x}}{x\Gamma(0,\lambda \epsilon)},\quad x\geq \epsilon.
\end{equation}
For exponential growth with neutral mutants, $n_{\tau}=e^{\lambda \tau}$,
\begin{equation}
f_{Y_{\epsilon}}(x)=\frac{e^{-\lambda x}}{x^2 }(1+\lambda x)\epsilon e^{\lambda \epsilon},\quad x\geq \epsilon.
\end{equation}
Note that the exponents in the power-law terms is equal to that given in Corollary \ref{tailcor}, indicating the two approaches are complimentary.

\section{Discussion}\label{discussion}

 In this study we focus on the size distribution for mutant clones initiated at a rate proportional to the size of the wild-type population. The size of the wild-type population is dictated by a generic deterministic growth function and the mutant growth is stochastic. This shifts the focus from previous studies which have mostly been concerned with exponential, or mean exponential, wild-type growth, and considered the total number of mutants. Results for the total number of mutants are, however, easily obtained from the clone size distribution.

The special cases of exponential, power-law and logistic wild-type growth were treated in detail, due to their extensive use in models for various applications. Utilising a generating function centred approach, exact time-dependent formulas were ascertained for the probability distributions in each case. Regardless of the growth function, the mass function is monotone decreasing and the distribution has a cut-off for any finite time. 
This cut-off goes to infinity for large times and is often enormous in practical applications, hence we focused on the approach to the cut-off.

We found that the clone size distribution behaves quite distinctly for exponential-type versus subexponential wild-type growth. Although the probability of finding a clone of any given size stays finite as $t\to\infty$ for exponential-type growth, it tends to zero for subexponential type.
Despite these differences, with a proper scaling, 
for a large class of growth functions we proved that the clone size distribution has a universal long-time form. This long-time form possesses a power-law, ``fat'' tail which decays as $1/k$ for subexponential wild-type growth, but faster for exponential-type growth. This can be intuitively understood as the tail distribution represents clones that arrive early, and the chance that a clone is initiated early in the process is larger for a slower growing wild-type function.
Hence we expect a ``fatter'' tail in the subexponential case.

Note that although we consider subexponential wild-type growth, surviving mutant clones will grow exponentially for large time, which can be unrealistic in some situations. Stochastic growth which accounts for environmental restrictions, for instance the logistic branching process, introduces further technical difficulties and is left for future work. We do note that, despite the drawbacks of deterministic mutant growth as discussed in Section \ref{alt}, when both the wild-type and mutant populations grow deterministically as $\tau^{\rho}$, it is easy to see that for large times the clone size distribution still displays a power-law tail,
$
 \lim_{t\to\infty} t f_{Y_t}(y) = \frac{\rho+1}{\rho}y^{1/\rho-1}.
$

An underlying motivation for this work is the scenario of primary tumours spawning metastases in cancer. We test our hypothesis regarding a power-law tail in metastasis size distributions by analysing empirical data. For 19 of 21 data-sets the power-law distribution is deemed more likely than an exponentially decaying distribution. The exponent of the power-law decay was estimated in each case and found to lie between -1 and -2. Interpreting this in light of our theory, either the primary tumour had entered a subexponential growth phase or, if one assumes exponential primary growth, the metastatic cells had a fitness advantage compared to those in the primary. Either way we can conclude that, for the majority of patients, the metastases grew faster than the primary tumour.  

\section*{Acknowledgments}
We thank Peter Keller,  Paul Krapivsky, Martin Nowak, Karen Ogilvie, Bartlomiej Waclaw and Bruce Worton for helpful discussions. MDN acknowledges support from EPSRC via a studentship.

\appendix

 \section{Special functions, definitions and requisite results}\label{specialfunctions}
 
 Required definitions and identities taken from \cite{NIST:DLMF} unless otherwise stated.

\indent  With $s,z\in\mathbb{C}$ the polylogarithm of order $s$ is defined as
\begin{align}
\mathrm{Li}_{s}(z)=\sum_{k\geq 1}\frac{z^{k}}{k^{s}}.
\end{align}
Note that $\mathrm{Li}_{1}(z)=-\log(1-z)$. A required identity (from \cite{Weisstein:2015}) is
\begin{gather}\label{poly1}
\mathrm{Li}_{-n}(z)=\sum_{k=0}^{n}k! S(n+1,k+1)\bigg(\frac{z}{1-z}\bigg)^{k+1}
\end{gather}
for $n\in\mathbb{N}$. Here $S(n,k)$ are the Stirling numbers of the second kind.

Gauss's hypergeometric function also appears and for complex $a,b,c,z$ is defined by the power series
\begin{gather}
\Fze{a,b}{c}z=\sum_{k\geq0}\frac{(a)_{k}(b)_{k}}{(c)_{k}}\frac{z^{k}}{k!}\quad \text{for} \quad |z|<1,
\end{gather}
and by analytic continuation elsewhere. Here $(a)_{k}$ denotes the Pochhammer symbol or rising factorial, that is
\begin{gather}
(a)_{k}=\frac{\Gamma(a+k)}{\Gamma(a)}=a(a+1)(a+2)\cdots(a+k-1).
\end{gather}
 Some required identities for the hypergeometric function are:
 \begin{gather}
 \label{hypgeom1}
 \Fze{a,b}{c}z=(1-z)^{-b}\Fze{c-a,b}{c}{\frac{z}{z-1}},
 \end{gather}

\begin{equation}\label{hypgeom3}
\Fze{1,b}{c}z=1+\frac{b}{c}z\Fze{1,b+1}{c+1}z,
\end{equation}

\begin{equation}\label{hypgeom4}
\Fze{1,1}{2}z=-\frac{\log(1-z)}{z},
\end{equation}
and the following connection can be made to the incomplete beta-function
\begin{equation}\label{hypgeom5}
\frac{z^{a}}{a}\Fze{a,1-b}{a+1}z=B_{z}(a,b)=\int_{0}^{x}t^{a-1}(1-t)^{b-1}\,dt.
\end{equation}

For any analytic function $f(z)=\sum_{n\geq 0}f_{n}z^{n}$, we denote the n$t$h coefficient as
$
[z^{n}]f(z)=f_{n}.
$
\begin{theorema}[\cite{Flajolet:2009}: Exponential Growth Formula]
  \label{Flajoletthm0}
If $f(z)$ is analytic at $0$ and $R$ is the modulus of a singularity nearest the origin in the sense that 
$
R:=\sup\{r\geq 0 |\text{ f is analytic in }|z|<r\}.
$
Then the coefficient $[z^{n}]f(z)$ satisfies
$
f_{n}=R^{-n}\Theta(n)
$
where $\limsup_{n}\sqrt[n]{|\Theta(n)|}=1$.
\end{theorema}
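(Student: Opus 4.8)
The plan is to recognise this statement as nothing more than the Cauchy--Hadamard theorem combined with the classical identification of the radius of convergence of a power series with the distance from its centre to the nearest singularity. Since $f$ is analytic at $0$, its Taylor series $\sum_{n\ge 0} f_{n} z^{n}$ has some radius of convergence $\rho$, and the first task is to show that $\rho = R$.

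To establish $\rho = R$ I would argue both inequalities using standard complex analysis. By Taylor's theorem for holomorphic functions, the series converges to $f$ throughout the largest open disk centred at $0$ on which $f$ is analytic; as $f$ is analytic on $\{|z| < R\}$ by the definition of $R$, this yields $\rho \ge R$. Conversely, if $\rho > R$ held, then the sum of the series would furnish an analytic continuation of $f$ to $\{|z| < \rho\}$, contradicting the existence of a singularity at modulus $R$; hence $\rho \le R$. Together these give $\rho = R$.

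With this in hand I would invoke the Cauchy--Hadamard formula,
$$\frac{1}{R} = \limsup_{n\to\infty} |f_{n}|^{1/n},$$
and define $\Theta(n) := R^{n} f_{n}$, so that $f_{n} = R^{-n}\Theta(n)$ exactly as asserted. Then
$$\limsup_{n\to\infty} \sqrt[n]{|\Theta(n)|} = \limsup_{n\to\infty} R\,|f_{n}|^{1/n} = R \cdot \frac{1}{R} = 1,$$
which is the required subexponentiality of the factor $\Theta(n)$.

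There is no substantive analytic obstacle in this argument; its entire content is bookkeeping around the Cauchy--Hadamard formula. The only step demanding genuine care is the equality $\rho = R$, namely that the radius of convergence coincides with the distance to the nearest singularity rather than being strictly larger or smaller. This rests on Taylor's theorem in one direction and on the uniqueness of analytic continuation in the other, both of which are standard but worth stating explicitly given how the result is applied to the generating functions in Theorem \ref{cutoff}.
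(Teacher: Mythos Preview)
Your argument is correct and is exactly the standard proof of this fact: identify the radius of convergence with the distance to the nearest singularity, then read off the claim from the Cauchy--Hadamard formula. There is nothing to compare against, because the paper does not supply its own proof of this statement; it is quoted in Appendix~\ref{specialfunctions} as a background result from Flajolet and Sedgewick and then invoked in the proof of Theorem~\ref{cutoff}. Your write-up would serve perfectly well as a self-contained justification, and your closing remark that the application hinges on locating the nearest singularity of $\genclone(s)$ is precisely how the paper uses the result.
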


We utilise several results from \cite{Bingham:1987} on the theory of regularly varying functions which we now define.
\begin{definition}\label{regvar}\cite{Bingham:1987}
A Lebesgue measurable function $f:\mathbb{R}^+\mapsto\mathbb{R}$ that is eventually positive is \textit{regularly varying} (at infinity) if for some $\kappa\in \mathbb{R}$,
\begin{equation}
 \lim_{t\rightarrow \infty} \frac{f(tx)}{f(t)}=x^{\kappa}, \quad x>0.
\end{equation}
The notation $f\in RV_{\kappa}$ will be used and we will denote $f\in RV_{0}$ as \textit{slowly varying functions}.
 \end{definition}
\begin{theorema}[\cite{Bingham:1987}: Characterisation Theorem]\label{Karamata2}
Suppose $f:\mathbb{R}^{+}\mapsto \mathbb{R}$ is measurable, eventually positive, and
$
\lim_{t\rightarrow \infty} \frac{f(tx)}{f(t)}
$
 exists, and is finite and positive for all $x$ in a set of positive Lebesgue measure. Then, for some $\kappa \in \mathbb{R}$, 
 \begin{enumerate}[(i)]
 \item $ f\in RV_{\kappa}$.
 \item $f(y)=y^{\kappa}l(y)$ where $l\in RV_{0}$. 
\end{enumerate}
\end{theorema}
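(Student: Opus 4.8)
The plan is to extract the limiting function and prove that it must be a power. First I would set $S=\{x>0:\lim_{t\to\infty}f(tx)/f(t)\text{ exists, finite and positive}\}$ and write $g(x)$ for this limit on $S$; by hypothesis $S$ has positive Lebesgue measure. Since each map $x\mapsto f(tx)/f(t)$ is measurable and $g$ is their pointwise limit on $S$, $g$ is measurable on $S$. The central observation is a multiplicative Cauchy relation: writing
\begin{equation}
\frac{f(txy)}{f(t)}=\frac{f((tx)y)}{f(tx)}\cdot\frac{f(tx)}{f(t)},
\end{equation}
and letting $t\to\infty$ (so that $tx\to\infty$ as well), one obtains $g(xy)=g(x)g(y)$ whenever $x$, $y$ and $xy$ all lie in $S$.

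The hard part is upgrading this relation, valid only on the positive-measure set $S$, to an identity on all of $\mathbb{R}^{+}$, and then solving it. Here I would pass to logarithmic coordinates, setting $h(u)=\log g(e^{u})$, which turns the multiplicative equation into the additive Cauchy equation $h(u+v)=h(u)+h(v)$ on the image of $S$. The key measure-theoretic input is a Steinhaus-type argument: since $\log S$ has positive measure, its difference set contains an interval around the origin, which lets one propagate the additive relation and conclude that $h$ extends to a genuine additive function on $\mathbb{R}$. A measurable additive function is necessarily linear, so $h(u)=\kappa u$ for some $\kappa\in\mathbb{R}$, giving $g(x)=x^{\kappa}$. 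Combined with the extension step, this shows $\lim_{t\to\infty}f(tx)/f(t)=x^{\kappa}$ for every $x>0$, which is exactly $f\in RV_{\kappa}$ and establishes (i).

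Finally, for (ii) I would simply define $l(y)=f(y)/y^{\kappa}$ and verify directly from the definition that $l\in RV_{0}$: for each $x>0$,
\begin{equation}
\frac{l(tx)}{l(t)}=\frac{f(tx)}{f(t)}\cdot\frac{t^{\kappa}}{(tx)^{\kappa}}\longrightarrow x^{\kappa}\cdot x^{-\kappa}=1
\end{equation}
as $t\to\infty$, so $l$ is slowly varying and $f(y)=y^{\kappa}l(y)$ as claimed. I expect the genuine obstacle to be the measure-theoretic bootstrapping in the middle step — controlling the functional equation off the set $S$ and legitimising the passage to a linear solution — rather than the elementary algebra at either end; this is the technical heart of the Characterisation Theorem, and it is precisely where the theory developed in \cite{Bingham:1987} does the real work.
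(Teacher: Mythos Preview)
The paper does not prove this statement: it is quoted verbatim from \cite{Bingham:1987} in Appendix~\ref{specialfunctions} as one of several prerequisite results from the theory of regular variation, with no accompanying argument. So there is no ``paper's own proof'' to compare against.

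That said, your sketch is essentially the standard proof of the Characterisation Theorem as given in \cite{Bingham:1987}, and the outline is sound. One small sharpening: the factorisation
\[
\frac{f(txy)}{f(t)}=\frac{f((tx)y)}{f(tx)}\cdot\frac{f(tx)}{f(t)}
\]
actually shows more than you state --- if $x,y\in S$ then the right-hand side converges, hence so does the left, and therefore $xy\in S$ automatically. In other words $S$ is a multiplicative subsemigroup of $(0,\infty)$, and after taking logarithms $\log S$ is an additive subsemigroup of $\mathbb{R}$ of positive measure. The Steinhaus-type step is then that such a set must contain a half-line $[a,\infty)$, on which the additive Cauchy equation holds and forces linearity; this is slightly different from the ``difference set contains an interval'' formulation you invoke, though both rest on the same convolution idea. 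Your identification of this as the technical heart, and your derivation of~(ii) from~(i), are both correct.
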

\begin{propa}[\cite{Bingham:1987}: Proposition 1.3.6]\label{logprop}
For $f\in RV_{\kappa}$
$
\lim_{t\rightarrow\infty}\frac{ \log f(t)}{\log t }= \kappa.
$
\end{propa}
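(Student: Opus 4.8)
The plan is to reduce the statement to the case of a slowly varying function and then exploit the uniform slow oscillation of its logarithm. First I would invoke the Characterisation Theorem (Theorem~\ref{Karamata2}) to write $f(y)=y^{\kappa}\ell(y)$ with $\ell\in RV_{0}$. Taking logarithms and dividing by $\log y$ (legitimate for all large $y$, where $\log y>0$) gives
\[
\frac{\log f(y)}{\log y}=\kappa+\frac{\log \ell(y)}{\log y},
\]
so the whole statement follows once we show $\log\ell(y)=o(\log y)$, i.e.\ the proposition in the purely slowly varying case $\kappa=0$.

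For the slowly varying case I would pass to additive form by setting $h(v)=\log\ell(e^{v})$. Slow variation of $\ell$, namely $\ell(e^{v}x)/\ell(e^{v})\to 1$ for each fixed $x>0$, is precisely the statement that $h(v+u)-h(v)\to 0$ as $v\to\infty$ for each fixed $u$. The key input is that, because $\ell$ is measurable, this convergence is in fact uniform for $u$ in the compact interval $[0,1]$ (the Uniform Convergence Theorem for slowly varying functions, \cite{Bingham:1987}). Granting this, I would fix $\epsilon>0$ and choose $V$ so that $|h(v+u)-h(v)|<\epsilon$ for all $v\ge V$ and all $u\in[0,1]$.

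Then for any $w>V$ I would telescope in unit steps from $V$ up to $w$: writing $n=\lceil w-V\rceil$ and summing the increments $h(V+k+1)-h(V+k)$ for $k=0,\dots,n-2$ together with the final fractional step to $w$, each of these has base point $\ge V$ and step size in $[0,1]$, hence is bounded by $\epsilon$. This yields $|h(w)-h(V)|\le n\epsilon\le(w-V+1)\epsilon$. Dividing by $w$ and letting $w\to\infty$ gives $\limsup_{w\to\infty}|h(w)|/w\le\epsilon$, and since $\epsilon>0$ was arbitrary, $h(w)/w\to 0$. Substituting $t=e^{v}$ this is exactly $\log\ell(t)/\log t\to 0$, completing the argument.

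The main obstacle is not the telescoping estimate, which is elementary, but securing the uniform convergence on $[0,1]$: pointwise slow variation alone admits pathological, non-measurable solutions of the limiting additive relation for which the conclusion can fail, so measurability must enter in an essential way (via the Steinhaus/Baire-category argument underlying the Uniform Convergence Theorem). An alternative route would replace this step with Karamata's Representation Theorem, writing $\ell(x)=c(x)\exp\bigl(\int_{b}^{x}\epsilon(u)u^{-1}\,du\bigr)$ with $c(x)$ convergent to a positive limit and $\epsilon(u)\to 0$; after the substitution $u=e^{v}$ the integral becomes $\int\epsilon(e^{v})\,dv$, and a Cesàro-averaging lemma delivers the same $o(\log x)$ bound. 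Either way, the measurability hypothesis is precisely what rules out the pathologies.
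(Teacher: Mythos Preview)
Your argument is correct and follows a standard route: reduce to the slowly varying case via the Characterisation Theorem, pass to the additive form $h(v)=\log\ell(e^{v})$, invoke the Uniform Convergence Theorem to control increments on $[0,1]$, and telescope. The alternative you sketch via Karamata's Representation Theorem is equally valid and is in fact the proof given in \cite{Bingham:1987}.

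Note, however, that the paper does not supply its own proof of this proposition: it is quoted in Appendix~\ref{specialfunctions} as a requisite result from \cite{Bingham:1987} and used as a black box (for instance in the proof of Lemma~\ref{limform}). So there is no paper-proof to compare against; you have simply filled in a proof the authors chose to cite rather than reproduce.
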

  \begin{theorema}[\cite{Bingham:1987}: Karamata's Theorem]\label{Karamata1}
     If $f\in RV_{\kappa}$, $X$ sufficiently large such that $f(y)$ is locally bounded in $[X,\infty)$, and $\kappa>-1$, then
\begin{equation}
\int_{X}^{y}f(s)\,ds\sim \frac{yf(y)}{\kappa+1}\quad \text{as} \quad y\rightarrow \infty.
\end{equation}
\end{theorema}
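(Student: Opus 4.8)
The plan is to prove the equivalent statement that
\begin{equation}
R(y):=\frac{1}{yf(y)}\int_X^y f(s)\,ds\longrightarrow \frac{1}{\kappa+1}\qquad(y\to\infty).
\end{equation}
The natural first move is to rescale the integration variable by $s=yu$, which turns this into
\begin{equation}
R(y)=\int_{X/y}^1 \frac{f(yu)}{f(y)}\,du.
\end{equation}
The virtue of this substitution is that the integrand is now exactly the ratio controlled by regular variation: for each fixed $u>0$, Definition \ref{regvar} gives $f(yu)/f(y)\to u^{\kappa}$ as $y\to\infty$, while the lower limit $X/y\to0$.

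If the limit could be taken under the integral sign the result would be immediate, since
\begin{equation}
\int_0^1 u^{\kappa}\,du=\frac{1}{\kappa+1},
\end{equation}
which is finite precisely because $\kappa>-1$; this is where that hypothesis is essential. The entire difficulty is thus in justifying the interchange of limit and integral, and this is the step I expect to be the main obstacle.

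To control the interchange I would fix a small $\epsilon>0$ and split the range into $[\epsilon,1]$ and $(X/y,\epsilon)$. On the compact interval $[\epsilon,1]$ the convergence $f(yu)/f(y)\to u^{\kappa}$ holds uniformly, by the Uniform Convergence Theorem for regularly varying functions (a standard result in \cite{Bingham:1987} which precedes Karamata's Theorem), so this part converges to $\int_\epsilon^1 u^{\kappa}\,du$. For the piece near the origin I would invoke the Potter bounds: given $\delta>0$ with $\kappa-\delta>-1$, there exist a constant $C$ and a threshold so that $f(yu)/f(y)\le C\,u^{\kappa-\delta}$ whenever both arguments are large. Here the lower limit $X/y$ is what saves the argument, since on $(X/y,\epsilon)$ we have $yu=s\ge X$, so taking $X$ large makes the Potter bound valid throughout the range; as $u^{\kappa-\delta}$ is integrable on $(0,\epsilon)$ under $\kappa-\delta>-1$, dominated convergence bounds this contribution by a quantity tending to $\int_0^\epsilon u^{\kappa}\,du$, uniformly in $y$.

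Combining the two pieces and then letting $\epsilon\to0$ recovers $R(y)\to\int_0^1 u^{\kappa}\,du=1/(\kappa+1)$, which is the claim. The only genuinely delicate point is the domination near $u=0$, for which the Potter estimate and the lower limit $X/y$ are exactly matched; everything else follows from pointwise regular variation together with uniform convergence on compacta. An alternative route would use the Characterisation Theorem (Theorem \ref{Karamata2}) to write $f(y)=y^{\kappa}\ell(y)$ with $\ell$ slowly varying and reduce to the case $\kappa=0$, but this still relies on the same uniform-convergence and Potter-bound estimates, so I would favour the direct change-of-variables argument above.
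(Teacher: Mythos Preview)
The paper does not give a proof of this statement: Theorem~\ref{Karamata1} is quoted in Appendix~\ref{specialfunctions} as a requisite result from \cite{Bingham:1987}, alongside the Characterisation Theorem and the slowly-varying integral propositions, and is used as a black box in the proof of Lemma~\ref{subexp1}. There is therefore no in-paper argument to compare your proposal against.

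For what it is worth, your sketch is the standard proof one finds in \cite{Bingham:1987}: rescale $s=yu$ so the integrand becomes the regular-variation ratio, pass to the limit on $[\epsilon,1]$ via the Uniform Convergence Theorem, and control the tail $(X/y,\epsilon)$ with Potter bounds, using $\kappa>-1$ to pick $\delta$ with $\kappa-\delta>-1$ so that the dominating function $u^{\kappa-\delta}$ is integrable near $0$. Your observation that the lower limit $X/y$ is precisely what makes the Potter estimate applicable on the whole range (both arguments $\ge X$) is the correct way to handle that subtlety. The argument is sound; it simply is not something the present paper undertakes.
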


\begin{propa}[\cite{Bingham:1987}: Proposition 1.5.9.a]\label{slowprop}
Let $l\in RV_{0}$ and choose $X$ so that $l$ is locally integrable on $[X,\infty)$
. Then 
\begin{enumerate}[(i)]
\item $\int_{X}^{x}\frac{l(t)}{t}\,dt \in RV_{0} $ .
\item $\frac{1}{l(x)}\int_{X}^{x}\frac{l(t)}{t}\,dt\rightarrow \infty\text{ as } x\rightarrow \infty$.
\end{enumerate}
\end{propa}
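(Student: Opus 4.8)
The plan is to reduce both claims to statements about the additive version of slow variation via the exponential substitution, after which each part becomes a short integral estimate. Write $h(x)=\int_X^x \frac{l(t)}{t}\,dt$ and substitute $t=e^s$, so that with $L(s):=l(e^s)$ and $g(y):=h(e^y)=\int_{\log X}^{y} L(s)\,ds$ the two claims become, respectively, (i) $g(y+u)/g(y)\to 1$ for every fixed $u$ (which is exactly $h\in RV_0$, since $h(\lambda x)/h(x)=g(y+\log\lambda)/g(y)$ with $y=\log x$), and (ii) $g(y)/L(y)\to\infty$. The defining property $l(tx)/l(t)\to 1$ translates into $L(s+u)/L(s)\to 1$ for each fixed $u$, and --- this is the tool I would invoke --- by the Uniform Convergence Theorem for slowly varying functions from \cite{Bingham:1987} this convergence is uniform for $u$ in any compact set.

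I would prove (ii) first, as it is the logical heart. Fix $M>0$. Since $L(s)/L(y)\to 1$ uniformly for $s\in[y-M,y]$, for any $\varepsilon>0$ and all large $y$ we have $L(s)\ge(1-\varepsilon)L(y)$ there, whence
\[
g(y)\ \ge\ \int_{y-M}^{y}L(s)\,ds\ \ge\ (1-\varepsilon)\,M\,L(y).
\]
Thus $\liminf_{y\to\infty} g(y)/L(y)\ge(1-\varepsilon)M$, and letting $M\to\infty$ gives $g(y)/L(y)\to\infty$, which is (ii).

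Part (i) then follows from (ii). Take $u>0$ (the case $u<0$ is identical after applying the result with the roles of $y$ and $y+u$ reversed, i.e. replacing the ratio by its reciprocal). Then $g(y+u)-g(y)=\int_y^{y+u}L(s)\,ds\le u\sup_{s\in[y,y+u]}L(s)$, and uniform convergence gives $\sup_{s\in[y,y+u]}L(s)=L(y)(1+o(1))$. Dividing by $g(y)$,
\[
0\ \le\ \frac{g(y+u)-g(y)}{g(y)}\ \le\ u\,(1+o(1))\,\frac{L(y)}{g(y)}\ \longrightarrow\ 0
\]
by (ii), so $g(y+u)/g(y)\to 1$, establishing $h\in RV_0$.

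The only real obstacle is the passage to uniformity: the whole argument rests on knowing $L(s)/L(y)\to1$ uniformly on bounded $s$-windows, which is the Uniform Convergence Theorem for slowly varying functions. If one prefers to avoid quoting it, the same uniform control follows from the Karamata representation $l(x)=c(x)\exp\!\big(\int_a^x \tfrac{\eta(t)}{t}\,dt\big)$ with $c(x)\to c>0$ and $\eta(t)\to0$, which makes $L(s)=l(e^s)$ manifestly of the form (constant)$\times\exp(\int^s \eta(e^r)\,dr)$ with integrand tending to $0$. I would also stress why a separate proposition is needed at all: the integrand $l(t)/t$ lies in $RV_{-1}$, exactly the boundary index excluded from the quoted Karamata's Theorem (which requires $\kappa>-1$ and would predict $\int \sim x\cdot l(x)/x=l(x)$); at $\kappa=-1$ that asymptotic fails and is replaced precisely by the slow divergence $g(y)/L(y)\to\infty$ captured in (ii).
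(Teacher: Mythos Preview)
The paper does not prove this proposition: it is quoted in Appendix~A as a requisite result from \cite{Bingham:1987} (Proposition~1.5.9.a) and used as a black box in the proofs of Lemma~D.1 and Theorem~4.2. So there is no ``paper's own proof'' to compare against.

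That said, your argument is correct and is essentially the standard proof one finds in \cite{Bingham:1987}. The exponential substitution reducing to the additive picture, followed by the Uniform Convergence Theorem to control $L(s)/L(y)$ on bounded windows, is exactly the canonical route; your ordering (prove (ii) first, then deduce (i) from it) is clean. Your closing remark explaining why Karamata's Theorem with $\kappa>-1$ does not apply here is also apt and matches how the paper deploys this proposition: precisely at the boundary case $\delta^*=\eta$ in Lemma~D.1 and $\delta^*=\lambda$ in Theorem~4.2, where the integrand becomes slowly varying in the transformed variable.
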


\section{Proofs for Section \ref{Connection}}\label{appendtheory}
In this work we have fixed the birth rate to be one. Other works, for example \cite{Keller:2015}, use a birth-death process with birth rate $\alp'$ and death rate $\beta'$ under timescale $t'$. Then the timescale used in the present work is defined by $t=\alp' t'$. This in turn implies that all rates under $t$ are given by dividing the corresponding rate under $t'$ by $\alp'$, e.g. $\beta=\frac{\beta'}{\alp'}$.

\begin{lemmaa}\label{recursion}
Consider generating functions $F(s)=\sum_{n\ge 0}p_n s^n$ and $G(s)=\sum_{n\ge 0}q_n s^n$ where $F(s)=e^{G(s)}$. Then $p_0=e^{q_0}$ and for $n\ge 1$ the following recursion holds
$$
 np_n = \sum_{k=0}^{n-1} (n-k) p_k q_{n-k} \ .
$$
\end{lemmaa}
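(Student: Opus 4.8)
The plan is to convert the nonlinear (exponential) relation $F(s)=e^{G(s)}$ into a linear one by differentiation, and then read off the recursion by matching coefficients. First I would dispose of the base case by evaluating at $s=0$: since $F(0)=p_0$ and $e^{G(0)}=e^{q_0}$, we immediately obtain $p_0=e^{q_0}$.

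For the recursion, the key step is to differentiate both sides with respect to $s$. By the chain rule $F'(s)=G'(s)e^{G(s)}=G'(s)F(s)$, which replaces the exponential by a product and makes the identity amenable to coefficient extraction. I would then expand each side as a power series. On the left, $F'(s)=\sum_{n\ge 1}np_ns^{n-1}$, so that the coefficient of $s^{n-1}$ is simply $np_n$. On the right, I would form the Cauchy product of $F(s)=\sum_{k\ge 0}p_ks^k$ with $G'(s)=\sum_{m\ge 0}(m+1)q_{m+1}s^m$.

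Extracting $[s^{n-1}]$ from $F(s)G'(s)$ gives $\sum_{k+m=n-1}p_k(m+1)q_{m+1}$; re-indexing via $m=n-1-k$, so that $m+1=n-k$ and $q_{m+1}=q_{n-k}$, turns this into $\sum_{k=0}^{n-1}(n-k)p_kq_{n-k}$. Equating the coefficients of $s^{n-1}$ on the two sides then yields exactly $np_n=\sum_{k=0}^{n-1}(n-k)p_kq_{n-k}$, as claimed.

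The argument is essentially a routine computation, so I do not expect a genuine obstacle; the only points demanding a little care are the index bookkeeping in the convolution (the shift arises because differentiating $G$ absorbs one factor into $G'$) and, if one insists on an analytic rather than purely formal reading, a one-line remark that $F$ and $G$ are analytic on a common disc about the origin, justifying term-by-term differentiation. Working at the level of formal power series avoids the convergence issue altogether, which is the cleanest route here.
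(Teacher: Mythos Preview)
Your proof is correct and follows essentially the same approach as the paper: both start from $F'(s)=F(s)G'(s)$ and then extract coefficients. The only minor difference is that the paper obtains the coefficients by proving $F^{(n)}(s)=\sum_{k=0}^{n-1}\binom{n-1}{k}F^{(k)}(s)G^{(n-k)}(s)$ by induction and evaluating at $s=0$, whereas you go directly via the Cauchy product of $F$ and $G'$; your route is slightly more streamlined but the substance is identical.
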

\begin{proof}
Clearly $p_0=e^{q_0}$ from $F(0)=e^{G(0)}$. By differentiating $F(s)$ we obtain $F'(s)= F(s) G'(s)$, and in general
$$
 F^{(n)}(s) = \sum_{k=0}^{n-1} \binom{n-1}{k} F^{(k)}(s) G^{(n-k)}(s)
$$
which can be shown by induction using Pascal's formula for binomial coefficients. Evaluating the above equation at $s=0$ and using that $F^{(m)}(0)=m! p_n$ and $G^{(m)}(0)=m! q_n$ we arrive at the announced recursion. 
\end{proof}

\begin{proof}[Proof of Proposition \ref{smallarrive}]
Utilising generating functions,
\begin{equation}
\begin{split}
\mathbb{E}(s^{B_{t}}|B_{t}>0) &=\frac{\gentotal(s)-\gentotal(0)}{1-\gentotal(0)} 
=\frac{e^{\mathbb{E}(K_{t}) (\genclone(s)-1)}-e^{\mathbb{E}(K_{t}) (\genclone(0)-1)}}{1-e^{\mathbb{E}(K_{t}) (\genclone(0)-1)}}
\\
& =\frac{\mathbb{E}(K_{t}) (\genclone(s)-1)-(\genclone(0)-1))}{-\mathbb{E}(K_{t}) (\genclone(0)-1)}+O(\mathbb{E}(K_{t}) )
\\
& =\frac{\genclone(s)-\genclone(0)}{1-\genclone(0)}+O(\mathbb{E}(K_{t}) )=\mathbb{E}(s^{Y_{t}}|Y_{t}>0)+O(\mathbb{E}(K_{t}) ).
\end{split}
\end{equation}
 \end{proof}

\section{Proofs for Section \ref{clonesize}}\label{clonalappend}

    We derive the generating function for the clone size distribution for stochastic growth and power-law wild-type growth, $n_{\tau}=\tau^{\rho}$, given in \eqref{powergf}. From \eqref{SinglecloneGF} we have
 \begin{gather}
 \genclone(s)=\frac{\rho+1}{t^{\rho +1}}\int_{0}^{t}\tau^{\rho}\bigg(1-\frac{\lambda}{1-\xi e^{-\lambda (t-\tau)}}\bigg)\, d\tau
 =1-\frac{(\rho +1)\lambda}{t^{\rho +1}}\int_{0}^{t}\frac{\tau^{\rho}}{1-\xi e^{-\lambda (t-\tau)}}\,d\tau.
 \end{gather}
It is enough to show
 \begin{gather}\label{toshow}
  \int \frac{\tau^{\rho}}{1-\xi e^{-\lambda (t-\tau)}}\,d\tau=  \frac{\tau^{\rho+1}}{\rho+1}+\rho !
 \sum_{i=0}^{\rho}\frac{(-1)^{i}}{(\rho -i)!\lambda^{i+1}}\tau^{\rho -i}\mathrm{Li}_{i+1}(\xi e^{-\lambda (t-\tau)})+C
 \end{gather}
 where $C$ is a constant of integration. This may be derived by a binomial expansion of the denominator and an identity for the incomplete gamma function, but for succinctness we simply differentiate both sides with respect to $\tau$.  First we note that
\begin{gather}
z \partial_{z} \mathrm{Li}_{i}(z)=\mathrm{Li}_{i-1}(z).
\end{gather}
Now differentiating the right hand side of \eqref{toshow} yields
\begin{gather}
\tau^{\rho} +\frac{\tau^{\rho}\lambda \mathrm{Li}_{0}(\xi e^{-\lambda (t-\tau)})}{\lambda}+
\rho ! \sum_{j=0}^{\rho-1} \frac{(-1)^{j}(\rho-j)\tau^{\rho-j-1}\mathrm{Li}_{j+1}(\xi e^{-\lambda (t-\tau)})}{(\rho-j)! \lambda^{j+1}} +
 \notag
\\ \rho !\sum_{i=1}^{\rho}\frac{(-1)^{i} \tau^{\rho-i}\lambda \mathrm{Li}_{i}(\xi e^{-\lambda(t-\tau)})}{(\rho-i)!\lambda^{i+1}}=\tau^{\rho}(1+\mathrm{Li}_{0}(\xi e^{-\lambda (t-\tau)}))
\end{gather}
where the equality follows by the telescoping nature of the sums. Noting that $(1-\xi e^{-\lambda (t-\tau)})^{-1}=\mathrm{Li}_{0}\left(\xi e^{-\lambda (t-\tau)}\right)+1
$ and applying the limits of the integral gives the desired result.
 
To determine the mass function, we seek a power series representation of the generating function. We focus on the $\beta=0$ case and thus $\xi=\frac{s}{s-1}$. By the definition of the polylogarithm and the binomial theorem
\begin{gather}
\mathrm{Li}_{i}\left(\frac{s}{s-1}\right)=\sum_{k\geq1}\sum_{j\geq 0}{k+j-1\choose j}(-1)^{k}\frac{s^{j+k}}{k^i}.
\intertext{Reindexing the sum we obtain}
\mathrm{Li}_{i}\left(\frac{s}{s-1}\right)=\sum_{m\geq 1}s^{m}\sum_{k=1}^{m}{m-1\choose m-k}\frac{(-1)^k}{k^i}\quad \text{and}\quad \mathrm{Li}_{i}\left(\frac{se^{-\alpha t}}{s-1}\right)=\sum_{m\geq 1}s^{m}\sum_{k=1}^{m}{m-1\choose m-k}\frac{(-e^{-\alpha t})^k}{k^i}.
\end{gather}
Applying this to the polylogarithmic terms in $\genclone(s)$, and noting
\begin{gather}
\sum_{k=1}^{m}{m-1\choose m-k}\frac{(-1)^k}{k^i}=\frac{1}{m}\sum_{k=1}^{m}{m\choose k}\frac{(-1)^k}{k^{i-1}}\quad \text{and}\quad \sum_{k=1}^{m}{m\choose k}(-1)^k=-1,
\end{gather}
yields  \eqref{powerpmf} as the desired mass function.

\begin{proof}[Proof of Proposition \ref{pmfshape}]
   Using \eqref{Singleclonepmf}, we see that for $k\geq 1$
\begin{align}
\mathbb{P}(Y_{t}=k+1)-\mathbb{P}(Y_{t}=k)=\frac{1}{a_t}\int_{0}^{t}n_{t-\tau}\left[\mathbb{P}(Z_{\tau}=k+1)-\mathbb{P}(Z_{\tau}=k)\right]\,d\tau.
\end{align}
Now from \eqref{BDpmf} it is clear that the integrand is negative for finite, positive $\tau$ giving the result.
\end{proof}

\begin{proof}[Proof of Theorem \ref{cutoff}]
The result is an application of Theorem \ref{Flajoletthm0}. We seek the closest to the origin singularity of
\begin{equation}
I_{t}(s)=\int_{0}^{t}n_{\tau}\mathcal{Z}_{t-\tau}(s)\,d\tau=\int_{0}^{t}n_{t-\tau}\mathcal{Z}_{\tau}(s)\,d\tau
\end{equation}
which is claimed to be at $\mathcal{S}_{t}$. Indeed, we note that for $|s|<\mathcal{S}_{t}$, $\mathcal{Z}_{\tau}(s)$ is analytic for all $\tau$ and as $n_{\tau}$ is continuous we can conclude that the $I_{t}(s)$ is analytic in this region also (Chapter 2, Theorem 5.4 in \cite{Stein:03}). As $n_{\tau}>0$ there exists $\epsilon>0$ such that
\begin{equation}
|I_{t}(s)|\geq \epsilon \bigg|\int_{0}^{t}\mathcal{Z}_{\tau}(s)\,d\tau\bigg|=\epsilon \bigg|\beta t+\log\bigg[\frac{\lambda}{1-\beta e^{-\lambda t}-s(1-e^{-\lambda t})}\bigg]\bigg|.
\end{equation}
The rightmost expression can be seen to have closest to origin singularity at $\mathcal{S}_{t}$ and as $a_{t}\genclone(s)=I_{t}(s)$, by Theorem \ref{Flajoletthm0}, we can conclude Theorem \ref{cutoff}.
 
\end{proof}

\section{Proofs for Section \ref{longtime}}\label{unproofs}

\begin{proof}[Proof of Lemma \ref{limform}]

Choose $x\geq 0$ and let $y=e^{t},\,c=e^{-x}$. Consider the function $g(z)=n_{\log(z)}$. Then Theorem \ref{Karamata2}(i) yields
\begin{equation}
\lim_{t\rightarrow\infty}\frac{n_{t-x}}{n_{t}}=\lim_{y\rightarrow\infty}\frac{g(yc)}{g(y)}=c^{\delta^*}=e^{-x\delta^*}.
\end{equation}
Further, Proposition \ref{logprop} gives
\begin{equation}
\lim_{y\rightarrow\infty}\frac{\log g(y)}{\log y }=\lim_{t\rightarrow \infty}\frac{\log(n_{t})}{t}= \delta^*\geq 0.
\end{equation}
The non-negativity of $\delta^*$ is dictated by the monotone increasing nature of $n_{\tau}$.
\end{proof}

To prove Theorem \ref{meanlim} we require the following:

\begin{lemmaa}\label{subexp1}
Let $s_{1}(t),\,s_{2}(t)$ be subexponential functions, then
\begin{enumerate}[(i)]
 \item $n_{t}=e^{t\delta^* }s_{1}(t)$.
 \item For $\eta\geq 0,\,C>0$ 
 \begin{equation}
 \int_{0}^{t}n_{\tau}e^{-\eta \tau}\,d\tau \sim\begin{cases}
  \frac{e^{(\delta^*-\eta ) t}s_{1}(t)}{\delta^*-\eta} & \eta<\delta^*  \\
  s_{2}(t) & \delta^*=\eta \\
  C & \delta^*<\eta
  \end{cases}
  \quad\text{as } t\rightarrow\infty.
 \end{equation}
 \end{enumerate}
\end{lemmaa}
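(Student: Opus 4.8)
The plan is to reduce both parts to standard results on regularly varying functions by passing to the variable $u=e^{\tau}$, which converts the exponential growth of $n_{\tau}$ into power-law behaviour. The key observation, already exploited in the proof of Lemma \ref{limform}, is that $g(z):=n_{\log z}$ is regularly varying with index $\delta^*$.

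For part (i) I would apply the Characterisation Theorem (Theorem \ref{Karamata2}(ii)) to $g$, obtaining $g(y)=y^{\delta^*}\ell(y)$ with $\ell\in RV_{0}$. Setting $y=e^{t}$ yields $n_{t}=e^{t\delta^*}\ell(e^{t})$, so I define $s_{1}(t):=\ell(e^{t})$. That $s_{1}$ is subexponential follows from Proposition \ref{logprop} with $\kappa=0$: since $\log\ell(y)/\log y\to 0$, we get $\log s_{1}(t)/t\to 0$. Crucially, this makes the representation $n_{\tau}=e^{\delta^*\tau}s_{1}(\tau)$ an exact identity for all $\tau>0$, not merely asymptotic, which is what permits the change of variables below to be carried out cleanly.

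For part (ii), substituting $n_{\tau}=e^{\delta^*\tau}\ell(e^{\tau})$ and then $u=e^{\tau}$ gives
\[
\int_{0}^{t}n_{\tau}e^{-\eta\tau}\,d\tau=\int_{1}^{e^{t}}u^{\delta^*-\eta-1}\ell(u)\,du,
\]
whose integrand $f(u)=u^{\delta^*-\eta-1}\ell(u)$ lies in $RV_{\delta^*-\eta-1}$. The three regimes correspond to the index $\kappa=\delta^*-\eta-1$ lying above, at, or below $-1$. When $\eta<\delta^*$ (so $\kappa>-1$) I would invoke Karamata's Theorem (Theorem \ref{Karamata1}) to get $\int_{1}^{e^{t}}f\sim e^{t}f(e^{t})/(\delta^*-\eta)$, and unwinding $f(e^{t})=e^{(\delta^*-\eta-1)t}s_{1}(t)$ produces exactly $e^{(\delta^*-\eta)t}s_{1}(t)/(\delta^*-\eta)$. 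When $\eta=\delta^*$ (so $\kappa=-1$) the integral is $\int_{1}^{e^{t}}\ell(u)/u\,du$, which by Proposition \ref{slowprop}(i) is slowly varying in $e^{t}$; composing with the exponential and applying Proposition \ref{logprop} once more shows it is a subexponential function, which I take to be $s_{2}(t)$. When $\eta>\delta^*$ (so $\kappa<-1$) the integrand is regularly varying with index strictly below $-1$, hence integrable at infinity, so the integral converges to a finite and, by positivity of $n_{\tau}$, strictly positive constant $C$.

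The main obstacle is the last regime: I must justify integrability of $f$ at infinity, which is not covered by the stated form of Karamata's Theorem (which requires $\kappa>-1$). The clean way is to use that a slowly varying $\ell$ is dominated by every positive power, i.e. for any $\epsilon>0$ one has $\ell(u)\le u^{\epsilon}$ eventually (Potter's bounds in \cite{Bingham:1987}); choosing $\epsilon<\eta-\delta^*$ makes $f(u)\le u^{\delta^*-\eta-1+\epsilon}$ with exponent below $-1$, giving convergence. Everything else is routine once the substitution $u=e^{\tau}$ has recast the problem in the language of regular variation.
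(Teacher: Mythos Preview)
Your proposal is correct and follows essentially the same route as the paper: pass to $u=e^{\tau}$ so that $g(u)=n_{\log u}\in RV_{\delta^*}$, use the Characterisation Theorem for part (i), and then Karamata's Theorem and Proposition \ref{slowprop}(i) for the first two regimes of part (ii). The only noteworthy difference is in the regime $\delta^*<\eta$: you invoke Potter's bounds to show integrability of $u^{\delta^*-\eta-1}\ell(u)$ at infinity, whereas the paper instead observes that $n_{t}^{1/t}<e^{(\delta^*+\eta)/2}$ eventually (from Lemma \ref{limform}) and appeals to a standard Laplace-transform convergence result; both arguments are short and equivalent in spirit.
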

 We highlight that neither subexponential function depend on $\eta$.
\begin{proof}
(i) For $y=e^{t}$, $n_{\log y}=g(y)$. Now $g\in RV_{\delta^*}$ hence $g(y)=y^{\delta^*}l(y)$ with $l\in RV_{0}$ by Theorem \ref{Karamata2}(ii). Setting $s_1(\log y)=l(y)$, by Lemma \ref{limform}, $s_{1}(t)$ is subexponential.
 (ii) Let $g(y)$ be as above. With $\delta^* > \eta\geq 0$ and using the change of variables $s=\log \tau$ we have
\begin{equation}\label{intconv}
\int_{0}^{t}n_{\tau}e^{-\eta\tau}\,d\tau=\int_{1}^{y}g(s)s^{-1-\eta}\,ds \sim \frac{y^{-\eta}g(y)}{\delta^*-\eta}=\frac{e^{(\delta^*-\eta)t}s_1(t)}{\delta^*-\eta}
\end{equation}
where the asymptotic equivalence is due to Theorem \ref{Karamata1} applied to $g(y)y^{-1-\eta}\in RV_{\delta^*-\eta-1}$ and the final equality is by part (i). For $\delta^*=\eta$, by Theorem \ref{Karamata2}(ii) the integrand will be a subexponential function. Applying the same change of variables as in \eqref{intconv} we see by Proposition \ref{slowprop}(i) that the integral is a slowly varying function in $y$ and hence is subexponential in $t$, which we denote $s_{2}(t)$. Now for $\delta^*<\eta$, by Lemma \ref{limform} we may choose $t$ large enough such that 
$
n_{t}^{1/t}<e^{(\delta^*+\eta)/2}
$
which by a basic result for Laplace transforms, see, e.g., Theorem 1.11 in \cite{Schiff:1999}, ensures convergence to a finite, positive constant.
\end{proof}
As an example, which will be useful for the next proof, we apply the above lemma to $a_{t}$. With $s_{1}(t),\,s_{2}(t)$ subexponential functions
\begin{equation}
a_{t}=\int_{0}^{t}n_{\tau}\,d\tau\sim
\begin{cases}
\frac{e^{\delta^* t}s_{1}(t)}{\delta^*} &  \delta^* >0 \\
s_{2}(t) & \delta^*=0  
\end{cases}
\quad \text{as }t\rightarrow \infty.
\end{equation}

\begin{proof}[Proof of Theorem \ref{meanlim}]
 We require the first and second moments of $Z_{t}$ which may be found by differentiating \eqref{BDgf}, or see Lemma \ref{BDlemma}. Then
\begin{equation}\label{mean}
\mathbb{E}(Y_{t})=\frac{1}{a_{t}}\int_{0}^{t}n_{\tau}\mathbb{E}(Z_{t-\tau})\,d\tau=e^{\lambda t}\frac{\int_{0}^{t}n_{\tau}e^{-\lambda \tau}\,d\tau}{\int_{0}^{t}n_{\tau}\,d\tau},
\end{equation}
\begin{equation}\label{var}
\mathbb{E}(Y_{t}^2)=\frac{1}{a_{t}}\int_{0}^{t}n_{\tau}\mathbb{E}(Z_{t-\tau}^2)\,d\tau=\frac{e^{2\lambda t}}{a_{t}\lambda}\left(2\int_{0}^{t}n_{\tau}e^{-2\lambda \tau}\,d\tau-(2-\lambda)e^{-\lambda t}\int_{0}^{t}n_{\tau}e^{-\lambda \tau}\,d\tau\right)  .
\end{equation}
Throughout let $s_{t}$ be a generic subexponential function and it will be helpful to observe that the reciprocal or constant multiples of a subexponential function are subexponential. We first consider the mean. For the cases $\delta^*\neq\lambda$, applying Lemma \ref{subexp1}(ii) to \eqref{mean} with $\eta=\lambda$ for the numerator and $\eta=0$ for the denominator proves the claim. For $\delta^*=\lambda$, using Lemma \ref{subexp1}(i) then (ii), we have
\begin{equation}\label{meanasy}
\mathbb{E}(Y_{t})=\frac{e^{\delta^* t}\int_{0}^{t}e^{\delta^* \tau}s_{\tau}e^{-\delta^*\tau}}{\int_{0}^{t}e^{\delta^*\tau}s_{\tau}\,d\tau}\sim\frac{\delta^*\int_{0}^{t}s_{\tau}\,d\tau}{s_{t}}=s_{1}(t).
\end{equation}
That $s_{1}(t)$ diverges can be seen by applying the standard change of variables $t=\log(y),\,\tau=\log(s)$ coupled with Proposition \ref{slowprop}(ii). 
 Turning to the variance, with $\textrm{Var}(Y_{t})=\mathbb{E}(Y_{t}^2)-\mathbb{E}(Y_{t})^2$, when $\delta^*>2\lambda$ we apply Lemma \ref{subexp1}(ii) term by term to \eqref{var}. For $\delta^*<\lambda$ all integrals converge and so, with $C_{1},\,C_{2}$ constants,
 \begin{equation}
 \mathrm{Var}(Y_t)\sim \frac{e^{2\lambda t}}{a_{t}}\left(C_{1}-\frac{C_{2}}{a_{t}}\right)\sim C_{1}\frac{e^{2\lambda t}}{a_{t}}.
 \end{equation} 
 The last relation is due to the monotonicity of $a_{t}$ and the desired representation is obtained by applying Lemma \ref{subexp1}(ii) to $a_{t}$ and absorbing $C_{1}$ into $s_{4}(t)$. When $\lambda\leq\delta^*<2\lambda$ the same argument holds as long as we note that
\begin{equation}
e^{-\lambda t}\int_{0}^{t}n_{\tau}e^{-\lambda t}\,d\tau=e^{-\lambda t}\int_{0}^{t}e^{(\delta^*-\lambda )\tau}s_{\tau}\,d\tau\leq e^{-(2\lambda-\delta^*)t}\int_{0}^{t}s_{\tau}\,d\tau.
\end{equation}
By Proposition \ref{slowprop}(i) the rightmost integral is a subexponential function and as we may always choose $t$ sufficiently large such that $s_{t}^{1/t}<e^{2\lambda-\delta^*}$, we find
\begin{equation}
e^{-\lambda t}\int_{0}^{t}n_{\tau}e^{-\lambda t}\,d\tau \rightarrow 0.
\end{equation}
Applying Lemma \ref{subexp1} to $ a_{t}^{-1}\int_{0}^{t}n_{\tau}e^{-\lambda t}\,d\tau$ demonstrates the contribution from the mean squared is negligible. 
For $\delta^*=2\lambda$ we apply the same argument as in \eqref{meanasy} to each term and this completes the proof.
\end{proof}

In order to prove Theorem \ref{bigthm} we require the following lemma.
\begin{lemmaa}\label{todominate}
For $|s|<1,\,\beta\in[0,1)$, $u\in[0,1]$ and $\xi$ as in \eqref{BDgf}, we have 
\begin{equation}
\left|\frac{\xi}{1-\xi u}\right|\leq \left|\frac{\beta-s}{1-\max\{\beta,|s|\}}\right|.
\end{equation}
\end{lemmaa}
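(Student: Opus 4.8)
The plan is to eliminate $\xi$ at the outset by substituting its definition $\xi=\frac{\beta-s}{1-s}$ from \eqref{BDgf} and clearing the resulting compound fraction. A direct computation gives
\begin{equation}
\frac{\xi}{1-\xi u}=\frac{\frac{\beta-s}{1-s}}{1-\frac{\beta-s}{1-s}u}=\frac{\beta-s}{(1-s)-(\beta-s)u}=\frac{\beta-s}{1-u\beta-(1-u)s}.
\end{equation}
The crucial observation is that the numerator $|\beta-s|$ now matches the numerator on the right-hand side of the claimed inequality. Since $\beta\in[0,1)$ and $|s|<1$ force $\max\{\beta,|s|\}<1$, the quantity $1-\max\{\beta,|s|\}$ is a strictly positive real, so the right-hand absolute value equals $\frac{|\beta-s|}{1-\max\{\beta,|s|\}}$. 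Thus the whole statement reduces to the single denominator bound
\begin{equation}
\bigl|\,1-u\beta-(1-u)s\,\bigr|\ \ge\ 1-\max\{\beta,|s|\}.
\end{equation}

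Next I would exploit the convex-combination structure of the denominator. Writing $w:=u\beta+(1-u)s$, the weights $u$ and $1-u$ are nonnegative and sum to one, so $w$ is a convex combination of the real number $\beta$ and the complex number $s$. Using the triangle inequality together with $\beta\ge 0$ gives
\begin{equation}
|w|\le u|\beta|+(1-u)|s|=u\beta+(1-u)|s|\le u\,\max\{\beta,|s|\}+(1-u)\,\max\{\beta,|s|\}=\max\{\beta,|s|\}.
\end{equation}
The reverse triangle inequality then yields $|1-w|\ge 1-|w|\ge 1-\max\{\beta,|s|\}$, which is exactly the required denominator bound. Combining this with the matching numerators completes the argument, since $\bigl|\frac{\xi}{1-\xi u}\bigr|=\frac{|\beta-s|}{|1-w|}\le\frac{|\beta-s|}{1-\max\{\beta,|s|\}}$.

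I do not expect a genuine obstacle here: once the compound fraction is simplified, the inequality collapses to the reverse triangle inequality applied to a convex combination, and $u\in[0,1]$ with $\beta\ge 0$ is precisely what makes the weighted bound $|w|\le\max\{\beta,|s|\}$ go through. The only point requiring a moment's care is verifying that $1-\max\{\beta,|s|\}>0$, so that the right-hand side is a meaningful positive bound and the absolute value there may be dropped; this is immediate from the hypotheses $\beta\in[0,1)$ and $|s|<1$. This lemma is evidently intended to furnish a uniform-in-$u$ dominating function for the integrand arising in $\genclone(s)$, so the clean form of the bound — independent of $u$ and controlled by $\max\{\beta,|s|\}$ — is the feature that will matter downstream.
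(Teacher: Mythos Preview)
Your proof is correct and follows essentially the same approach as the paper: simplify the compound fraction via $\xi=(\beta-s)/(1-s)$, then bound the denominator by recognising $u\beta+(1-u)s$ as a convex combination and applying the (reverse) triangle inequality to get $|1-w|\ge 1-\max\{\beta,|s|\}$. You are slightly more explicit than the paper in checking that $1-\max\{\beta,|s|\}>0$, but the argument is otherwise identical.
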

\begin{proof}
By the definition of $\xi$, 
\begin{equation}
\frac{\xi}{1-\xi u}=\frac{\beta-s}{1-s-(\beta-s)u}.
\end{equation}
The triangle inequality yields
\begin{equation}
|1-s-\beta u+su|=|1-(\beta u+s(1-u))|\geq |1-|\beta u+s(1-u)||
\end{equation}
and
\begin{equation}
|\beta u+s(1-u)|\leq u\beta+(1-u)|s|\leq \max\{\beta,|s|\}.
\end{equation}
The claimed inequality now follows.
\end{proof}

\begin{proof}[Proof of Theorem \ref{bigthm}]
To avoid division by $0$ let $t>0$. Taking the generating function for $Y_{t}$ from equation \eqref{SinglecloneGF} we apply the change of variables $u=e^{-\lambda \tau}$ which gives
\begin{equation}
\genclone(s)-\beta=-\frac{1}{a_{t}}\int_{e^{-\lambda t}}^{1}n_{t+\frac{\log u}{\lambda}}\frac{\xi}{1-\xi u}\,du.
\end{equation}
Now recalling $n_{\tau}=0$ for $\tau<0$ and multiplying both sides by $\frac{a_t}{n_t}$ yields
\begin{equation}
\frac{a_{t}}{n_t}(\genclone(s)-\beta)=-\int_{0}^{1}\frac{n_{t+\frac{\log u}{\lambda}}}{n_{t}}\frac{\xi}{1-\xi u}\,du.
\end{equation}
Noting that by monotonicity
$
n_{t+\frac{\log u}{\lambda}}\big/n_{t}\leq 1,
$
which coupled with Lemma \ref{todominate} shows the integrand may be dominated. By assumption the integrand converges and therefore, using Lemma \ref{limform} and the dominated convergence theorem, we have 
\begin{equation}
\lim_{t\rightarrow\infty}\frac{a_{t}}{n_t}(\genclone(s)-\beta)=-\int_{0}^{1}u^{\delta^*/\lambda}\frac{\xi}{1-\xi u}\,du
=\frac{-1}{\xi^{\gamma^*}}\mathrm{B}_{\xi}(\gamma^*+1,0)=
\\
\frac{1}{\gamma^*}\left[1-\Fze{1,\gamma^*}{1+\gamma^*}{\xi}\right].
\end{equation}
The final equality follows from applying \eqref{hypgeom5} then \eqref{hypgeom3}.
\end{proof}

\begin{proof}[Proof of Corollary \ref{bigcor}]
The first statement is given by applying Lemma \ref{subexp1}(ii) to $a_t/n_t$. Then taking the limiting generating function in Theorem \ref{bigthm} we firstly apply \eqref{hypgeom3} then \eqref{hypgeom4} which yields generating function representation for $\gamma^*=0$. The mass function for $\gamma^*=0$ is simply a logarithmic expansion. For $\gamma^*>0$ we use the expression given in Appendix A of \cite{Keller:2015} to obtain a series expansion for the generating function in terms of $s$ and the coefficients of the expansion give the mass function.
\end{proof}

\begin{proof}[Proof of Corollary \ref{tailcor}] 
The analysis involves expanding the limiting generating function in Theorem \ref{bigthm} around it's singularity at $s=1$ and exactly mirrors that given in section 6 of \cite{Keller:2015} and so is omitted.

\end{proof}

\begin{proof}[Proof of Theorem \ref{largepopsmallmut}]
The mean and variance can be obtained by using \eqref{meanlink} with Theorem \ref{meanlim} (the second moment dominates the mean squared in all divergent cases). For the generating function, \eqref{GFlink} gives 
\begin{equation}
\gentotal(s)=\exp\left[\mu a_{t}(\genclone(s)-1)\right]=\exp\left[\theta \frac{a_{t}}{n_{t}}(\genclone(s)-\beta +\beta-1)\right]
\end{equation}
which coupled with Theorem \ref{bigthm} yields the result. The map in \eqref{GFlink} is analytic so the tail can be obtained by its expansion coupled with Corollary \ref{tailcor}.
\end{proof}

\section{Maximum likelihood estimators for distributions considered}\label{MLEapp}
Consider a data-set $\textbf{y}=(y_{1},\ldots,y_{N})$ assumed to be a realisation of the random vector $\textbf{Y}_{t}=(Y^{(1)}_{t},\ldots,Y^{(N)}_{t})$ where all $Y_{t}^{(i)}$ are $\textit{iid}$ random variables representing the metastasis sizes and let $m=\text{min}(\textbf{y})$. Then the maximum likelihood estimator (MLE) $\hat \omega $ for a one parameter probability distribution $\mathbb{P}(\textbf{Y}_{t}=\textbf{y};\omega)$ is 
\begin{gather}
\label{Mle}
\hat\omega=\argmax_{\omega}\log(\LTL(\omega))=\argmax_{\omega}\log (\mathbb{P}(\textbf{Y}_{t}=\textbf{y};\omega))=\argmax_{\omega}\log\bigg[\prod_{i=1}^{N}\mathbb{P}(Y_{t}^{(i)}=y_{i};\omega)\bigg]
\end{gather}
where $\LTL(\omega)$ is the likelihood function and the joint distribution becomes a product by independence. We derive the MLE under the assumption that the data is sampled from a distribution whose tail follows the geometric distribution, the power-law case is analogous but for the final step. 
\\ \indent Assume for at least $y\geq m$, that $\mathbb{P}_{2}(Y_t=y;p)=C_{2}p(1-p)^{y}$. Let $A=\mathbb{P}(Y_{t}<m)$ (no indices are required as this quantity is assumed independent of the tail) then
\begin{equation}
\sum_{y\geq m} \pr _{2}(Y_t=y;p)=\sum_{y\geq m}C_{2}p(1-p)^y=1-A \implies C_{2}=(1-A)(1-p)^{-m}.
\end{equation}
The log-likelihood is now given by 
\begin{equation}
\log \LTL(p)=N\log(1-A)-mN\log(1-p)+N\log(p)+\log(1-p)\sum_{i=1}^{N}y_{i}.
\end{equation}
Setting $\partial_{p}\log \LTL(p)=0$ we solve to find the MLE
\begin{equation}
\hat p=\frac{N}{\sum_{i=1}^{N}y_{i}+N(1-m)}.
\end{equation}
The power-law case is analogous. There $C_{1}=\frac{1-A}{\zeta(\omega,m)}$, where $\zeta$ is the Hurwitz zeta function. No closed form expression is found for the MLE and instead we have the approximation \cite{Clauset:2009}
\begin{equation}
\hat\omega \approx 1+N\bigg[\sum_{i=1}^{N}\log\bigg(\frac{y_{i}}{m-\frac{1}{2}}\bigg)\bigg]^{-1}.
\end{equation}
This was used for the estimates given in Section \ref{applications}.

\section{Proofs for Section \ref{alt}}\label{altproofs}

In order to prove Theorem \ref{thmcums} we require the following lemma regarding the moments of the birth-death process:
 \begin{lemmaa}
\begin{gather}
\mathbb{E}(Z_{t}^{m})=\frac{\lambda}{(1-\beta e^{-\lambda t})}\sum_{k=0}^{m}k!S(m+1,k+1)\bigg(\frac{e^{\lambda t}-1}{\lambda}\bigg)^{k},
\end{gather}
where $S(m,k)$ are Stirling numbers of the second kind. Hence, as $t\rightarrow \infty$,
\begin{gather}
\mathbb{E}(Z_{t}^m)=m!e^{m\lambda t}\lambda^{-(m-1)}+O(e^{(m-1)\lambda t}).
\end{gather}
\label{BDlemma}
\end{lemmaa}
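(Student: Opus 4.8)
The plan is to obtain the moments directly from the explicit mass function of the birth–death process, recognise the resulting series as a polylogarithm, and then invoke the identity \eqref{poly1} to produce exactly the announced Stirling-number expression.

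First I would write, using the mass function \eqref{BDpmf} (the $k=0$ term contributes nothing for $m\geq 1$),
\begin{equation}
\mathbb{E}(Z_t^m)=\sum_{k\geq 1}k^m\,\mathbb{P}(Z_t=k)=\left(1-\frac{\beta}{\mathcal{S}_t}\right)(\mathcal{S}_t-1)\sum_{k\geq 1}k^m\mathcal{S}_t^{-k}.
\end{equation}
Since $\mathcal{S}_t>1$ for finite $t$, the inner series converges and is by definition $\mathrm{Li}_{-m}(\mathcal{S}_t^{-1})$. (Equivalently one could differentiate the generating function \eqref{BDgf} via the operator $(s\partial_s)^m$ evaluated at $s=1$, but the mass-function route is cleaner.)

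Next I would apply \eqref{poly1} with $n=m$ and $z=\mathcal{S}_t^{-1}$, noting that $z/(1-z)=(\mathcal{S}_t-1)^{-1}$, to get
\begin{equation}
\mathrm{Li}_{-m}(\mathcal{S}_t^{-1})=\sum_{k=0}^{m}k!\,S(m+1,k+1)(\mathcal{S}_t-1)^{-(k+1)}.
\end{equation}
Substituting back cancels one power of $\mathcal{S}_t-1$, and it remains to simplify the prefactors using \eqref{ratio} together with $\lambda=1-\beta$. A short computation gives $1-\beta/\mathcal{S}_t=\lambda/(1-\beta e^{-\lambda t})$ and $(\mathcal{S}_t-1)^{-1}=(e^{\lambda t}-1)/\lambda$, which assembles into the exact formula.

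For the large-time expansion, since $\lambda>0$ the prefactor obeys $\lambda/(1-\beta e^{-\lambda t})=\lambda(1+O(e^{-\lambda t}))$, while the dominant contribution to the sum is the $k=m$ term, where $S(m+1,m+1)=1$ and $((e^{\lambda t}-1)/\lambda)^m\sim e^{m\lambda t}/\lambda^m$. This yields the leading term $m!\,e^{m\lambda t}\lambda^{-(m-1)}$. The lower terms of the Stirling sum ($k\leq m-1$), the correction from the prefactor, and the subleading part of $(e^{\lambda t}-1)^m$ are each of order $e^{(m-1)\lambda t}$, so they combine into the stated error. The computation is essentially routine; the only point needing mild care is confirming that every subleading contribution is genuinely $O(e^{(m-1)\lambda t})$ rather than of a larger order.
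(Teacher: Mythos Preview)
Your proof is correct and follows essentially the same route as the paper: both arrive at $\mathbb{E}(Z_t^m)=(1-\beta\mathcal{S}_t^{-1})(\mathcal{S}_t-1)\mathrm{Li}_{-m}(\mathcal{S}_t^{-1})$ and then apply identity \eqref{poly1}, with the paper reaching that series by differentiating the moment generating function $\mathcal{Z}_t(e^s)$ at $s=0$ while you compute $\sum_{k\geq 1}k^m\mathbb{P}(Z_t=k)$ directly from \eqref{BDpmf}. The simplification of the prefactors and the large-$t$ expansion are handled identically.
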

\begin{proof}[Proof of Lemma \ref{BDlemma}]
Recall the generating function for the birth-death process \eqref{BDgf} whose power series representation has coefficients given by \eqref{BDpmf}. The moment generating function is $M_{Z_{t}}(s)=\mathcal{Z}_{t}(e^{s})$ and hence
\begin{gather}
\mathcal{Z}_{t}(e^{s})-\mathcal{Z}_{t}(0)=\left(1-\beta \mathcal{S}_{t}^{-1}\right)(\mathcal{S}_{t}-1)\sum_{j\geq 1}\mathcal{S}_{t}^{-j}e^{sj}.
\end{gather}
Thus for $m\geq 1$
\begin{equation}
\mathbb{E}(Z_{t}^{m})=\partial_{s}^{m}\mathcal{Z}_{t}(e^{s})|_{s=0}=\left(1-\beta \mathcal{S}_{t}^{-1}\right)(\mathcal{S}_{t}-1)\sum_{j\geq 1}\mathcal{S}_{t}^{-j}j^{m}.
\end{equation}
Since $\sum_{j\geq 1}\mathcal{S}_{t}^{-j}j^{m}=\mathrm{Li}_{-m}(\mathcal{S}_{t}^{-1})$, we can use \eqref{poly1} and thus arrive at our first result. Note $S(m,m)=1$ and so focusing on the leading order in $t$, the summand with $k=m$ is $m!\bigg(\frac{e^{\lambda t}-1}{\lambda}\bigg)^{m}=m!e^{m\lambda t}\lambda^{-m}+O(e^{(m-1)\lambda t})$, which proves the claim.
\end{proof}

\begin{proof}[Proof of Theorem \ref{thmcums}]
In the deterministic case we have
\begin{gather}
\mathbb{E}((Y_{t}^{\mathrm{Det}})^{m})=\frac{1}{a_{t}}\int_{0}^{t}n_{\tau}e^{m\lambda(t-\tau)}\,d\tau.
 \end{gather}
The moments for stochastic mutant growth are obtained from the moment generating function $M_{Y_{t}}(s)=\genclone(e^{s})$.  The moments are therefore
  \begin{gather}
  \label{stochmoms}
 \mathbb{E}((Y_{t}^{\mathrm{Stoch}})^{m})=\partial_{s}^{m}M_{Y_{t}}(0)=\frac{1}{a_{t}}\int_{0}^{t}n_{\tau}\partial_{s}^{m}\mathcal{Z}_{t-\tau}(e^s)|_{s=0}\,d\tau.
\end{gather}
Hence using the second statement in Lemma \ref{BDlemma} we have
\begin{gather}
\frac{\mathbb{E}((Y_{t}^{\mathrm{Stoch}})^{m})}{\mathbb{E}((Y_{t}^{\mathrm{Det}})^{m})}=\frac{m!}{\lambda^{m-1}}+O(e^{-\lambda t})
\end{gather}
which is the desired result.
\end{proof}

\begin{proof}[Proof of Theorem \ref{Poislink}]
Immediately from \eqref{Singleclonepmf} we see
\begin{equation}
\pr(Y_t>xe^{\lambda t}| Y_t>\epsilon e^{\lambda t})=\frac{\pr(Y_t>xe^{\lambda t})}{\pr(Y_t>\epsilon e^{\lambda t})}=\frac{\int_{0}^{t}n_{t-\tau}\pr(Z_\tau>xe^{\lambda t})\,d\tau}{\int_{0}^{t}n_{t-\tau}\pr(Z_\tau>\epsilon e^{\lambda t})\,d\tau}.
\end{equation}
It is enough to examine the numerator. As, from \eqref{BDpmf},
\begin{equation}
\pr(Z_{t}>k)=(1-\beta \mathcal{S}_t^{-1})\mathcal{S}_t^{-k}, \quad k\geq 0 
\end{equation}
we have
\begin{equation}\label{split}
\int_{0}^{t}n_{t-\tau}\pr(Z_{\tau}>xe^{\lambda t})\,d\tau=\int_{0}^{t}n_{t-\tau}\mathcal{S}_{\tau}^{-\lfloor xe^{\lambda t}\rfloor}\,d\tau-\beta\int_{0}^{t}n_{t-\tau}\mathcal{S}_{\tau}^{-\lfloor xe^{\lambda t}\rfloor-1}\,d\tau
\end{equation}
Here $\lfloor a \rfloor$ denotes the integer part of $a$, and is necessary as $\pr(Z_{t}>k)$ is defined on the non-negative integers. Focusing on the first term from the right hand side of \eqref{split} and using the definition of $\mathcal{S}_{\tau}$ \eqref{ratio} gives
\begin{gather}\label{firterm}
\int_{0}^{t}n_{t-\tau}\exp\left(-\lfloor x e^{\lambda t} \rfloor (\log(1-\beta e^{-\lambda \tau})-\log(1-e^{-\lambda \tau}))\right)\,d\tau.
\end{gather}
Now we change variables to $s=xe^{\lambda (t-\tau)}$ and note that the resulting integrand can be dominated by $\frac{n_{\lambda^{-1}\log(s/x)}}{\lambda s}e^{\lambda(1-s)}$ which is integrable for all $n_{\tau}$ under consideration (by Laplace transform arguments  \cite{Schiff:1999}). Hence the by dominated convergence theorem we can conclude
\begin{equation}
\lim_{t\rightarrow\infty}\int_{0}^{t}n_{t-\tau}\mathcal{S}_{\tau}^{-\lfloor xe^{\lambda t}\rfloor}\,d\tau=\lambda^{-1}\int_{x}^{\infty}n_{\lambda^{-1}\log(s/x)}\frac{e^{-\lambda s}}{s}\,ds.
\end{equation}
The second integral from the right hand side of \eqref{split} can be treated analogously and yields an identical result with $\beta$ as a prefactor. Hence 
\begin{equation}
\lim_{t\rightarrow\infty}\int_{0}^{t}n_{t-\tau}\pr(Z_{\tau}>xe^{\lambda t})\,d\tau=\int_{x}^{\infty}n_{\lambda^{-1}\log(s/x)}\frac{e^{-\lambda s}}{s}\,ds=\mu^{-1}m(x,\infty),
\end{equation}
and the claimed result follows.
\end{proof}

\bibliographystyle{spbasic}
\bibliography{cancer4_bmb}

\end{document}